\date{}
\newtheorem{mydef}{Definition}
\newtheorem{theorem}{Theorem}
\newtheorem{lemma}[theorem]{Lemma}
\newtheorem{example}[theorem]{Example}
\newtheorem{corollary}[theorem]{Corollary}
\newtheorem{proposition}[theorem]{Proposition}
\newtheorem{remark}[theorem]{Remark}
\def\VR{\kern-\arraycolsep\strut\vrule &\kern-\arraycolsep}
\def\vr{\kern-\arraycolsep & \kern-\arraycolsep}
\newcommand{\cb}{\text{\rm{cb}}}
\newcommand{\ket}[1]{|#1\rangle}
\newcommand{\bra}[1]{\langle#1|}
\newcommand{\bbE}{\mathbb{E}}    
\newcommand{\tr}{\text{\rm tr}}
\newcommand{\E}{\mathbb{E}}
\newcommand{\cN}{\mathcal{N}}
\newcommand{\cH}{\mathcal{H}}
\newcommand{\fL}{\mathbf L}
\definecolor{darkred}{rgb}{0.5,0,0}
\begin{document}

\title{A Correlation Measure Based on Vector-Valued $L_p$-Norms}
\author{\small Mohammad Mahdi Mojahedian, Salman Beigi, Amin Gohari, Mohammad Hossein Yassaee, Mohammad Reza Aref\thanks{This work was partially supported by Iran National Science Foundation (INSF) under contract No. 96/53979.}}

\maketitle

\vspace{-5mm}
\begin{abstract}
In this paper, we introduce a new measure of correlation for bipartite quantum states. This measure depends on a parameter $\alpha$, and is defined in terms of vector-valued $L_p$-norms. The measure is within a constant of the exponential of $\alpha$-R\'enyi mutual information, and reduces to the trace norm (total variation distance)  for $\alpha=1$. We will prove some decoupling type theorems in terms of this measure of correlation, and present some applications in privacy amplification as well as in bounding the random coding exponents. In particular, we establish a bound on the secrecy exponent of the wiretap channel (under the total variation metric) in terms of the $\alpha$-R{\'e}nyi mutual information according to \emph{Csisz\'ar's proposal}. 

\end{abstract}


%
\tableofcontents

\section{Introduction}
\label{sec:intro}


In this paper, for any $\alpha\geq 1$ we introduce a new measure of correlation $V_\alpha(A;B)$ by
\begin{align}\label{eq:def-v-alpha-0}
V_\alpha(A; B) = \Big\|  \big(I_B\otimes {\rho_A}^{-(\alpha-1)/2\alpha}\big)\rho_{BA}\big(I_B\otimes {\rho_A}^{-(\alpha-1)/(2\alpha)}\big) - \rho_B\otimes \rho_A^{1/\alpha}  \Big\|_{(1, \alpha)},
\end{align}
where $\|\cdot\|_{(1, \alpha)}$ denotes a certain norm which for $\alpha=1$ reduces to the $1$-norm. Since R\'enyi mutual information (according to Sibson's proposal) can also be expressed in terms of the $(1, \alpha)$-norm our measure of correlation is also related R\'enyi mutual information.
 
The main motivation for introducing these measures of correlation, particularly for $1\leq \alpha\leq 2$, is their applications in decoupling theorems.
The point is that the average of $V_\alpha(A_0; B)$, when $\rho_{A_0B}$ is the outcome of a certain random CPTP map $\Phi_{A\rightarrow A_0}$ applied on the bipartite quantum state $\rho_{AB}$, can be bounded by $cV_\alpha(A; B)$ where $c<1$ is a constant. Thus our measures of correlation can be used to prove decoupling type theorems in information theory. 

Decoupling theorems have already found several applications in information theory. Most achievability results in quantum information theory are based on the phenomenon of decoupling (see~\cite{DBWR14} and references therein). Also, in classical information theory the OSRB method of~\cite{yassaee2014achievability} provides a similar decoupling-type tool for proving achievability results. The advantage of our decoupling theorem based on the measure $V_\alpha$, comparing to previous ones, is that it works for all values of $\alpha\in (1, 2]$. Given the relation between $V_\alpha$ and R\'enyi mutual information mentioned above, the parameters appearing in our decoupling theorem would be related to $\alpha$-R\'enyi mutual information, which for $\alpha=1$ reduces to Shannon's mutual information. Therefore, we can use our decoupling theorems not only for proving achievability results but also for proving interesting bounds on the \emph{random coding exponents}. We demonstrate this application via the examples of entanglement generation via a noisy quantum communication channel, and secure communication over a (classical) wiretap channel. In particular, we show a bound on the secrecy exponent of random coding over a wiretap channel in terms of R\'enyi mutual information according to \emph{Csisz\'ar's proposal}. 



Another application of our new measures of correlation is in secrecy. 
To measure the security of a communication system, one has to quantify the amount of information leaked to an eavesdropper. While the common security metric for measuring the leakage is mutual information (see \emph{e.g.}, see \cite{liang2009information}) or the total variation distance \cite{yassaee2014achievability, yagli2018exact},  there have been few recent works that motivate and define other measures of correlation to quantify leakage~\cite{Issa, Kamath, Li, Cuff,Weinberger,bellare2012semantic,dodis2005entropic, yu2017r}. Herein, we suggest the use of our metric instead of mutual information because it is a stronger metric and has a better rate-security tradeoff curve. To explain the rate-security tradeoff, consider a secure transmission protocol over a communication channel, achieving a communication rate of $R$ with certified leakage of at most $L$ according to the mutual information metric. Now, if the transmitter obtains a classified message for which leakage $L$ is no longer acceptable, it can sacrifice communication rate for improved transmission security. We show that the rate-security tradeoff with the mutual information metric is far worse than that of our metric. We will discuss this fact in more details via the problem of \emph{privacy amplification}.

The definition of our measure of correlation $V_\alpha(A; B)$ is based on the theory of \emph{vector-valued $L_p$ spaces}. These spaces are generalizations of the $L_p$ spaces and are defined via the \emph{theory of complex interpolation}. Then the proofs of our main theorems are heavily based on the interpolation theory. In particular, we use the Riesz-Thorin interpolation theorem several times, in order to establish an inequality for all $\alpha\in [1, 2]$ by interpolating between $\alpha=1$ and $\alpha=2$.

In the following section, we review some notations and introduce vector-valued $L_p$ norms. Section~\ref{sec:correlation_metric} introduces our new measure of correlation and presents some of its properties. 
Section~\ref{sec:decoupling} contains the main technical results of this paper. 
Section~\ref{sec:app} and Section~\ref{sec:exponent} contain some applications of our results in privacy amplification as well as in bounding the random coding exponents.


\section{Vector-valued $L_p$ norms}\label{sec:notation}

For a finite set $\mathcal A$ let $\ell( A)$ to be the vector space of functions $f:\mathcal A\rightarrow \mathbb C$. For any $p> 0$  and $f\in \ell( A)$ we define
$$\|f\|_p := \Big( \sum_{a\in \mathcal A} |f(a)|^p\Big)^{\frac1p}.$$
This quantity for $p\geq 1$ satisfies the triangle inequality and turns $\ell( A)$ into a normed space. The dual of $p$-norm is the $p'$-norm where $p'$ is the \emph{H\"older conjugate} of $p$ given by
\begin{align}\label{eq:holder-conj}
 \frac 1p + \frac 1{p'}=1.
\end{align}
More generally, for any $p, q, r>0$ with $1/p=1/q+1/r$ and any $f, g\in \ell( A)$ we have
$$\|fg\|_p\leq \|f\|_q\cdot \|g\|_r,$$
where $(fg)(a) =f(a)g(a)$.

Suppose that $\mathcal B$ is another set and we equip the vector space $\ell(B)$ with the $q$-norm. The question is how we can naturally define a $(p, q)$-norm on the space $\ell({AB}):=\ell({ A\times  B})=\ell( A)\otimes \ell( B)$ that is \emph{compatible} with the norm of the individual spaces $\ell( A), \ell( B)$. By compatible we mean that if $h=f\otimes g$ with $f\in \ell( A) $ and $g\in \ell( B)$ (i.e., $h(a, b) = f(a)g(b)$) then 
\begin{align}\label{eq:fg-pq-norm}
\|f\otimes g\|_{(p, q)} = \|f\|_p\cdot \|g\|_{q}.
\end{align}
To this end, any vector $h\in \ell({AB})$ can be taught of as a collection of $|\mathcal A|$ vectors $h_a\in \ell( B)$ for any $a\in \mathcal A$, where $h_a(b) = h(a, b)$. Let us denote $t(a) =\|h_a\|_q$. Then we may define 
$$\|h\|_{(p, q)} := \|t\|_p =\Big(  \sum_a \|h_a\|_q^p  \Big)^{1/p}.$$
This definition of the $(p,q)$-norm satisfies~\eqref{eq:fg-pq-norm}. Moreover, when $p=q$, this $(p,p)$-norm coincides with the usual $p$-norm.  
Finally, it is not hard to verify that the $(p, q)$-norm, for $p, q\geq 1$, is indeed a norm and satisfies the triangle inequality.

The $p$-norm can also be defined in the non-commutative case. Suppose that $\cH_A$ is a Hilbert space of finite dimension $d_A=\dim \cH_A$. Let $\fL(A)=\fL(\cH_A)$ to be the space of linear operators $M:\cH_A\rightarrow \cH_A$ acting on $\cH_A$. Again we can define
$$\|M\|_p = \Big(\tr(|M|^p)\Big)^{\frac 1p},$$
where $|M| = \sqrt{M^{\dagger}M}$, and $M^\dagger$ is the adjoint of $M$.  
For $p\geq 1$ this equips $\fL(A)$ with a norm, called the \emph{Schatten norm}, that satisfies the triangle inequality.  H\"older's inequality is also satisfied for Schatten norms~\cite{Bhatia-M}: if $p, q, r>0$ with $1/p=1/q+1/r$, then for $M, N\in \fL(A)$ we have
\begin{align}\label{eq:Holder-inequality}
\|MN\|_p\leq \|M\|_q\cdot \|N\|_r.
\end{align}

Our notation in the non-commutative case can be made compatible with the commutative case. 
By abuse of notation, an element $f_A\in \ell(A)$ can be taught of as a diagonal matrix of the form 
$$f_A = \sum_a f(a) \ket a\bra a,$$
acting on the Hilbert space $\cH_A$ with the orthonormal basis $\{\ket a:~ a\in \mathcal A\}$. Therefore, $\ell(A)$ can be taught of as a subspace of $\fL(A)$. We also have
$$\|f_A\|_p = \Big(\tr(|f_A|^p)\Big)^{\frac 1p} = \Big(\sum_a |f(a)|^p\Big)^{\frac 1p}.$$

Now the question is how we can define the $(p,q)$-norm in the non-commutative case. Let us start with the easy case of  $M_{AB} \in \ell(A)\otimes \fL(B)$. 
Then, following the above notation, $M_{AB}$ can be written as 
$$M_{AB} = \sum_{a} \ket a\bra a\otimes M_a,$$
with $M_{a} \in \fL(B)$. Similar to the fully commutative case we can define
\begin{align}\label{eq:p-q-c-q}
\|M_{AB}\|_{(p, q)} = \Big(  \sum_a \|M_{a}\|^p_q \Big)^{1/p}.
\end{align}

Now let us turn to the fully non-commutative case. In this case, the definition of the $(p, q)$-norm is not easy and is derived from \emph{interpolation theory}~\cite{Pisier}. Here, we present an equivalent definition provided in~\cite{Junge96 } (see also~\cite{DJKR}). We also focus on the case of $p\leq q$ that we need in this paper. In this case, since $p\leq q$ there exists $r\in (0, +\infty]$ such that $\frac{1}{p} = \frac1q + \frac 1r$. Then for any $M_{AB}\in \fL(AB)$ we define
\begin{align}\label{eq:def-NCVVN}
\|M_{AB}\|_{(p, q)} = \inf_{\sigma_A, \tau_A} \Big\|    \Big(\sigma_A^{-\frac{1}{2r}}\otimes I_B\Big) M_{AB} \Big(\tau_A^{-\frac{1}{2r}}\otimes I_B\Big)     \Big\|_q,
\end{align}
where the infimum is taken over all \emph{density matrices}\footnote{A density matrix is a positive semidefinite operator with trace one.} $\sigma_A, \tau_A\in \fL(A)$ and $I_B\in \fL(B)$ is the identity operator. In the following, for simplicity we sometimes suppress the identity operators in expressions of the form $\big(\sigma_A^{-1/2r}\otimes I_B\big) M_{AB}$ and write  $\sigma_A^{-1/2r}M_{AB}$. Therefore, 
\begin{align*}
\|M_{AB}\|_{(p, q)} = \inf_{\sigma_A, \tau_A} \Big\|    \sigma_A^{-\frac{1}{2r}} M_{AB} \tau_A^{-\frac{1}{2r}}     \Big\|_q.
\end{align*}
When $q\geq p\geq 1$, the $(p, q)$-norm satisfies the triangle inequality and is a norm. 
Some remarks are in line. 

\begin{remark} \emph{As in the commutative case, the order of subsystems in the above definition is important, i.e., $\|M_{AB}\|_{(p, q)}$ and $\|M_{BA}\|_{(p, q)}$ are different. }
\end{remark}

\begin{remark} \label{rem:2}\emph{
From H\"older's inequality~\eqref{eq:Holder-inequality}, one can derive that if $M_{AB}= M_A\otimes M_B$, then
$$\|M_A\otimes M_B\|_{(p, q)} =\|M_A\|_p\|M_B\|_q.$$}
\end{remark}

\begin{remark}\emph {When $M_{AB}\in \ell(A)\otimes \fL(B)$, the above definition of $(p, q)$-norm coincides with that of~\eqref{eq:p-q-c-q}. This can be shown by trying to optimize the choices of $\tau_A, \sigma_A$ in~\eqref{eq:def-NCVVN}, which can be taken to be diagonal. }
\end{remark}

\begin{remark}\label{rem:4}
 \emph{When $p=q$ the $(p, p)$-norm coincides with the usual $p$-norm~\cite{Pisier, Junge96}:
$$\| M_{AB}\|_{(p, p)}= \|M_{AB}\|_{p}.$$ }
\end{remark}

\begin{remark}\label{rem:5}
 \emph{ When $M_{AB}\geq 0$ is positive semidefinite, in~\eqref{eq:def-NCVVN} we may assume that $\sigma_A=\tau_A$, see~\cite{DJKR}. That is, when $M_{AB}$ is positive semidefinite we have
$$\|M_{AB}\|_{p, q} = \inf_{ \sigma_A} \Big\|    \sigma_A^{-\frac{1}{2r}} M_{AB} \sigma_A^{-\frac{1}{2r}}     \Big\|_q=\inf_{\sigma_A} \Big\|    \Gamma_{\sigma_A}^{-\frac1r} (M_{AB}) \Big\|_q,$$
where 
\begin{align}\label{eq:def-Gamma}
\Gamma_\sigma (X)= \sigma^{\frac12}X\sigma^{\frac12}.
\end{align}
}
\end{remark}

\medskip

We will compare our measure of correlation with R\'enyi mutual information which interestingly can also be written in terms of $(1, p)$-norms. 
For $\alpha\geq 1$ the \emph{sandwiched $\alpha$-R\'enyi relative entropy}  is defined by\footnote{All the logarithms in this paper are in base two.}
$$D_\alpha(\rho\|\sigma) = \alpha'\log \big\|\Gamma_{\sigma}^{-1/\alpha'}(\rho)\big\|_\alpha,$$
where $\alpha' = \alpha/(\alpha-1)$ is the H\"older conjugate of $\alpha$ given by~\eqref{eq:holder-conj}. The \emph{$\alpha$-R\'enyi mutual information}
 (Sibson's proposal)  for $\alpha>1$  is given by\footnote{See \cite{verdu2015alpha} for different definitions and properties of  R\'enyi mutual information.}
 $$I_\alpha(A; B) = \inf_{\sigma_B} D_\alpha(\rho_{AB}\| \rho_A\otimes \sigma_B).$$
Using the definition of $D_\alpha(\rho_{AB}\| \rho_A\otimes \sigma_B)$ and Remark~\ref{rem:5} we find that
$$I_\alpha(A; B) =\alpha' \log \Big\|\Gamma_{\rho_A}^{-1/\alpha'}(\rho_{BA}) \Big \|_{(1, \alpha)}.$$
In particular, for classical random variables $A$ and $B$ with joint distribution $p_{AB}$ we have
\begin{align}\label{eq:RMI-sibson}
I_{\alpha}(A;B)&=\alpha' \log\bigg(\sum_b \Big[\sum_a p(a)p(b|a)^{\alpha}\Big]^{1/\alpha}\bigg).
\end{align}
Finally the
 $\alpha$-R\'enyi conditional entropy is defined by
\begin{align}H_\alpha(A|B) = - \inf_{\sigma_B} D_\alpha(\rho_{AB} \| I_A\otimes \sigma_B) = -\alpha' \log \|\rho_{BA}\|_{(1, \alpha)}.\label{eqn-cond-entr-ren}\end{align}

We finish this section by stating a lemma about the monotonicity of the $(1,\alpha)$-norm.

\begin{lemma} \label{lemma1}
For any $M_{AB}$ and any density matrix $\xi_A$ the function
$\alpha\mapsto \big\|\Gamma_{\xi_A}^{-1/\alpha'}(M_{BA})\big\|_{(1, \alpha)}$ is non-decreasing on $[1, +\infty)$.
\end{lemma}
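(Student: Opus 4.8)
The plan is to reduce the statement to a monotonicity property of two-sidedly weighted (``sandwiched'') Schatten norms for a \emph{fixed} pair of density matrices, and then lift it to the $(1,\alpha)$-norm by absorbing the optimization weights on $B$ into the fixed weight $\xi_A$ on $A$. Write $g(\alpha):=\big\|\Gamma_{\xi_A}^{-1/\alpha'}(M_{BA})\big\|_{(1,\alpha)}$. First I would unfold the definitions: since the $(1,\alpha)$-norm corresponds to $r=\alpha'$ in~\eqref{eq:def-NCVVN}, and since $\sigma_B^{-1/(2\alpha')}\otimes\xi_A^{-1/(2\alpha')}=(\sigma_B\otimes\xi_A)^{-1/(2\alpha')}$, one gets
\[
g(\alpha)=\inf_{\sigma_B,\tau_B}\Big\|(\sigma_B\otimes\xi_A)^{-\frac{1}{2\alpha'}}\,M_{BA}\,(\tau_B\otimes\xi_A)^{-\frac{1}{2\alpha'}}\Big\|_\alpha ,
\]
the infimum running over density matrices $\sigma_B,\tau_B$. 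The point of this rewriting is that $\rho:=\sigma_B\otimes\xi_A$ and $\eta:=\tau_B\otimes\xi_A$ are themselves density matrices on $\cH_B\otimes\cH_A$, so $g(\alpha)$ is an infimum, over density-matrix weights, of expressions of the form $\big\|\rho^{-1/(2\alpha')}M_{BA}\,\eta^{-1/(2\alpha')}\big\|_\alpha$.

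This reduces the lemma to the following key claim: for \emph{any} fixed density matrices $\rho,\eta$ and any operator $X$, the map $\alpha\mapsto \big\|\rho^{-1/(2\alpha')}X\eta^{-1/(2\alpha')}\big\|_\alpha$ is non-decreasing on $[1,\infty)$. This is the noncommutative counterpart of the elementary fact that $L^\alpha(\mu)$-norms are non-decreasing in $\alpha$ for a probability measure $\mu$, with the trace-one normalization playing the role of ``$\mu$ is a probability measure.'' I would prove it with H\"older's inequality~\eqref{eq:Holder-inequality}. Fix $1\le\alpha_0\le\alpha_1$, set $Y=\rho^{-1/(2\alpha_1')}X\eta^{-1/(2\alpha_1')}$, and let $c=\tfrac12\big(1/\alpha_1'-1/\alpha_0'\big)$; since $1/\alpha'=1-1/\alpha$ is non-decreasing in $\alpha$, we have $c\ge 0$, so $\rho^{c},\eta^{c}$ are genuine positive powers and $\rho^{-1/(2\alpha_0')}X\eta^{-1/(2\alpha_0')}=\rho^{c}\,Y\,\eta^{c}$. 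Applying~\eqref{eq:Holder-inequality} twice with exponents $(s,\alpha_1,s)$, where $2/s=1/\alpha_0-1/\alpha_1$, yields
\[
\big\|\rho^{c}Y\eta^{c}\big\|_{\alpha_0}\ \le\ \|\rho^{c}\|_s\,\|Y\|_{\alpha_1}\,\|\eta^{c}\|_s .
\]

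The crucial bookkeeping is that, again using $1/\alpha'=1-1/\alpha$, one computes $cs=1$ \emph{exactly}; hence $\|\rho^{c}\|_s=(\tr\rho^{cs})^{1/s}=(\tr\rho)^{1/s}=1$ and likewise $\|\eta^{c}\|_s=1$, so the claim $\big\|\rho^{-1/(2\alpha_0')}X\eta^{-1/(2\alpha_0')}\big\|_{\alpha_0}\le\|Y\|_{\alpha_1}$ follows. To finish, given $\alpha_0\le\alpha_1$ I would take $\sigma_B,\tau_B$ that are $\epsilon$-optimal for the infimum defining $g(\alpha_1)$, set $\rho=\sigma_B\otimes\xi_A$, $\eta=\tau_B\otimes\xi_A$, and invoke the key claim; since $\rho^{-1/(2\alpha_0')}=\sigma_B^{-1/(2\alpha_0')}\otimes\xi_A^{-1/(2\alpha_0')}$, the resulting operator is feasible for the infimum at $\alpha_0$, giving $g(\alpha_0)\le g(\alpha_1)+\epsilon$ for all $\epsilon>0$. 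I expect the main obstacle to be precisely the exponent arithmetic in the H\"older step — verifying both $c\ge 0$ and the exact identity $cs=1$, which is what makes the density-matrix normalization bite — together with the routine technical care needed when $\rho,\eta$ fail to be full rank, handled by passing to supports or a limiting argument.
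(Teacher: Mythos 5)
Your proposal is correct and follows essentially the same route as the paper's proof: both rewrite the $(1,\alpha)$-norm with the weight $\xi_A$ absorbed into tensor-product density matrices, apply three-factor H\"older's inequality with the middle exponent $\beta=\alpha_1$ and outer exponents $s=2\gamma$ (your $2/s=1/\alpha_0-1/\alpha_1$ is exactly the paper's $1/\gamma=1/\alpha-1/\beta$), and use the trace-one normalization to make the outer factors have norm one before passing to the infimum over $\sigma_B,\tau_B$. The only cosmetic differences are that you isolate the fixed-weight monotonicity as a standalone claim and use $\epsilon$-optimal weights, whereas the paper proves the pointwise inequality for all weights and takes the infimum directly.
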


\begin{proof}
Let $\beta> \alpha\geq 1$, and let $\gamma>0$ be such that $1/\alpha=1/\beta+1/\gamma$. 
Using H\"older's inequality for arbitrary density matrices $\sigma_B, \tau_B$ we have
\begin{align*}
\Big\|\sigma_B^{-1/(2\alpha')} \Gamma_{\xi_A}^{-1/\alpha'}&(M_{BA}) \tau_B^{-1/(2\alpha')}\Big\|_\alpha \\
& = \Big\| \big(\sigma_B\otimes \xi_A\big)^{1/(2\gamma)} \sigma_B^{-1/(2\beta')} \Gamma_{\xi_A}^{-1/\beta'} (M_{BA}) \tau_B^{-1/(2\beta')}\big( \tau_B\otimes \xi_A\big)^{1/(2\gamma)}\Big\|_\alpha\\
& \leq\Big\|\big(\sigma_B\otimes \xi_A\big)^{1/(2\gamma)}\Big\|_{2\gamma}\cdot \Big \|  \sigma_B^{-1/(2\beta')} \Gamma_{\xi_A}^{-1/\beta'} (M_{BA}) \tau_B^{-1/(2\beta')} \Big\|_\beta\cdot \Big\|\big( \tau_B\otimes \xi_A\big)^{1/(2\gamma)}\Big\|_{2\gamma}\\
& =  \Big \|  \sigma_B^{-1/(2\beta')} \Gamma_{\xi_A}^{-1/\beta'} (M_{BA}) \tau_B^{-1/(2\beta')} \Big\|_\beta.
\end{align*}
Taking infimum over $\sigma_B, \tau_B$ we obtain the desired result.
\end{proof}

\subsection{Completely bounded norm}\label{app:CB-norm}

The \emph{completely bounded norm} of a super-operator $\Phi: \fL(A)\rightarrow \fL(B)$ is defined by
$$\|\Phi\|_{\cb, p\rightarrow q} := \sup_{d_C} \big\|\mathcal I_C\otimes \Phi\big\|_{(\infty, p)\rightarrow (\infty, q)} = \sup_{X_{CA}} \frac{\big\|\mathcal I_C\otimes \Phi(X_{CA})\big\|_{(\infty, q)}}{\big\|X_{CA}\big\|_{(\infty, p)}},$$
where the supremum is taken over all auxiliary Hilbert spaces $\cH_C$ with arbitrary dimension $d_C$ and $\mathcal I_C:\fL(C)\to \fL(C)$ is the identity super-operator. 
In the above definition, we may replace $\infty$ with any $1\leq t\leq \infty$, see~\cite{Pisier}. That is, for any $t\geq 1$ we have
\begin{align}\label{eq:cb-norm-t}
\|\Phi\|_{\cb, p\rightarrow q} := \sup_{d_C} \big\|\mathcal I_C\otimes \Phi\big\|_{(t, p)\rightarrow (t, q)}.
\end{align}
We say that a super-operator between spaces with certain norms is a \emph{complete contraction} if its completely bounded norm is at most $1$. 

\begin{lemma}\label{eq:SWAP-norm}
For any $M_{BCA} \in \fL(BCA)$  and $1\leq \alpha\leq \infty$ we have
$$\|M_{BCA}\|_{(1, 1, \alpha)} \geq \|M_{BCA}\|_{(1, \alpha, \alpha)}.$$
\end{lemma}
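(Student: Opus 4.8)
The plan is to collapse the two three-index norms to bipartite vector-valued norms by merging adjacent equal indices, dispatch the case in which the outer system is trivial by a one-line Hölder estimate, and then promote that estimate to the general statement by a completeness argument. First I would record both norms in bipartite form. Merging the two level-$1$ indices identifies $\|M_{BCA}\|_{(1,1,\alpha)}$ with the $(1,\alpha)$-norm of $M$ read on $(BC)(A)$, and merging the two level-$\alpha$ indices identifies $\|M_{BCA}\|_{(1,\alpha,\alpha)}$ with the $(1,\alpha)$-norm read on $(B)(CA)$; with $r=\alpha'$ in the defining formula these become
$$\|M_{BCA}\|_{(1,1,\alpha)}=\inf_{\sigma_{BC},\tau_{BC}}\Big\|\sigma_{BC}^{-1/(2\alpha')}M\,\tau_{BC}^{-1/(2\alpha')}\Big\|_\alpha,\qquad \|M_{BCA}\|_{(1,\alpha,\alpha)}=\inf_{\sigma_{B},\tau_{B}}\Big\|\sigma_{B}^{-1/(2\alpha')}M\,\tau_{B}^{-1/(2\alpha')}\Big\|_\alpha,$$
where on the left the optimizing density matrices live on the joint system $BC$ (acting as $\,\cdot\otimes I_A$) and on the right only on $B$ (acting as $\,\cdot\otimes I_{CA}$). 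The claim is therefore that enlarging the system carrying the optimizing density matrices from $B$ to $BC$ cannot decrease the norm.

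I would then settle the base case $\dim B=1$, namely $\|N_{CA}\|_{(1,\alpha)}\ge\|N_{CA}\|_\alpha$, by a single use of H\"older's inequality~\eqref{eq:Holder-inequality}: for any density matrices $\sigma_C,\tau_C$,
$$\|N\|_\alpha=\Big\|\sigma_C^{1/(2\alpha')}\big(\sigma_C^{-1/(2\alpha')}N\tau_C^{-1/(2\alpha')}\big)\tau_C^{1/(2\alpha')}\Big\|_\alpha\le\big\|\sigma_C^{1/(2\alpha')}\big\|_\infty\Big\|\sigma_C^{-1/(2\alpha')}N\tau_C^{-1/(2\alpha')}\Big\|_\alpha\big\|\tau_C^{1/(2\alpha')}\big\|_\infty\le\Big\|\sigma_C^{-1/(2\alpha')}N\tau_C^{-1/(2\alpha')}\Big\|_\alpha,$$
since $\|\sigma_C^{1/(2\alpha')}\|_\infty=\|\sigma_C\|_\infty^{1/(2\alpha')}\le1$ for a density matrix; taking the infimum over $\sigma_C,\tau_C$ gives the base case, which is exactly the statement that the identity is a contraction from the $(1,\alpha)$-structure on $\fL(CA)$ to the $\alpha$-structure.

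The remaining and main step is to show that this contraction is in fact a \emph{complete} contraction, i.e.\ that it stays contractive after adjoining the quantum auxiliary $B$ at the outer level $1$. I would obtain this from the interpolation definition of the vector-valued norms, using the same Riesz--Thorin mechanism employed elsewhere in the paper: at $\alpha=1$ both norms collapse to the trace norm and the identity is an isometry, and at a convenient second endpoint the amalgamated structure is transparent, so interpolating the identity map between the two scales yields $\|M\|_{(1,\alpha,\alpha)}\le\|M\|_{(1,1,\alpha)}$ for all intermediate $\alpha$. The $t$-independence of the completely bounded norm recorded in Section~\ref{app:CB-norm} is the fact that makes completeness with respect to the level-$1$ amalgamation (the system $B$) equivalent to completeness at any other level, allowing the verification to be carried out in whichever formulation is cleanest.

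The hard part is precisely this promotion to a complete contraction, and it is essential to realize that it \emph{cannot} be done termwise in the infima above. Matching a given left-hand $\sigma_{BC}$ by some right-hand $\sigma_B$ through H\"older, as in the base case, would require a density matrix $\sigma_B$ with $\sigma_B^{1/\alpha'}\otimes I_C\ge\sigma_{BC}^{1/\alpha'}$; this is impossible as soon as $\sigma_{BC}$ is entangled, since for the maximally entangled state it forces $\tr\big(\sigma_B^{-1/\alpha'}\big)\le\dim B$, whereas every density matrix satisfies $\tr\big(\sigma_B^{-1/\alpha'}\big)\ge(\dim B)^{1+1/\alpha'}>\dim B$ when $\alpha>1$. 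Thus the inequality genuinely encodes the interaction between the levels assigned to $B$ and to $C$ and cannot be reduced to bounding the conjugating factors, which is why the decisive step must pass through the global interpolation/operator-space structure rather than through an elementary estimate.
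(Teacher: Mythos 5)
Your opening reduction and your base case are fine: merging adjacent equal indices to read $\|M\|_{(1,1,\alpha)}$ as a $(1,\alpha)$-norm over the split $(BC)(A)$ and $\|M\|_{(1,\alpha,\alpha)}$ as a $(1,\alpha)$-norm over $(B)(CA)$ is legitimate, and your H\"older estimate correctly settles the case of trivial $B$, i.e.\ $\|N_{CA}\|_{(1,\alpha)}\geq\|N_{CA}\|_{\alpha}$. You also correctly diagnose that the entire content of the lemma is that this contraction is \emph{complete} with respect to amalgamation over $B$ at level $1$, and that no termwise matching of optimizing density matrices can work. But that is precisely the step your proposal never proves. The appeal to Riesz--Thorin cannot close it: interpolating the identity between the scales $(1,1,\alpha_\theta)$ and $(1,\alpha_\theta,\alpha_\theta)$ only transfers the claim at the two endpoints to intermediate $\alpha$; the endpoint $\alpha=1$ is an isometry, but the second endpoint (say $\alpha=\infty$) is exactly the lemma itself at that value of $\alpha$, and nothing about it is ``transparent'' --- it requires the same operator-space input as the general case. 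Your base case cannot serve as that endpoint either, because interpolation in $\alpha$ never interpolates between ``no auxiliary system $B$'' and ``auxiliary system $B$ present''. So the hard part is asserted rather than proven, and the argument is circular at the nontrivial endpoint.

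The paper closes this step by a non-interpolation route, and its ingredients are exactly what your sketch is missing. First, the swap super-operator is a complete contraction~\cite{Pisier}, giving $\|M_{BCA}\|_{(1,1,\alpha)}\geq\|M_{BAC}\|_{(1,\alpha,1)}$; this repositions the index that must be raised from $1$ to $\alpha$ onto the \emph{innermost} system. The remaining claim, $\|M_{BAC}\|_{(1,\alpha,1)}\geq\|M_{BAC}\|_{(1,\alpha,\alpha)}$, is a completely bounded norm statement for the identity on $\fL(AC)$, and here the $t$-independence~\eqref{eq:cb-norm-t} that you mention in passing is used in earnest: placing the auxiliary system at level $\alpha$ rather than level $1$ makes it \emph{adjacent} to the other level-$\alpha$ system, so the two merge into a single system, collapsing the question to the completely bounded norm of $\mathcal I_C$ between two Schatten norms; by the theorem of~\cite{DJKR} that the completely bounded norm of a completely positive map equals its ordinary norm, this is $\big\|\mathcal I_C\big\|_{\cb,\alpha\to1}=\big\|\mathcal I_C\big\|_{\alpha\to1}=1$. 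Note that in your formulation --- raising the \emph{middle} index while the inner index stays at $\alpha$ --- an auxiliary system placed at level $\alpha$ sits outside and is separated from the inner level-$\alpha$ system by $C$ at level $1$, so the merging step is blocked; the swap is not a cosmetic preliminary but the move that makes the cb-norm reduction possible. Without the swap contraction, the $t$-independence, and the DJKR equality (or an equivalent verified endpoint estimate), your proposal does not produce the complete contraction it needs.
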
 

\begin{proof}
First of all the swap super-operator is a complete contraction~\cite{Pisier}, i.e.,  
$$\big\|M_{BCA}\big\|_{(1, 1, \alpha)}\geq \big\|M_{BAC}\big\|_{(1, \alpha, 1)}.$$
Therefore, it suffices to show that 
$$\big\|M_{BAC}\big\|_{(1, \alpha, 1)}\geq \big\|M_{BCA}\big\|_{(1, \alpha, \alpha)} =\big \|M_{BAC}\big\|_{(1, \alpha, \alpha)}.$$
Equivalently we need to show that 
$$\big\|\mathcal I_{AC}\big\|_{\cb, (\alpha, \alpha)\rightarrow (\alpha, 1)}  \leq 1.  $$
Using~\eqref{eq:cb-norm-t} we have
$$\big\|\mathcal I_{AC}\big\|_{\cb, (\alpha, \alpha)\rightarrow (\alpha, 1)} = \sup_{d_{E}} \big\|\mathcal I_{EAC} \big\|_{(\alpha, \alpha, \alpha)\rightarrow (\alpha, \alpha, 1)} = \sup_{d_{D}} \big\| \mathcal I_{DC} \big\|_{(\alpha, \alpha)\rightarrow (\alpha, 1)} = \big\| \mathcal I_C\big\|_{\cb, \alpha\rightarrow 1} .$$
Next since $\mathcal I_C$ is completely positive and $\alpha\geq 1$ we have~\cite{DJKR}
$$\big\|\mathcal I_C\big\|_{\cb, \alpha\rightarrow 1}  = \big\|\mathcal I_C\big\|_{\alpha\rightarrow 1} =1.$$
We are done.

\end{proof}


\section{A new measure of correlation}\label{sec:correlation_metric} 

In this section, we define our measure of correlation and study some of its properties. 

\begin{mydef}
Let $\rho_{AB}$ be an arbitrary bipartite density matrix. For any $\alpha\geq 1$ we define\footnote{When $\alpha=1$ we have $\alpha'=+\infty$.}
\begin{align}V_\alpha(A; B) &:= \Big\|  \Gamma_{\rho_A}^{-1/\alpha'}(\rho_{BA}) - \rho_B\otimes \rho_A^{1/\alpha}  \Big\|_{(1, \alpha)},\\
W_\alpha(A|B)&:=  \Big\|\rho_{BA} - \rho_B\otimes \frac{I_A}{d_A}\Big\|_{(1,\alpha)},\end{align}
where $1/\alpha+1/\alpha'=1$, and $\rho_A=\tr_B(\rho_{AB}), \rho_B=\tr_A(\rho_{AB})$ are the marginal states on $A$ and $B$ subsystems, respectively. 
\end{mydef}

As will be seen below, $V_\alpha(A;B)$ is a measure of correlation while $W_\alpha(A|B)$ is a related quantity that may be thought of as a conditional entropy. 

By Remark~\ref{rem:4} when $\alpha=1$, $V_\alpha$ and $W_\alpha$ can be expressed in terms of the $1$-norm:
\begin{align}
V_1(A; B) &= \|\rho_{AB} - \rho_A\otimes \rho_B\|_1,\\
W_1(A|B) &=\Big\|\rho_{BA} - \rho_B\otimes \frac{I_A}{d_A}\Big\|_{1}.
\end{align}


In the classical case when $p_{AB}$ is a joint probability distribution
we have
$$V_\alpha(A;B) = \sum_{b} \Big(   \sum_a p(a) \big|p(b|a) - p(b)     \big|^{\alpha}    \Big)^{1/\alpha},$$
and
$$W_\alpha(A|B) = \sum_{b} p(b) \bigg(   \sum_a \Big|p(a|b) - \frac{1}{|\mathcal A|}     \Big|^{\alpha}    \bigg)^{1/\alpha}.$$

As an immediate property of the above definitions, both 
$V_\alpha(A; B)$ and $W_\alpha(A|B)$ are non-negative. Moreover, since they are defined in terms of a norm, we have
$V_\alpha(A; B)=0$ if and only if $\rho_{AB}=\rho_A\otimes \rho_B$, and $W_\alpha(A|B) =0$ if and only if $\rho_{AB} = \frac{I_A}{d_A}\otimes \rho_B$.  

\begin{proposition}\label{prop:non-decreasing}
For any $\rho_{AB}$
the functions 
$$\alpha\mapsto V_\alpha(A; B),$$
and
$$\alpha\mapsto d_A^{\frac {1}{\alpha'}}W_\alpha(A; B),$$
are non-decreasing. In particular, for any $\alpha\geq 1$ we have
$$V_\alpha(A; B)\geq \|\rho_{AB} - \rho_A\otimes \rho_B\|_1,  \qquad \text{ and } \qquad W_\alpha(A; B) \geq d_A^{-\frac{1}{\alpha'}} \Big\|\rho_{AB} - \frac{I_A}{d_A}\otimes \rho_B\Big\|_1. $$
\end{proposition}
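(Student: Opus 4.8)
The plan is to reduce both monotonicity claims to Lemma~\ref{lemma1} by rewriting each quantity as the $(1,\alpha)$-norm of $\Gamma_{\xi_A}^{-1/\alpha'}$ applied to a single operator that does \emph{not} depend on $\alpha$, for a suitably chosen fixed density matrix $\xi_A$. The only real issue is that, as written, both definitions contain an explicitly $\alpha$-dependent subtracted term, which at first glance prevents a direct appeal to the lemma.

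For $V_\alpha$ I would dissolve this dependence using the identity $1/\alpha - 1 = -1/\alpha'$, which gives $\rho_A^{1/\alpha} = \rho_A^{-1/(2\alpha')}\rho_A\rho_A^{-1/(2\alpha')} = \Gamma_{\rho_A}^{-1/\alpha'}(\rho_A)$ (with inverse powers understood on the support of $\rho_A$). Since $\Gamma_{\rho_A}^{-1/\alpha'}$ acts only on the $A$ factor it passes through $\rho_B$, so $\rho_B\otimes\rho_A^{1/\alpha} = \Gamma_{\rho_A}^{-1/\alpha'}(\rho_B\otimes\rho_A)$, and by linearity
$$V_\alpha(A;B) = \Big\| \Gamma_{\rho_A}^{-1/\alpha'}\big(\rho_{BA} - \rho_B\otimes\rho_A\big)\Big\|_{(1,\alpha)}.$$
Now $M_{BA} := \rho_{BA} - \rho_B\otimes\rho_A$ and $\xi_A := \rho_A$ are both independent of $\alpha$, so Lemma~\ref{lemma1} immediately yields that $\alpha\mapsto V_\alpha(A;B)$ is non-decreasing.

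For $W_\alpha$ I would instead take $\xi_A = I_A/d_A$. Then $\Gamma_{I_A/d_A}^{-1/\alpha'}$ is multiplication by the scalar $(1/d_A)^{-1/\alpha'} = d_A^{1/\alpha'}$, whence
$$d_A^{1/\alpha'}W_\alpha(A|B) = \Big\| \Gamma_{I_A/d_A}^{-1/\alpha'}\big(\rho_{BA} - \rho_B\otimes \tfrac{I_A}{d_A}\big)\Big\|_{(1,\alpha)}.$$
Again the operator $\rho_{BA} - \rho_B\otimes I_A/d_A$ and the density matrix $I_A/d_A$ are $\alpha$-independent, so Lemma~\ref{lemma1} gives that $\alpha\mapsto d_A^{1/\alpha'}W_\alpha(A|B)$ is non-decreasing; it is a pleasant feature that the normalizing factor in the statement is exactly what the maximally mixed choice of $\xi_A$ produces.

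For the ``in particular'' bounds I would evaluate at $\alpha = 1$. By Remark~\ref{rem:4} the $(1,1)$-norm is the ordinary $1$-norm, and since the trace norm is unchanged under swapping tensor factors, $V_1(A;B) = \|\rho_{AB} - \rho_A\otimes\rho_B\|_1$; monotonicity then gives $V_\alpha \geq V_1$ for all $\alpha\geq 1$. For $W_\alpha$, note $1/\alpha' = 0$ at $\alpha = 1$, so the prefactor equals $1$ there and $d_A^{1/\alpha'}W_\alpha(A|B) \geq W_1(A|B) = \big\|\rho_{AB} - \tfrac{I_A}{d_A}\otimes\rho_B\big\|_1$, which rearranges to the stated inequality. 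The single nontrivial step throughout is the reformulation that absorbs the $\alpha$-dependent subtracted term into one $\Gamma_{\xi_A}^{-1/\alpha'}$ acting on a fixed operator; once that is seen the proposition is a direct invocation of Lemma~\ref{lemma1} with no further estimation required.
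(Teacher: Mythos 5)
Your proof is correct and is essentially the paper's own argument: the paper proves the proposition by invoking Lemma~\ref{lemma1} with exactly your choices $M_{AB} = \rho_{AB} - \rho_A\otimes\rho_B$, $\xi_A=\rho_A$ and $M_{AB} = \rho_{AB} - I_A/d_A\otimes\rho_B$, $\xi_A = I_A/d_A$. The only difference is that you spell out the rewriting $\rho_B\otimes\rho_A^{1/\alpha} = \Gamma_{\rho_A}^{-1/\alpha'}(\rho_B\otimes\rho_A)$ and the scalar identity $\Gamma_{I_A/d_A}^{-1/\alpha'} = d_A^{1/\alpha'}\,\mathcal{I}$, which the paper leaves implicit.
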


\begin{proof}
For the monotonicity of $\alpha\mapsto V_\alpha(A; B)$, in Lemma~\ref{lemma1} put $M_{AB} = \rho_{AB} - \rho_A\otimes \rho_B$ and $\xi_A = \rho_A$. For the other monotonicity let $M_{AB}= \rho_{AB} - I_A/d_A\otimes \rho_B$ and $\xi_A= I_A/d_A$.
\end{proof}

We now prove the main property of $V_{\alpha}(A;B)$ and $W_\alpha(A|B)$, namely their monotonicity under local operations.

\begin{theorem}[Monotonicity under local operations] \label{thm:monotonicity}

\begin{itemize}
\item[{\rm (i)}] For any $\rho_{AB}$ and all CPTP maps $\Phi_{A\rightarrow X}$ and $\Psi_{B\rightarrow Y}$ we have 
$$V_\alpha(X; Y)\leq V_\alpha(A;B),$$ 
where $\rho_{XY} = \Phi\otimes\Psi(\rho_{AB})$.
\item[{\rm (ii)}]
For any $\rho_{AB}$ and any CPTP map $\Psi_{B\rightarrow Y}$
$$W_{\alpha}(A|Y)\leq W_\alpha(A|B),$$
where $\rho_{AY} = \mathcal I_A\otimes \Psi(\rho_{AB})$ and $\mathcal I_A$ is the identity super-operator. 
\end{itemize}
\end{theorem}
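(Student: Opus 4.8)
The plan is to reduce each statement to two one-sided problems and treat them by completely different mechanisms. Since $\Phi\otimes\Psi=(\Phi\otimes\mathcal I_B)\circ(\mathcal I_A\otimes\Psi)$, for part~(i) it suffices to prove $V_\alpha(A;Y)\le V_\alpha(A;B)$ (a channel on the $B$-side, which carries the outer index $1$) and $V_\alpha(X;B)\le V_\alpha(A;B)$ (a channel on the $A$-side, which carries the inner index $\alpha$) separately; part~(ii) will come out as an instance of the first mechanism. Throughout I will use that relabeling the tensor factors while simultaneously permuting the index tuple leaves the norm unchanged, e.g. $\|M_{YA}\|_{(1,\alpha)}=\|M_{AY}\|_{(\alpha,1)}$.

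For a channel on the $B$-side I would first note that the twist $\Gamma_{\rho_A}^{-1/\alpha'}$ acts only on $A$, hence commutes with $\Psi$, and that $\Psi(\rho_B)=\rho_Y$. A one-line computation then shows that the operator inside the norm transforms covariantly, $\Gamma_{\rho_A}^{-1/\alpha'}(\rho_{YA})-\rho_Y\otimes\rho_A^{1/\alpha}=(\mathcal I_A\otimes\Psi)\big(\Gamma_{\rho_A}^{-1/\alpha'}(\rho_{BA})-\rho_B\otimes\rho_A^{1/\alpha}\big)$, with the analogous identity $\rho_{YA}-\rho_Y\otimes I_A/d_A=(\mathcal I_A\otimes\Psi)(\rho_{BA}-\rho_B\otimes I_A/d_A)$ governing part~(ii). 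Both then reduce to the single fact that a channel applied to the outer ($1$-index) system cannot increase the $(1,\alpha)$-norm. This I would deduce from $\|\Psi\|_{\cb,1\to1}\le1$ (a CPTP map is a complete trace-norm contraction) together with the freedom to choose the spectator index in \eqref{eq:cb-norm-t}: placing the inner system as the auxiliary one with index $\alpha$ gives $\|\mathcal I_A\otimes\Psi\|_{(\alpha,1)\to(\alpha,1)}\le\|\Psi\|_{\cb,1\to1}\le1$, which is exactly the required contraction after the relabeling above.

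The substantial part is the $A$-side, where the marginal changes from $\rho_A$ to $\rho_X=\Phi(\rho_A)$ and the twist must change with it. Here I would introduce the twisted pullback super-operator $\Theta_\alpha:\fL(A)\to\fL(X)$,
$$\Theta_\alpha(Y)=\rho_X^{-1/(2\alpha')}\,\Phi\big(\rho_A^{1/(2\alpha')}\,Y\,\rho_A^{1/(2\alpha')}\big)\,\rho_X^{-1/(2\alpha')},$$
and verify the two algebraic identities $\Theta_\alpha\big(\Gamma_{\rho_A}^{-1/\alpha'}(\rho_{BA})\big)=\Gamma_{\rho_X}^{-1/\alpha'}(\rho_{BX})$ and $\Theta_\alpha(\rho_A^{1/\alpha})=\rho_X^{1/\alpha}$, both of which follow from $\rho_A^{1/\alpha'+1/\alpha}=\rho_A$ and $\Phi(\rho_A)=\rho_X$. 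These show that $(\mathcal I_B\otimes\Theta_\alpha)$ sends the deviation operator of $V_\alpha(A;B)$ to that of $V_\alpha(X;B)$, so that $V_\alpha(X;B)=\big\|(\mathcal I_B\otimes\Theta_\alpha)(N_{BA})\big\|_{(1,\alpha)}$ where $N_{BA}=\Gamma_{\rho_A}^{-1/\alpha'}(\rho_{BA})-\rho_B\otimes\rho_A^{1/\alpha}$. Using \eqref{eq:cb-norm-t} with the $B$-system as the outer $1$-index spectator, the whole claim reduces to showing that $\Theta_\alpha$ is a complete contraction, $\|\Theta_\alpha\|_{\cb,\alpha\to\alpha}\le1$.

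I expect this last step to be the main obstacle, and I would establish it by complex interpolation. Writing $\theta=1/\alpha'$, I would embed $\Theta_\alpha$ in the analytic family $\Theta_z(Y)=\rho_X^{-z/2}\Phi(\rho_A^{z/2}Y\rho_A^{z/2})\rho_X^{-z/2}$ on the strip $0\le\operatorname{Re}z\le1$ and apply Stein's interpolation theorem to $\mathcal I_C\otimes\Theta_z$ for each auxiliary system $C$. On the line $\operatorname{Re}z=0$ all factors $\rho^{\pm iy/2}$ are unitaries, so unitary invariance of the trace norm reduces the bound to $\|\mathcal I_C\otimes\Phi\|_{1\to1}\le1$; on the line $\operatorname{Re}z=1$ the outer unitaries again drop out and the residual map $\rho_X^{-1/2}\Phi(\rho_A^{1/2}\,\cdot\,\rho_A^{1/2})\rho_X^{-1/2}$ is unital and completely positive, hence an operator-norm contraction, yielding the $\infty\to\infty$ bound. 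Interpolating these endpoint bounds, which both equal $1$, along the scale $1/\alpha=1-\theta$ gives $\|\mathcal I_C\otimes\Theta_\theta\|_{(\alpha,\alpha)\to(\alpha,\alpha)}\le1$ uniformly in $C$, i.e. $\|\Theta_\alpha\|_{\cb,\alpha\to\alpha}\le1$, as needed; note this argument covers all $\alpha\ge1$ at once. The only points demanding real care are the pseudo-inverse conventions when $\rho_A,\rho_X$ are rank-deficient and the verification that $\Theta_z$ has the mild growth on the strip required by Stein's theorem, both of which are routine in finite dimension.
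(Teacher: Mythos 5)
Your $A$-side mechanism is sound and is essentially the paper's own argument: the paper factors the deviation map as $\Psi\otimes\big(\Gamma_{\Phi(\rho_A)}^{-1/\alpha'}\circ\Phi\circ\Gamma_{\rho_A}^{1/\alpha'}\big)$, and your $\Theta_\alpha$ is exactly this twisted pullback. Where the paper cites Lemma~9 of~\cite{BD} for $\big\|\Gamma_{\Phi(\rho_A)}^{-1/\alpha'}\circ\Phi\circ\Gamma_{\rho_A}^{1/\alpha'}\big\|_{\alpha\to\alpha}\le 1$ and~\cite{DJKR} for the tensor-stability, you re-derive the complete contraction by Stein interpolation; that is legitimate (it is in essence how the cited lemma is proved), and your use of \eqref{eq:cb-norm-t} there is valid because the spectator system $B$ sits in the outer ($1$-indexed) slot, which is exactly the configuration \eqref{eq:cb-norm-t} covers.

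The genuine gap is on the $B$-side, i.e.\ the claim $V_\alpha(A;Y)\le V_\alpha(A;B)$, and with it part (ii) and half of part (i). The identity you invoke ``throughout,'' $\|M_{YA}\|_{(1,\alpha)}=\|M_{AY}\|_{(\alpha,1)}$, is false: vector-valued norms depend on the nesting order (which factor is outer), not merely on which factor carries which exponent --- this is precisely what the paper warns about in its first remark after the definition of the $(p,q)$-norm. Already in the commutative case, for $h$ the indicator of the diagonal of $\{1,\dots,n\}^2$ one has
\begin{equation*}
\sum_{y}\Big(\sum_{a}|h(y,a)|^{\alpha}\Big)^{1/\alpha}=n,
\qquad\text{while}\qquad
\Big(\sum_{a}\Big(\sum_{y}|h(y,a)|\Big)^{\alpha}\Big)^{1/\alpha}=n^{1/\alpha}.
\end{equation*}
The generalized Minkowski inequality gives only $\|M_{YA}\|_{(1,\alpha)}\ge\|M_{AY}\|_{(\alpha,1)}$, and this inequality points the wrong way in your chain: at the output you would need $\|(\Psi\otimes\mathcal I_A)(N_{BA})\|_{(1,\alpha)}\le\|(\mathcal I_A\otimes\Psi)(N_{AB})\|_{(\alpha,1)}$, which is exactly the direction Minkowski does not provide. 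Note also that \eqref{eq:cb-norm-t} always places the auxiliary system in the outer slot, so it cannot be used directly to control a channel acting on the outer system. The fact you need is true: for CPTP $\Psi$ acting on the outer factor, $\|\Psi\otimes\mathcal I_A\|_{(1,\alpha)\to(1,\alpha)}=\|\Psi\|_{1\to1}=1$; but this is Lemma~5 of~\cite{DJKR}, which is precisely what the paper invokes at this point. To repair your proof, either cite that lemma, or prove it in the spirit of your $A$-side argument: write the $(1,\alpha)$-norm via its infimum formula with twists $\sigma_B^{-1/(2\alpha')}\,\cdot\,\tau_B^{-1/(2\alpha')}$ and show by the same interpolation that the two-sided twisted map $N\mapsto \Psi(\sigma_B)^{-1/(2\alpha')}\,\Psi\big(\sigma_B^{1/(2\alpha')}N\tau_B^{1/(2\alpha')}\big)\,\Psi(\tau_B)^{-1/(2\alpha')}$ is a complete $\alpha\to\alpha$ contraction. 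Some such argument must be supplied; the relabeling shortcut cannot stand.
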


\begin{proof}
For (i) 
we compute
\begin{align*}
V_\alpha(X; Y) & = \Big\|    \Gamma_{\rho_{X}}^{-1/\alpha'}(\rho_{YX})  - \rho_Y\otimes \rho_{X}^{1/\alpha}  \Big\|_{(1, \alpha)}\\
& = \Big\|  \Big( \Psi\otimes \Gamma_{\Phi(\rho_A)}^{-1/\alpha'}\circ \Phi\circ \Gamma_{\rho_A}^{1/\alpha'}   \Big)  \big(  \Gamma_{\rho_A}^{-1/\alpha'} (\rho_{BA}) - \rho_B\otimes \rho_A^{1/\alpha}   \big)      \Big\|_{(1, \alpha)}\\
&\leq \Big\| \Psi\otimes \Gamma_{\Phi(\rho_A)}^{-1/\alpha'}\circ \Phi\circ \Gamma_{\rho_A}^{1/\alpha'} \Big  \|_{(1, \alpha)\rightarrow (1, \alpha)}\cdot \Big\|   \Gamma_{\rho_A}^{-1/\alpha'} (\rho_{BA}) - \rho_B\otimes \rho_A^{1/\alpha}        \Big\|_{(1, \alpha)}\\
&= \Big\| \Psi\otimes \Gamma_{\Phi(\rho_A)}^{-1/\alpha'}\circ \Phi\circ \Gamma_{\rho_A}^{1/\alpha'} \Big  \|_{(1, \alpha)\rightarrow (1, \alpha)}\cdot V_\alpha(A; B)\\
&= \Big\| \Psi\otimes \mathcal I_A\Big  \|_{(1, \alpha)\rightarrow (1, \alpha)}\cdot \Big\| \mathcal I_B\otimes \Gamma_{\Phi(\rho_A)}^{-1/\alpha'}\circ \Phi\circ \Gamma_{\rho_A}^{1/\alpha'} \Big\|_{(1, \alpha)\rightarrow (1, \alpha)}\cdot V_\alpha(A; B),
\end{align*}
where $(1, \alpha)\rightarrow (1, \alpha)$ denotes the super-operator norm:
$$\|\mathcal T\|_{(1, \alpha)\rightarrow (1, \alpha)} := \sup_{M\neq 0}   \frac{\|\mathcal T(M)\|_{(1, \alpha)}}{\| M\|_{(1, \alpha)}}.$$
Now using equation~(3.5) and Theorem 13 of~\cite{DJKR} we have
$$\Big\| \mathcal I_B\otimes \Gamma_{\Phi(\rho_A)}^{-1/\alpha'}\circ \Phi\circ \Gamma_{\rho_A}^{1/\alpha'} \Big  \|_{(1, \alpha)\rightarrow (1, \alpha)} \leq  \Big\| \Gamma_{\Phi(\rho_A)}^{-1/\alpha'}\circ \Phi\circ \Gamma_{\rho_A}^{1/\alpha'} \Big  \|_{\alpha\rightarrow \alpha}.$$
On the other hand, using Lemma~9 of~\cite{BD} (see also~\cite{Beigi}) we have  
$$ \Big\| \Gamma_{\Phi(\rho_A)}^{-1/\alpha'}\circ \Phi\circ \Gamma_{\rho_A}^{1/\alpha'} \Big  \|_{\alpha\rightarrow \alpha}\leq 1.$$
Moreover, by Lemma 5 of~\cite{DJKR} we have
$$ \|\Psi\otimes \mathcal I_A \|_{(1, \alpha)\rightarrow (1, \alpha)} =  \|\Psi \|_{1\rightarrow 1}=1,$$
since $\Psi$ is CPTP. We conclude that, $V_\alpha(X; Y)\leq V_\alpha(A;B)$.

The proof of (ii) is similar, so we skip it. 
\end{proof}

We now state the relation between $V_\alpha, W_\alpha$ and R\'enyi information measures. 

\begin{proposition}\label{prop2}
For any bipartite density matrix $\rho_{AB}$ we have
$$2^{\frac{1}{\alpha'} I_\alpha(A; B)}-1\leq V_\alpha(A; B) \leq 2^{\frac{1}{\alpha'} I_\alpha(A; B)} +1, $$
where $\alpha'$ is the H\"older conjugate of $\alpha$.
For $W_\alpha(A|B)$ we have
$$2^{-\frac{1}{\alpha'}H_\alpha(A|B)} - d_A^{-\frac{1}{\alpha'}} \leq W_\alpha(A|B) \leq 2^{-\frac{1}{\alpha'}H_\alpha(A|B)} + d_A^{-\frac{1}{\alpha'}},$$
where $d_A=\dim \cH_A$.
\end{proposition}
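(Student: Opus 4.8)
The plan is to derive both pairs of bounds from the triangle inequality for the $(1,\alpha)$-norm, using the identities established in Section~\ref{sec:notation} that re-express the R\'enyi quantities as $(1,\alpha)$-norms. Specifically, the definition of $I_\alpha$ gives $2^{\frac{1}{\alpha'}I_\alpha(A;B)} = \big\|\Gamma_{\rho_A}^{-1/\alpha'}(\rho_{BA})\big\|_{(1,\alpha)}$, and the definition~\eqref{eqn-cond-entr-ren} of $H_\alpha$ gives $2^{-\frac{1}{\alpha'}H_\alpha(A|B)} = \|\rho_{BA}\|_{(1,\alpha)}$. In each case $V_\alpha$ (resp.\ $W_\alpha$) is the $(1,\alpha)$-norm of a difference whose first term has $(1,\alpha)$-norm equal to the target exponential, so the whole task reduces to controlling the norm of the second, product-form term.

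First, for $V_\alpha$, I would write $V_\alpha(A;B) = \big\|\Gamma_{\rho_A}^{-1/\alpha'}(\rho_{BA}) - \rho_B\otimes\rho_A^{1/\alpha}\big\|_{(1,\alpha)}$ and apply both the forward and reverse triangle inequalities. This sandwiches $V_\alpha(A;B)$ between $2^{\frac{1}{\alpha'}I_\alpha(A;B)}$ plus and minus $\big\|\rho_B\otimes\rho_A^{1/\alpha}\big\|_{(1,\alpha)}$. It then remains to show the correction term equals $1$. By Remark~\ref{rem:2} (the product rule for the $(1,\alpha)$-norm, with $B$ the first subsystem and $A$ the second), $\big\|\rho_B\otimes\rho_A^{1/\alpha}\big\|_{(1,\alpha)} = \|\rho_B\|_1\cdot\|\rho_A^{1/\alpha}\|_\alpha$. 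Since $\rho_B$ is a density matrix, $\|\rho_B\|_1=1$; and $\|\rho_A^{1/\alpha}\|_\alpha = (\tr\rho_A)^{1/\alpha}=1$. Hence the correction is exactly $1$, which yields the stated two-sided bound on $V_\alpha$.

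The argument for $W_\alpha$ is identical in structure: the triangle inequalities sandwich $W_\alpha(A|B)$ between $2^{-\frac{1}{\alpha'}H_\alpha(A|B)}$ plus and minus $\big\|\rho_B\otimes \tfrac{I_A}{d_A}\big\|_{(1,\alpha)}$, and the product rule gives $\big\|\rho_B\otimes \tfrac{I_A}{d_A}\big\|_{(1,\alpha)} = \|\rho_B\|_1\cdot \tfrac{1}{d_A}\|I_A\|_\alpha = \tfrac{1}{d_A}\, d_A^{1/\alpha} = d_A^{-1/\alpha'}$, using $\tfrac1\alpha-1=-\tfrac1{\alpha'}$. This is precisely the correction term appearing in the claim.

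There is no genuine obstacle here: the proof reduces to the triangle inequality together with the observation that the product term normalizes neatly because $\rho_B$ and $\rho_A^{1/\alpha}$ (resp.\ $I_A/d_A$) have unit $1$- and $\alpha$-norms. The only point requiring care is the subsystem ordering in the $(1,\alpha)$-norm — the $B$ factor must receive the outer $1$-norm and the $A$ factor the inner $\alpha$-norm — so that Remark~\ref{rem:2} is applied with the correct exponents.
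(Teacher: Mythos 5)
Your proof is correct and follows essentially the same route as the paper's: the forward and reverse triangle inequalities for the $(1,\alpha)$-norm, together with Remark~\ref{rem:2} to evaluate the product term as $1$ (resp.\ $d_A^{-1/\alpha'}$). The only difference is cosmetic — the paper treats the $W_\alpha$ case as "similar" and omits it, whereas you carry out the computation $\|\rho_B\otimes I_A/d_A\|_{(1,\alpha)} = d_A^{1/\alpha - 1} = d_A^{-1/\alpha'}$ explicitly.
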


\begin{proof}
By the triangle inequality we have
\begin{align*}
\big\|  \Gamma_{\rho_A}^{-1/\alpha'}(\rho_{BA})\big\|_{(1, \alpha)} - \big\|\rho_B\otimes \rho_A^{1/\alpha}  \big\|_{(1, \alpha)}\leq \big\| & \Gamma_{\rho_A}^{-1/\alpha'}(\rho_{BA}) -  \rho_B\otimes \rho_A^{1/\alpha}  \big\|_{(1, \alpha)}\\
&\leq \big\|  \Gamma_{\rho_A}^{-1/\alpha'}(\rho_{BA})\big\|_{(1, \alpha)} +\big\| \rho_B\otimes \rho_A^{1/\alpha}  \big\|_{(1, \alpha)}.
\end{align*}
Moreover, by Remark~\ref{rem:2} we have
$$ \big\|\rho_B\otimes \rho_A^{1/\alpha}  \big\|_{(1, \alpha)} =  \big\|\rho_B\big\|_1\cdot\big\| \rho_A^{1/\alpha}  \big\|_{\alpha}=1.$$
These give the first inequality. 
The proof of the second inequality is similar.
\end{proof}

\begin{theorem}\label{thm:conditional-V-1}
Let $\rho_{ABC}$ be a tripartite density matrix. Then the followings hold:
\begin{itemize}
\item[{\rm (i)}] For any $1\leq \alpha\leq 2$ we have
$$W_\alpha(A|BC)\leq 2^{\frac2\alpha -1} d_C^{\frac{1}{\alpha'}} W_\alpha(AC|B)$$
\item[{\rm (ii)}] Assume that $\rho_{AC} = \frac{1}{d_Ad_C} I_A\otimes I_C$. Then for any $1\leq \alpha\leq 2$ we have
$$V_\alpha(A; BC) \leq 2^{\frac2\alpha -1} V_\alpha(AC; B).$$
Moreover, if $C$ is classical (and $\rho_{AC} = \frac{1}{d_Ad_C} I_A\otimes I_C$) then
$$V_\alpha(A; B|C) \leq 2^{\frac2\alpha -1} V_\alpha(AC; B),$$
where we define
$$V_\alpha(A; B|C) = \sum_c p(c) V_\alpha(A; B| C=c).$$

\end{itemize}
\end{theorem}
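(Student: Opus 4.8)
The plan is to reduce the whole statement to part (i) and then prove (i) by complex interpolation between $\alpha=1$ and $\alpha=2$, exactly as advertised in the introduction. I begin with the reductions in part (ii). When $\rho_{AC}=I_A\otimes I_C/(d_Ad_C)$ the marginal $\rho_A=I_A/d_A$ is maximally mixed, so $\Gamma_{\rho_A}^{-1/\alpha'}$ acts as multiplication by the scalar $d_A^{1/\alpha'}$ and $\rho_A^{1/\alpha}=d_A^{-1/\alpha}I_A$; substituting into the definition of $V_\alpha$ and pulling out the scalar gives $V_\alpha(A;BC)=d_A^{1/\alpha'}W_\alpha(A|BC)$ and, the same way, $V_\alpha(AC;B)=(d_Ad_C)^{1/\alpha'}W_\alpha(AC|B)$. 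Cancelling $d_A^{1/\alpha'}$, the first inequality of (ii) is exactly the inequality of (i). For the ``moreover'' part, when $C$ is classical both $\rho_{BCA}$ and $\rho_{BC}\otimes I_A/d_A$ are block diagonal in $C$, so the outer $L_1$-structure of the $(1,1,\alpha)$-norm splits as $\sum_c p(c)$ times the $(1,\alpha)$-norm of the $c$-block; since $\rho_{A|C=c}=I_A/d_A$ for every $c$, the maximally-mixed computation applied blockwise yields $V_\alpha(A;B|C)=d_A^{1/\alpha'}W_\alpha(A|BC)=V_\alpha(A;BC)$, and (i) again finishes the job.

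For part (i) I would pass to grouped tripartite norms, writing $W_\alpha(A|BC)=\|X\|_{(1,1,\alpha)}$ and $W_\alpha(AC|B)=\|Y\|_{(1,\alpha,\alpha)}$, where $X=\rho_{BCA}-\rho_{BC}\otimes I_A/d_A$ and $Y=\rho_{BAC}-\rho_B\otimes I_A/d_A\otimes I_C/d_C$ (merging the two ``$1$'' indices by associativity of the vector-valued norms). The key observation is the operator identity $X=(\mathcal I-\mathcal R_A)(Y)$, where $\mathcal R_A(W)=(\tr_A W)\otimes I_A/d_A$ is the completely depolarizing channel on $A$; indeed $\tr_A Y=\rho_{BC}-\rho_B\otimes I_C/d_C$, and the two copies of $\rho_B\otimes I_A/d_A\otimes I_C/d_C$ cancel. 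Part (i) is therefore equivalent to the super-operator estimate $\|\mathcal I-\mathcal R_A\|_{(1,\alpha,\alpha)\to(1,1,\alpha)}\leq 2^{2/\alpha-1}d_C^{1/\alpha'}$.

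I would establish this bound at the two endpoints and interpolate. At $\alpha=1$ both norms are the trace norm, and since $\mathcal R_A$ is trace-norm contractive the triangle inequality gives the constant $2$ (here $d_C^{1/\alpha'}=1$). At $\alpha=2$ I claim the constant $d_C^{1/2}$: the map $\mathcal I-\mathcal R_A$ is the Hilbert--Schmidt orthogonal projection onto the operators traceless on $A$, and because it acts only on $A$ and commutes with the $B$-conjugations in the infimum defining the $(1,2,2)$-norm it is a contraction of that norm; composing with the identity map from $(1,2,2)$ to $(1,1,2)$, whose norm is $d_C^{1/2}$ — the reverse of Lemma~\ref{eq:SWAP-norm} with its expected dimensional penalty, obtained from the completely-bounded norm of the identity super-operator on $C$ via H\"older against $I_C$ — gives $d_C^{1/2}$. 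Since $\{(1,\alpha,\alpha)\}$ and $\{(1,1,\alpha)\}$ are complex interpolation scales with endpoints $\alpha=1,2$, applying the Riesz--Thorin interpolation theorem to the fixed super-operator $\mathcal I-\mathcal R_A$ at the parameter $\theta=2/\alpha'$ (so that $1/\alpha=1-\theta/2$) produces $2^{1-\theta}(d_C^{1/2})^{\theta}=2^{2/\alpha-1}d_C^{1/\alpha'}$, as required.

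The main obstacle is the $\alpha=2$ endpoint: estimating the two terms of $\mathcal I-\mathcal R_A$ separately by the triangle inequality only yields $2d_C^{1/2}$, whereas the sharp constant forces the triangle-inequality factor to disappear precisely at $\alpha=2$. This is exactly why interpolation is unavoidable, and the delicate point is to justify the $\alpha=2$ contraction — the Hilbert--Schmidt contractivity of the centering map together with a genuinely non-commutative, infimum-based version of the $(1,1,2)$-versus-$(1,2,2)$ comparison — rather than merely its commutative shadow.
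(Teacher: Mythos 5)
Your proposal is correct and follows essentially the same route as the paper: the same reduction of (ii) to (i) via $V_\alpha = d^{1/\alpha'}W_\alpha$ under maximal mixedness (and the blockwise identity $V_\alpha(A;B|C)=V_\alpha(A;BC)$ for classical $C$), the same centering super-operator $\mathcal I-\mathcal R_A$ applied to $\rho_{BCA}-\rho_B\otimes I_C/d_C\otimes I_A/d_A$, and the same Riesz--Thorin interpolation between the $\alpha=1$ endpoint (triangle inequality, constant $2$) and the $\alpha=2$ endpoint (constant $\sqrt{d_C}$). Your $\alpha=2$ argument --- Hilbert--Schmidt orthogonal-projection contractivity of $\mathcal I-\mathcal R_A$ on the $(1,2,2)$-norm, composed with the bound $\|\cdot\|_{(1,1,2)}\leq \sqrt{d_C}\,\|\cdot\|_{(1,2,2)}$ for the identity map --- is only a modular repackaging of the paper's single computation, which expands the square (the same Pythagorean/projection fact) and then restricts the infimum to states of the form $\sigma_B\otimes I_C/d_C$ (the same fact underlying your $\sqrt{d_C}$ claim).
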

\begin{proof}
The proof of (ii) is immediate once we have (i) since if $\rho_{A} = I_A/d_A$ then
$$V_\alpha(A; B) = d_A^{1/\alpha'} W_\alpha(A|B).$$
Moreover, when $C$ is classical and
$$\rho_{ABC} = \sum_c p(c) \rho_{AB|c} \otimes \ket c\bra c,$$
with $\rho_{A|c} =\tr_{B} (\rho_{AB|c}) =\rho_A$,
we have
\begin{align*}
V_\alpha(A; B|C) & = \sum_c p(c) \Big\| \Gamma_{\rho_A}^{-1/\alpha'}\big(\rho_{AB|c}\big) - \rho_{B|c}\otimes \rho_A^{1/\alpha} \Big\|_{(1, \alpha)}\\
& = \Big\| \sum_c p(c) \ket c\bra c\otimes \Gamma_{\rho_A}^{-1/\alpha'}\big(\rho_{AB|c}\big) - \sum_c p(c)\ket c\bra c\otimes \rho_{B|c}\otimes \rho_A^{1/\alpha}   \Big\|_{(1, 1, \alpha)}\\
& = \Big\|  \Gamma_{\rho_A}^{-1/\alpha'}(\rho_{CBA}) - \rho_{CB}\otimes \rho_A^{1/\alpha}  \Big\|_{(1, 1, \alpha)}\\
& = V_\alpha(A; BC).
\end{align*}
So we only need to prove (i). 

Define $\Xi: \fL(BCA)\rightarrow \fL(BCA)$ by
$$\Xi(M_{BCA}) = M_{BCA} - \tr_A(M_{BCA})\otimes I_A/d_A.$$
We claim that 
\begin{align}\label{eq:norm-Xi-2}
\big\|\Xi\big\|_{(1, \alpha, \alpha)\rightarrow (1, 1, \alpha)}\leq 2^{\frac{2}{\alpha}-1} d_C^{\frac{1}{\alpha'}}.
\end{align}
Since vector valued $L_p$-spaces form an \emph{interpolation family}~\cite{Pisier}, by the Riesz-Thorin theorem (see Appendix~\ref{app:interpolation}) it suffices to prove this for $\alpha=1$ and $\alpha=2$.
For $\alpha=1$ by the triangle inequlality we have
\begin{align*}
\big\|M_{BCA} - \tr_A(M_{BCA})\otimes I_A/d_A\big\|_{1}
& \leq \big\|M_{BCA}\big\|_1 + \big\| \tr_A(M_{BCA})\otimes I_A/d_A   \big\|_1
\\
& = \big\|M_{BCA}\big\|_1 + \big\| \tr_A(M_{BCA})\big\|_1\cdot \big\| I_A/d_A\big\|_1\\
& = \big\|M_{BCA}\big\|_1 + \big\| \tr_A(M_{BCA})\big\|_1\\
& \leq 2 \big\|M_{BCA}\big\|_1,
\end{align*}
where the second inequality comes from the fact that $\|\tr_A\|_{1\rightarrow 1}\leq 1$ that is easy to verify.
We now prove the inequality for $\alpha=2$. We compute
\begin{align*}
\big\|  M_{BCA} - &\tr_A(M_{BCA})\otimes I_A/d_A  \big\|_{(1, 1, 2)}^2\\
&=\inf_{\tau_{BC}, \sigma_{BC}}\Big\| \tau_{BC}^{-1/4} M_{BCA} \sigma_{BC}^{-1/4}  - \tau_{BC}^{-1/4} \tr_A\big( M_{BCA}\sigma_{BC} \big)\sigma_{BC}^{-1/4} \otimes I_{A}/d_A\Big\|_2^2 \\ 
&=\inf_{\tau_{BC}, \sigma_{BC}} \big\|  \tau_{BC}^{-1/4} M_{BCA} \sigma_{BC}^{-1/4} \big\|_2^2 - \frac{1}{d_A}
\big\|  \tau_{BC}^{-1/4} \tr_A\big( M_{BCA}\sigma_{BC} \big)\sigma_{BC}^{-1/4}\big\|_2^2\\
& \leq \inf_{\tau_{BC}, \sigma_{BC}}\big\|  \tau_{BC}^{-1/4} M_{BCA} \sigma_{BC}^{-1/4} \big\|_2^2\\
& \leq \inf_{\tau_{B}, \sigma_{B}}\big\|  (\tau_{B}\otimes I_C/d_C)^{-1/4} M_{BCA} (\sigma_{B}\otimes I_C/d_C)^{-1/4} \big\|_2^2\\
& = \inf_{\tau_{B}, \sigma_{B}}  d_C \big\|  \tau_{B}^{-1/4} M_{BCA} \sigma_{B}^{-1/4} \big\|_2^2\\
& = d_C\|M_{BCA}\|_{(1, 2, 2)}^2.
\end{align*}
Then~\eqref{eq:norm-Xi-2} holds for all $1\leq \alpha\leq 2$ and for any $M_{BCA}$ we have
$$\big\|  M_{BCA} - \tr_A(M_{BCA})\otimes I_A/d_A  \big\|_{(1, 1, \alpha)}\leq 2^{\frac{2}{\alpha}-1} d_C^{\frac{1}{\alpha'}}\|M_{BCA}\|_{(1, \alpha, \alpha)}.$$
Letting 
$$M_{BCA} = \rho_{BCA} - \rho_B\otimes \frac{I_C}{d_C}\otimes\frac{I_A}{d_A},$$
in the above inequality we obtain the desired result. 

\end{proof}

The next theorem gives a ``weak converse" of the above inequalities.

\begin{theorem}\label{thm:conditional-V-2}
For every tripartite density matrix $\rho_{ABC}$ and $\alpha\geq 1$ the followings hold:
\begin{itemize}
\item[{\rm (i)}] $W_\alpha(AC|B)\leq W_\alpha(A|BC) +d_A^{-1/\alpha'} W_\alpha(C|B)$. 
\item[{\rm (ii)}] If $\rho_{AC} = \frac{1}{d_Ad_C}I_A\otimes I_C$ then
$$V_\alpha(AC; B) \leq d_C^{-1/\alpha'}V_\alpha(A; BC) +  V_\alpha(C; B).$$
Moreover if $C$ is classical (and $\rho_{AC} = \frac{1}{d_Ad_C}I_A\otimes I_C$) then
$$V_\alpha(AC; B) \leq d^{-1/\alpha'}_CV_\alpha(A; B|C) + V_\alpha(C; B).$$
\end{itemize}
\end{theorem}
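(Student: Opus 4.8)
The plan is to establish (i) directly and then obtain (ii) by the same change of variables used in Theorem~\ref{thm:conditional-V-1}. Recall that when the relevant marginal is maximally mixed one has the identity $V_\alpha(A;B) = d_A^{1/\alpha'} W_\alpha(A|B)$ (verified in the proof of Theorem~\ref{thm:conditional-V-1}). Under the hypothesis $\rho_{AC} = I_A\otimes I_C/(d_Ad_C)$ each of $A$, $C$, and the composite $AC$ has a maximally mixed marginal, so all three quantities in (ii) are dimension-scaled copies of the corresponding $W_\alpha$ quantities; thus (ii) follows by multiplying (i) through by $(d_Ad_C)^{1/\alpha'}$ and substituting $V_\alpha(A;B)=d_A^{1/\alpha'}W_\alpha(A|B)$ on each term, the dimensional prefactors in the statement being exactly the bookkeeping of these substitutions. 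The classical ``moreover'' statement then follows at once, since the equality $V_\alpha(A;B|C) = V_\alpha(A;BC)$ was already recorded in the proof of Theorem~\ref{thm:conditional-V-1}. Hence the entire content sits in part (i).

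For (i) I would write the operator defining $W_\alpha(AC|B)$ as a telescoping sum
\begin{align*}
\rho_{BAC} - \rho_B\otimes\frac{I_A\otimes I_C}{d_Ad_C} = \Big(\rho_{BCA} - \rho_{BC}\otimes\frac{I_A}{d_A}\Big) + \frac{I_A}{d_A}\otimes\Big(\rho_{BC} - \rho_B\otimes\frac{I_C}{d_C}\Big),
\end{align*}
and apply the triangle inequality for the $(1,\alpha)$-norm with the $1$-index on $B$ and the $\alpha$-index on the composite $AC$. The first summand is precisely the operator $N := \rho_{BCA} - \rho_{BC}\otimes I_A/d_A$ whose $(1,1,\alpha)$-norm in the order $B,C,A$ equals $W_\alpha(A|BC)$, because two adjacent indices carrying the same exponent collapse into the combined system $BC$. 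In the $W_\alpha(AC|B)$ structure, however, this summand carries the norm $\|N\|_{(1,\alpha,\alpha)}$, and Lemma~\ref{eq:SWAP-norm} furnishes exactly $\|N\|_{(1,\alpha,\alpha)}\leq\|N\|_{(1,1,\alpha)} = W_\alpha(A|BC)$. For the second summand I would invoke multiplicativity of the $(1,\alpha)$-norm under tensor products (Remark~\ref{rem:2}): using the infimum definition~\eqref{eq:def-NCVVN}, the minimizing operators $\sigma_B,\tau_B$ act only on the $B$-block and commute past $I_A/d_A$, so
\begin{align*}
\Big\|\tfrac{I_A}{d_A}\otimes\big(\rho_{BC} - \rho_B\otimes\tfrac{I_C}{d_C}\big)\Big\|_{(1,\alpha)} = \big\|\tfrac{I_A}{d_A}\big\|_\alpha\cdot\big\|\rho_{BC} - \rho_B\otimes\tfrac{I_C}{d_C}\big\|_{(1,\alpha)} = d_A^{-1/\alpha'}\,W_\alpha(C|B),
\end{align*}
since $\|I_A/d_A\|_\alpha = d_A^{-1/\alpha'}$. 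Adding the two bounds yields (i).

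The main obstacle is the first summand: one must convert the exponent on the $C$ subsystem from the value $\alpha$ demanded by the $W_\alpha(AC|B)$ structure to the value $1$ appearing in $W_\alpha(A|BC)$, and this is exactly the completely-bounded ``swap/collapse'' estimate packaged in Lemma~\ref{eq:SWAP-norm}; without it the two $W_\alpha$ quantities live in genuinely incompatible normed spaces and the triangle inequality cannot close. The only other delicate point is that the factorization for the second summand is for the noncommutative $(1,\alpha)$-norm rather than a plain Schatten norm, so I would justify it through the infimum definition~\eqref{eq:def-NCVVN} (noting the optimizing $\sigma_B,\tau_B$ touch only $B$) rather than by a direct appeal to H\"older's inequality. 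Everything else is the routine scaling that transports (i) to (ii).
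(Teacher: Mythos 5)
Your proof of part (i) is correct and coincides with the paper's proof step for step: the same two-term decomposition of $\rho_{BCA}-\rho_B\otimes I_C/d_C\otimes I_A/d_A$, the same factorization $\|I_A/d_A\|_\alpha\cdot\|\rho_{BC}-\rho_B\otimes I_C/d_C\|_{(1,\alpha)}=d_A^{-1/\alpha'}W_\alpha(C|B)$ for the second summand, and the same appeal to Lemma~\ref{eq:SWAP-norm} to relax the $(1,\alpha,\alpha)$-norm of the first summand to the $(1,1,\alpha)$-norm $W_\alpha(A|BC)$. The classical ``moreover'' reduction via $V_\alpha(A;B|C)=V_\alpha(A;BC)$ is also exactly what the paper records.

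The gap is in part (ii), where you assert that the dimensional prefactors in the statement are ``exactly the bookkeeping'' of the substitutions $V_\alpha=d^{1/\alpha'}W_\alpha$. They are not. Carrying out the bookkeeping, i.e., multiplying (i) by $(d_Ad_C)^{1/\alpha'}$ and using $V_\alpha(AC;B)=(d_Ad_C)^{1/\alpha'}W_\alpha(AC|B)$, $V_\alpha(A;BC)=d_A^{1/\alpha'}W_\alpha(A|BC)$, and $V_\alpha(C;B)=d_C^{1/\alpha'}W_\alpha(C|B)$, yields
\begin{align*}
V_\alpha(AC;B)\;\leq\; d_C^{+1/\alpha'}\,V_\alpha(A;BC)+V_\alpha(C;B),
\end{align*}
with exponent $+1/\alpha'$ on $d_C$, which is strictly weaker than the stated $d_C^{-1/\alpha'}$ whenever $d_C>1$ and $\alpha>1$. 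This is not a fixable slip in your write-up: the inequality as printed in the theorem is in fact false. Take $A$ and $C$ independent uniform bits, $B=(A,C)$ a perfect copy of the pair, and $\alpha=2$. Then $V_2(AC;B)=\sqrt{3}\approx 1.732$ while $V_2(A;BC)=V_2(C;B)=1$, so the stated bound would require $\sqrt{3}\leq 2^{-1/2}+1\approx 1.707$, which fails; the $+$ version, $\sqrt{3}\leq 2^{1/2}+1$, holds. The paper makes the same unexamined move (its proof consists of ``we only need to prove (i)'' followed by the proof of (i)), so you have faithfully reproduced its argument and inherited what is evidently a sign typo in the exponent. The honest conclusion of your (correct) method is the displayed inequality above, and the same correction propagates to the classical ``moreover'' statement through $V_\alpha(A;B|C)=V_\alpha(A;BC)$. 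A careful execution of the one-line bookkeeping, which your proposal skips, would have exposed this mismatch.
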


\begin{proof}
Again we only need to prove (i). To this end we use the triangle inequality as follows:
\begin{align*}
W_\alpha(AC|B) & = \Big\| \rho_{BCA} - \rho_B\otimes \frac{I_C}{d_C}\otimes \frac{I_A}{d_A}\Big\|_{(1, \alpha, \alpha)}\\
& \leq \Big\| \rho_{BCA} - \rho_{BC}\otimes \frac{I_A}{d_A}\Big\|_{(1, \alpha, \alpha)} + \Big\| \rho_{BC}\otimes \frac{I_A}{d_A} - \rho_B\otimes \frac{I_C}{d_C}\otimes \frac{I_A}{d_A}\Big\|_{(1, \alpha, \alpha)}\\
& = \Big\| \rho_{BCA} - \rho_{BC}\otimes \frac{I_A}{d_A}\Big\|_{(1, \alpha, \alpha)} + \big\|  \frac{I_A}{d_A}\big\|_\alpha\cdot\Big\| \rho_{BC} - \rho_B\otimes \frac{I_C}{d_C}\Big\|_{(1, \alpha)}\\
& \leq \Big\| \rho_{BCA} - \rho_{BC}\otimes \frac{I_A}{d_A}\Big\|_{(1, 1, \alpha)} + d_A^{-1/\alpha'}\cdot\Big\| \rho_{BC} - \rho_B\otimes \frac{I_C}{d_C}\Big\|_{(1, \alpha)}\\
& = W_\alpha(A|BC) + d_A^{-1/\alpha'} W_\alpha(C|B),
\end{align*}
where the last inequality follows from Lemma~\ref{eq:SWAP-norm}.
\end{proof}

\subsection{Special case of $\alpha=2$}
The case of $\alpha=2$ is of particular interest for us since computing
$V_{\alpha}(A; B)$ and $W_\alpha(A|B)$ are easier in this case. So we focus on this special case here, and find equivalent expressions for $V_2, W_2$.

\begin{lemma}\label{lem:V-2-expression}
We have
$$V_2(A;B) = \inf_{\tau_B, \sigma_B} \bigg(   \tr\Big[ \big(\rho_A^{-1/2}\otimes \tau_B^{-1/2}\big) \rho_{AB} \big(\rho_A^{-1/2}\otimes \sigma_B^{-1/2}\big) \rho_{AB}\Big]- \tr\Big[  \tau_B^{-1/2}\rho_B\sigma_B^{-1/2}\rho_B    \Big] 
       \bigg)^{1/2}$$
and
\begin{align}
W_2(A|B) 
& = \inf_{\tau_B, \sigma_B}  \bigg(\tr\big[ \tau_B^{-1/2}\rho_{AB} \sigma_B^{-1/2} \rho_{{A}B}\big] - \frac{1}{d_{A}}\tr\big[\tau_B^{-1/2} \rho_B\sigma_B^{-1/2} \rho_B  \big]\bigg)^{1/2}.\label{eq:W-2-0}
\end{align}
\end{lemma}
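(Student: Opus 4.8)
The plan is to unpack the definition~\eqref{eq:def-NCVVN} of the $(1,2)$-norm at $\alpha=2$ and evaluate the resulting Schatten-$2$ norm explicitly as a trace. At $\alpha=2$ we have $\alpha'=2$, so the parameter $r$ determined by $1/1=1/2+1/r$ is $r=2$ and $1/(2r)=1/4$. Since the $(1,\alpha)$-norm in $V_\alpha$ is applied to operators whose first subsystem is $B$, the infimum runs over density matrices $\sigma_B,\tau_B$ on $B$, giving
$$V_2(A;B) = \inf_{\sigma_B,\tau_B}\big\|\sigma_B^{-1/4}\,M\,\tau_B^{-1/4}\big\|_2, \qquad M := \Gamma_{\rho_A}^{-1/2}(\rho_{BA}) - \rho_B\otimes\rho_A^{1/2}.$$
The operator $M$ is Hermitian, so $\big\|\sigma_B^{-1/4}M\tau_B^{-1/4}\big\|_2^2 = \tr\!\big(\tau_B^{-1/2}M\sigma_B^{-1/2}M\big)$, and it remains to compute this trace.

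I would write $M=P-Q$ with $P=\rho_A^{-1/4}\rho_{BA}\rho_A^{-1/4}$ and $Q=\rho_B\otimes\rho_A^{1/2}$ (the $\rho_A$-powers acting on $A$), and expand
$$\tr\!\big(\tau_B^{-1/2}M\sigma_B^{-1/2}M\big) = \tr(\tau_B^{-1/2}P\sigma_B^{-1/2}P) - \tr(\tau_B^{-1/2}P\sigma_B^{-1/2}Q) - \tr(\tau_B^{-1/2}Q\sigma_B^{-1/2}P) + \tr(\tau_B^{-1/2}Q\sigma_B^{-1/2}Q).$$
In the $PP$-term, using that $\tau_B^{-1/2},\sigma_B^{-1/2}$ commute with the $\rho_A$-factors (they act on disjoint subsystems) and cyclicity of the trace, the inner $\rho_A^{-1/4}$ factors merge with the outer ones into $\rho_A^{-1/2}$, producing exactly $\tr\!\big[(\rho_A^{-1/2}\otimes\tau_B^{-1/2})\rho_{AB}(\rho_A^{-1/2}\otimes\sigma_B^{-1/2})\rho_{AB}\big]$, the first term in the claimed formula. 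The $QQ$-term factorizes across the tensor product: its $A$-part is $\tr_A(\rho_A^{1/2}\rho_A^{1/2})=1$ and its $B$-part is $\tr_B(\tau_B^{-1/2}\rho_B\sigma_B^{-1/2}\rho_B)$, which is the subtracted term.

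The crux is the two cross terms, and the claim is that each equals the $QQ$-term, so that $-PQ-QP+QQ$ collapses to $-QQ$. For the $PQ$-term one commutes the $\rho_A^{1/2}$ factor of $Q$ past the $B$-operators and uses cyclicity of the partial trace over $A$ to get $\tr_A\!\big(\rho_A^{1/4}\rho_{BA}\rho_A^{-1/4}\big)=\tr_A(\rho_{BA})=\rho_B$; the remaining $B$-trace is then $\tr_B(\tau_B^{-1/2}\rho_B\sigma_B^{-1/2}\rho_B)$, i.e.\ the $QQ$-term. The $QP$-term is identical (or follows by Hermiticity). Combining the four contributions gives $\big\|\sigma_B^{-1/4}M\tau_B^{-1/4}\big\|_2^2 = \tr\!\big[(\rho_A^{-1/2}\otimes\tau_B^{-1/2})\rho_{AB}(\rho_A^{-1/2}\otimes\sigma_B^{-1/2})\rho_{AB}\big] - \tr\!\big[\tau_B^{-1/2}\rho_B\sigma_B^{-1/2}\rho_B\big]$, and taking the infimum over $\sigma_B,\tau_B$ and a square root yields the formula for $V_2$. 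The main obstacle is precisely this bookkeeping of which factors commute together with the correct application of cyclicity of the partial trace to collapse the cross terms; once the identity $\tr_A(\rho_A^{1/4}\rho_{BA}\rho_A^{-1/4})=\rho_B$ is in hand, the rest is routine.

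For $W_2(A|B)$ the identical scheme applies with $P=\rho_{BA}$ and $Q=\rho_B\otimes I_A/d_A$: the $QQ$-term now picks up the factor $\tr_A\!\big((I_A/d_A)^2\big)=1/d_A$, while the two cross terms again each reduce to the $QQ$-term via $\tr_A(\rho_{BA})=\rho_B$, so that the expansion collapses to $\tr[\tau_B^{-1/2}\rho_{AB}\sigma_B^{-1/2}\rho_{AB}]-\tfrac{1}{d_A}\tr[\tau_B^{-1/2}\rho_B\sigma_B^{-1/2}\rho_B]$, which is~\eqref{eq:W-2-0}.
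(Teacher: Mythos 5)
Your proof is correct and follows essentially the same route as the paper: unpack the $(1,2)$-norm as an infimum over $\sigma_B,\tau_B$ of a Schatten-$2$ norm, expand the square of the difference into four trace terms, and collapse the two cross terms to the $QQ$-term via cyclicity and $\tr_A(\rho_{BA})=\rho_B$, leaving $PP-QQ$. The bookkeeping of the $\rho_A$-powers and the treatment of $W_2$ match the paper's computation as well.
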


\begin{proof}
We compute
\begin{align*}
V_2^2(A;B) & = \Big\|  \Gamma_{\rho_A}^{-1/2} (\rho_{BA}) - \rho_B\otimes \rho_A^{1/2}     \Big\|_{(1,2)}^2\\
& = \inf_{\tau_B, \sigma_B} \Big\|  \Gamma_{\rho_A}^{-1/2} (\tau_B^{-1/4}\rho_{BA} \sigma_B^{-1/4}) - \tau_B^{-1/4}\rho_B\sigma_B^{-1/4}\otimes \rho_A^{1/2}     \Big\|_{2}^2\\
& = \inf_{\tau_B, \sigma_B} \left(\tr\Big[ \big(\rho_A^{-1/2}\otimes \tau_B^{-1/2}\big) \rho_{AB} \big(\rho_A^{-1/2}\otimes \sigma_B^{-1/2}\big) \rho_{AB}\Big]+ \tr\Big[  \tau_B^{-1/2}\rho_B\sigma_B^{-1/2}\rho_B    \Big]\right. \\
& \qquad\qquad \left.-\tr\Big[    \tau_B^{-1/2} \rho_{AB}  \sigma_B^{-1/2} \big(I_A\otimes \rho_{B}\big)    \Big] -\tr\Big[    \tau_B^{-1/2} \big(I_A\otimes \rho_{B}\big)   \sigma_B^{-1/2} \rho_{AB}   \Big]\right)\\
& = \inf_{\tau_B, \sigma_B} \left(\tr\Big[ \big(\rho_A^{-1/2}\otimes \tau_B^{-1/2}\big) \rho_{AB} \big(\rho_A^{-1/2}\otimes \sigma_B^{-1/2}\big) \rho_{AB}\Big]+ \tr\Big[  \tau_B^{-1/2}\rho_B\sigma_B^{-1/2}\rho_B    \Big]\right. \\
& \qquad\qquad \left.-2\tr\Big[    \tau_B^{-1/2} \rho_{B}  \sigma_B^{-1/2}  \rho_{B}    \Big] \right)\\
& = \inf_{\tau_B, \sigma_B} \left(\tr\Big[ \big(\rho_A^{-1/2}\otimes \tau_B^{-1/2}\big) \rho_{AB} \big(\rho_A^{-1/2}\otimes \sigma_B^{-1/2}\big) \rho_{AB}\Big]- \tr\Big[  \tau_B^{-1/2}\rho_B\sigma_B^{-1/2}\rho_B    \Big]\right).
\end{align*}

The proof of the second expression is similar. 

\end{proof}

It is also instructive to write down $V_2(A; B)$ for classical distributions $p_{AB}$:
$$V_2(A;B) = \sum_{b} \Big(   \sum_a p(a) \big(p(b|a) - p(b)     \big)^{2}    \Big)^{1/2}.$$
Given any realization $b\in \mathcal B$, we can view $p_{b|A}$ as a random variable (a function of the random variable $A$ with $p_{b|A}(a) = p(b|a)$). We have $\mathbb{E}_A\left[p_{b|A}\right]=\sum_{a}p(a)p(b|a)=p(b)$. Thus, 
\begin{align*}
V_2(A;B)=\sum_{b}\sqrt{\text{Var}_A\left[p_{b|A}\right]}.
\end{align*}
Another characterization of $V_2(A; B)$ can be found using the Bayes' rule:
\begin{align}
V_2(A;B)&=\sum_{b}\sqrt{\sum_{a}p(a)\big(p(b|a)-p(b)\big)^2}\nonumber\\
&=\sum_{b}p(b)\sqrt{\sum_{a}p(a)\left(\frac{p(b|a)}{p(b)}-1\right)^2}\nonumber\\
&=\sum_{b}p(b)\sqrt{\sum_{a}p(a)\left(\frac{p(a|b)}{p(a)}-1\right)^2}\nonumber\\
&=\sum_{b}p(b)\sqrt{\sum_{a}\left(\frac{p^2(a|b)}{p(a)}-2p(a|b)+p(a)\right)}\nonumber\\
&=\sum_{b}p(b)\sqrt{\sum_{a}\frac{p^2(a|b)}{p(a)}-1}.\nonumber
\end{align}
Thus,
\begin{align}
V_2(A;B)&=\sum_{b}p(b)\sqrt{\sum_{a}\frac{p^2(a|b)}{p(a)}-1}=\mathbb{E}_B\Big[\sqrt{\chi^2\left(p_{A|B}\parallel p_A\right)}\Big]\label{eqn:chi-sq1},
\end{align}
where $\chi^2(\cdot\|\cdot)$ is the $\chi$-square distance. The above formula has some interesting consequences:
\begin{itemize}
\item[(i)] Note that
\begin{align}
2^{\frac12 I_2(A;B)}&=\sum_{b}\sqrt{\sum_{a} p(a) p^2(b|a)}\nonumber\\
&=\sum_{b}\sqrt{\sum_{a}\frac{p^2(a,b)}{p(a)}}\nonumber\\
&=\sum_{b}p(b)\sqrt{\sum_{a}\frac{p^2(a|b)}{p(a)}}.\label{eqn:N5}
\end{align}
Comparing \eqref{eqn:chi-sq1} and \eqref{eqn:N5}, and utilizing the inequality 
$\sqrt{x}\geq \sqrt{x-1}\geq\sqrt{x}-1$ for $x\geq 1$, we obtain that  
\begin{align}
2^{\frac12 I_2(A;B)}\geq V_2(A;B)\geq 2^{\frac12 I_2(A;B)}-1.\label{eq:v_RI2_relation-new}
\end{align}
The above inequality is stronger than the one given in Proposition \ref{prop2} for $\alpha=2$ in the classical case.

\item[(ii)] Using the above expressions, proving the property of the monotonicity under local operations (Theorem~\ref{thm:monotonicity}) would be easier. 
For example, since 
the $\chi$-square distance retains monotonicity under local operations (the data processing inequality)~\cite{sason2016f},  we conclude that $V_2(X; B) \leq V_2(A; B)$.

\item[(iii)] When the marginal distribution $p_A$ is uniform over $\mathcal{A}$, we have
\begin{align*}
V_2(A; B)&=\mathbb{E}_B\Big[\sqrt{\chi^2\left(p_{A|B}\parallel p_A\right)}\,\Big]\\
&\leq\mathbb{E}_B\Bigg[\frac{\|p_{A|B}-p_A\|_{1}}{\sqrt{\frac{2}{|\mathcal{A}|}}}\Bigg]\\
&=\lVert p_{AB}-p_Ap_B\rVert_{1}\cdot\sqrt{|\mathcal{A}|/2},
\end{align*}
where for the inequality we use equation~(25) of \cite{sason2015upper}. This can be taught as a converse of Proposition~\ref{prop:non-decreasing}.  
\end{itemize}

Finally, another characterization of $V_2(A;B)$ for classical systems is given in Appendix~\ref{app:MI} where it is shown in Theorem~\ref{Tsalis}  that  $V^2_2(A;B)$ equals a \emph{Tsallis mutual information} of order two.


\section{A decoupling theorem}\label{sec:decoupling}

Our main motivation for defining $V_\alpha$ is in its applications in decoupling type theorems. To explain this let us for example, think of the average of the so called \emph{purity} of $\rho_{A_0} = \tr_C(U \rho_A U^\dagger)$, i.e., $\E_U[\tr(\rho_{A_0}^2)]$, where the quantum system $A$ is composed of two subsystems $A_0, C$ and $U_A\in \fL(A)$ is a random unitary  distributed according to the Haar measure. Computing this average (using techniques that will be explained below) the result would be a multiple of $\tr(\rho_A^2)$ \emph{plus} a constant. Thus $\E_U[\tr(\rho_{A_0}^2)]$ cannot be naturally bounded by $c\tr(\rho_A^2)$ for some constant $c<1$. We conclude that for this problem it is more natural to replace purity with purity plus an appropriate constant. This simple modification is exactly what we do in using $V_\alpha(A; B)$ and $W_\alpha(A|B)$ instead of $I_{\alpha}(A; B)$ and $H_\alpha(A|B)$. 
The statement and the proof of the following decoupling theorem will clear up our point here.

In the following, we use (say) $A'$ to denote a copy of the system $A$. That is, $\cH_{A'}$ is a Hilbert space isomorphic to $\cH_A$, and $\mathcal A'=\mathcal A$ as sets. Let 
$$F_{AA'}: \cH_A\otimes \cH_{A'}\rightarrow \cH_A\otimes \cH_{A'},$$
to be the \emph{swap operator} given by
\begin{align}\label{eq:swap-op}
F_{AA'} \ket\psi_A\otimes \ket\varphi_{A'} = \ket\varphi_{A}\otimes \ket{\psi}_{A'}.\end{align}
Observe that $F_{AA'}^2= I_{AA'}$ and $\tr(F_{AA'}) = d_A$.

\begin{theorem}\label{thm:main-0}
Let $\rho_{AB}$ be an arbitrary quantum state and $\Phi:\fL(A)\rightarrow \fL(A_0)$ be an arbitrary completely positive map (not necessarily trace preserving) satisfying
$$\Phi\Big(\frac{I_A}{d_A}\Big) = \frac{I_{A_0}}{d_{A_0}}.$$
For a given unitary $U_A\in \fL(\cH_A)$ define 
$$\rho_{A_0B} =\Phi_A\otimes \mathcal I_B(U_A \rho_{AB}U_A^{\dagger}),$$
that is not necessarily normalized. 
Then for every $1\leq \alpha\leq 2$ the followings hold:

\begin{itemize}
\item[{\rm (i)}] We have 
$$\E_U\big[W_\alpha(A_0|B)\big] \leq 2^{\frac 2\alpha-1}\Big(\frac{\gamma-d_A/d_{A_0}}{d_A^2-1}\Big)^{\frac{1}{\alpha'}} W_\alpha(A|B),$$
where the expectation is taken with respect to the Haar measure and 
$$\gamma=\tr\big(F_{A_0A_0'} \Phi^{\otimes 2}(F_{AA'})\big).$$ 

\item[{\rm (ii)}]  Suppose that $\rho_A= I_A/d_A$ is maximally mixed. Then we have
$$\E_{U}  \big[V_\alpha(A_0;B)\big] \leq  2^{\frac 2\alpha-1} \big( \frac{d_{A_0}}{d_A} \big)^{\frac{1}{\alpha'}}\Big(\frac{\gamma-d_A/d_{A_0}}{d_A^2-1}\Big)^{\frac{1}{\alpha'}} V_\alpha(A;B).$$

\end{itemize}

\end{theorem}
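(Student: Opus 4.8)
\emph{Reduction of (ii) to (i).} The plan is to deduce part~(ii) from part~(i) and then prove~(i) by interpolating between $\alpha=1$ and $\alpha=2$. When $\rho_A=I_A/d_A$, both the Haar-random $U_A$ and $\Phi$ preserve the maximally mixed state, so $\rho_{A_0}=I_{A_0}/d_{A_0}$. Hence, exactly as in the proof of Theorem~\ref{thm:conditional-V-1}, $V_\alpha(A_0;B)=d_{A_0}^{1/\alpha'}W_\alpha(A_0|B)$ and $V_\alpha(A;B)=d_A^{1/\alpha'}W_\alpha(A|B)$. Substituting these into~(i) and collecting the dimension factors produces precisely the constant $(d_{A_0}/d_A)^{1/\alpha'}$ appearing in~(ii). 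So it suffices to prove~(i).

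\emph{Interpolation setup.} The target constant factors as $2^{2/\alpha-1}$ and $(\tfrac{\gamma-d_A/d_{A_0}}{d_A^2-1})^{1/\alpha'}$, which at $\alpha=1$ equal $2$ and $1$, and at $\alpha=2$ equal $1$ and $(\tfrac{\gamma-d_A/d_{A_0}}{d_A^2-1})^{1/2}$. Writing $1/\alpha=(1-\theta)+\theta/2$, i.e. $\theta=2/\alpha'$, the Riesz--Thorin factors $2^{1-\theta}$ and $(\cdot)^{\theta/2}$ reproduce exactly $2^{2/\alpha-1}$ and $(\cdot)^{1/\alpha'}$. I would therefore establish the two endpoint bounds $\E_U[W_1(A_0|B)]\le 2\,W_1(A|B)$ and $\E_U[W_2(A_0|B)]\le(\tfrac{\gamma-d_A/d_{A_0}}{d_A^2-1})^{1/2}W_2(A|B)$ and interpolate. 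To interpolate the expectation itself, represent $\E_U[\,\cdot\,]$ as a vector-valued norm: since $\E_U\|X(U)\|_\alpha$ is the $(1,\alpha)$-norm of the block-diagonal operator $\sum_U p(U)\,\ket{U}\bra{U}_R\otimes X(U)$ on a classical register $R$ holding $U$, the quantity $\E_U[W_\alpha(A_0|B)]$ (bounded above by fixing $U$-independent reference states in the infimum of the $(1,\alpha)$-norm) becomes a single $(1,\alpha)$-norm of an object produced by one fixed super-operator; the vector-valued $L_p$ spaces form an interpolation family (Appendix~\ref{app:interpolation}), so Riesz--Thorin applies.

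\emph{The case $\alpha=2$ (the core).} Fix $U$-independent density matrices $\tau_B,\sigma_B$. By Lemma~\ref{lem:V-2-expression},
\[
W_2(A_0|B)^2\le \tr[\tau_B^{-1/2}\rho_{A_0B}\sigma_B^{-1/2}\rho_{A_0B}]-\tfrac1{d_{A_0}}\tr[\tau_B^{-1/2}\rho_B\sigma_B^{-1/2}\rho_B].
\]
I take $\E_U$ of each trace through the swap identity $\tr[YZ]=\tr[F(Y\otimes Z)]$, so that $\E_U$ only hits $(U_A\otimes U_{A'})(\rho_{AB}\otimes\rho_{A'B'})(U_A\otimes U_{A'})^\dagger$; the Haar second-moment formula $\E_U[(U\otimes U)Z(U^\dagger\otimes U^\dagger)]=c_1 I_{AA'}+c_2 F_{AA'}$ reduces both traces to the \emph{same} two $BB'$-scalars $T_1,T_2$ (functions of $\tau_B,\sigma_B$) obtained by pairing $c_1,c_2$ against $F_{BB'}$ and the $\tau_B,\sigma_B$ weights, where $T_2\ge0$ and $\inf_{\tau_B,\sigma_B}(d_A^2-1)T_2=W_2(A|B)^2$ by Lemma~\ref{lem:V-2-expression}. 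Writing $K:=\Phi^{*}(I_{A_0})$ with $\Phi^{*}$ the adjoint of $\Phi$ (so $\tr K=d_A$, since $\Phi(I_A/d_A)=I_{A_0}/d_{A_0}$), the first trace contributes $\tr[F_{A_0A_0'}\Phi^{\otimes2}(I_{AA'})]\,T_1+\tr[F_{A_0A_0'}\Phi^{\otimes2}(F_{AA'})]\,T_2=\tfrac{d_A^2}{d_{A_0}}T_1+\gamma T_2$, while the subtracted trace (using $\rho_B=\tr_A[(K\otimes I_B)U\rho_{AB}U^\dagger]$) contributes $\tfrac1{d_{A_0}}\big((\tr K)^2T_1+\tr[K^2]T_2\big)=\tfrac{d_A^2}{d_{A_0}}T_1+\tfrac{\tr[K^2]}{d_{A_0}}T_2$. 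The $T_1$-terms cancel exactly, leaving $\E_U[W_2(A_0|B)^2]\le\big(\gamma-\tfrac{\tr[K^2]}{d_{A_0}}\big)T_2$. Since $\tr[K^2]\ge(\tr K)^2/d_A=d_A$ by Cauchy--Schwarz and $T_2\ge0$, this is at most $\tfrac{\gamma-d_A/d_{A_0}}{d_A^2-1}(d_A^2-1)T_2$; taking $\inf_{\tau_B,\sigma_B}$ gives $\E_U[W_2(A_0|B)^2]\le\tfrac{\gamma-d_A/d_{A_0}}{d_A^2-1}W_2(A|B)^2$, and Jensen ($\E_U W_2\le\sqrt{\E_U W_2^2}$) yields the $\alpha=2$ endpoint.

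\emph{The case $\alpha=1$ and the main obstacle.} Since $\tr_{A_0}\rho_{A_0B}=\rho_B$, we have $W_1(A_0|B)=\|\Xi_{A_0}(\rho_{A_0B})\|_1$ for $\Xi_{A_0}(X)=X-\tr_{A_0}(X)\otimes I_{A_0}/d_{A_0}$; as $\Xi_{A_0}$ annihilates $Y_B\otimes I_{A_0}/d_{A_0}$ and $\rho_{A_0B}-\rho_B^{o}\otimes I_{A_0}/d_{A_0}=\Phi\otimes\mathcal I_B\big(U(\rho_{AB}-\rho_B^{o}\otimes I_A/d_A)U^\dagger\big)$ with $\rho_B^{o}=\tr_A\rho_{AB}$, one reduces to the bound $\|\Xi_{A_0}\|_{1\to1}\le 2$ (the $\alpha=1$ triangle-inequality step of Theorem~\ref{thm:conditional-V-1}) composed with an \emph{averaged} trace-norm contraction of $\E_U\|\Phi\otimes\mathcal I_B(U\,\cdot\,U^\dagger)\|_1$ on operators with $\tr_A(\cdot)=0$, and the $\gamma$-dependence drops out because $1/\alpha'=0$. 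I expect the genuinely delicate points to be exactly this $\alpha=1$ endpoint together with the rigorous interpolation of the expectation: the marginal $\rho_B$ is truly $U$-dependent because $\Phi$ need not be trace preserving, and the worst-case estimate $\|\Phi\|_{1\to1}$ can be as large as $d_A/d_{A_0}$, so the Haar average must be used essentially (not merely state-by-state) to recover the clean factor $2$. The $\alpha=2$ computation shows this non-trace-preservation is harmless (the exact $T_1$-cancellation and the inequality $\tr[K^2]\ge d_A$ absorb it), and transplanting that robustness through the interpolation---by fixing $U$-independent reference states and absorbing the $\alpha$-dependent twists into the weighted vector-valued norms---is the step requiring the most care.
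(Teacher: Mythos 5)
Your overall strategy coincides with the paper's: reduce (ii) to (i) using $V_\alpha=d^{1/\alpha'}W_\alpha$ when the $A$-marginal is maximally mixed; encode the Haar expectation as the $(1,1,\alpha)$-norm of a single fixed linear map $M_{BA}\mapsto\{U\mapsto \Phi(UM_{BA}U^\dagger)-(\cdots)\otimes I_{A_0}/d_{A_0}\}$ taking values in $\ell(\mathcal U_A)\otimes \fL(BA_0)$; and interpolate between $\alpha=1$ and $\alpha=2$ via Riesz--Thorin, exactly as the paper does. Your $\alpha=2$ endpoint is correct, and is in fact handled more carefully than in the paper's own computation: by writing the genuinely $U$-dependent marginal as $\tr_{A_0}\rho_{A_0B}=\tr_A[(K\otimes I_B)U\rho_{AB}U^\dagger]$ with $K=\Phi^*(I_{A_0})$ and pairing $K\otimes K$ against the Haar second moment, you obtain the exact cancellation of the $T_1$ terms and the constant $\gamma-\tr(K^2)/d_{A_0}\leq \gamma-d_A/d_{A_0}$ (via $\tr(K^2)\geq(\tr K)^2/d_A=d_A$), which is even slightly sharper than the claimed bound; the paper instead works with the $U$-independent marginal $\tr_A(M_{BA})$ in its map $\Xi$, so your version is the one that literally matches the definition of $W_\alpha(A_0|B)$ in the statement.

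The genuine gap is the $\alpha=1$ endpoint: you reduce it to the averaged contraction $\E_U\big[\|\Phi\otimes\mathcal I_B(UM_{BA}U^\dagger)\|_1\big]\leq \|M_{BA}\|_1$, correctly diagnose that no pointwise (fixed-$U$) estimate can work since $\|\Phi\|_{1\to 1}$ may be as large as $d_A/d_{A_0}$, but then leave the claim as an expectation of delicacy rather than proving it. The missing argument is short and is precisely what the paper supplies. The map $\Xi_0:M_{BA}\mapsto\{U\mapsto\Phi\otimes\mathcal I_B(UM_{BA}U^\dagger)\}$, viewed as a map into $\ell(\mathcal U_A)\otimes\fL(BA_0)$ with the $L_1$ (Haar) norm on the first factor, is completely positive, so by \cite[Corollary~6]{DJKR} its $1\to1$ norm is attained at positive semidefinite inputs. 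For $M_{BA}\geq 0$ the trace norm equals the trace, hence $\E_U\|\Phi\otimes\mathcal I_B(UMU^\dagger)\|_1=\E_U\tr\big[\Phi\otimes\mathcal I_B(UMU^\dagger)\big]=\tr\big[\Phi\otimes\mathcal I_B\big(\tr_A(M)\otimes I_A/d_A\big)\big]=\tr(M)\,\tr\big(I_{A_0}/d_{A_0}\big)=\|M\|_1$, where we used $\E_U[U_AM_{BA}U_A^\dagger]=\tr_A(M_{BA})\otimes I_A/d_A$ and the hypothesis $\Phi(I_A/d_A)=I_{A_0}/d_{A_0}$; combined with your bound $\|\Xi_{A_0}\|_{1\to1}\leq 2$ this yields the factor $2$ at $\alpha=1$. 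One further point of rigor: Riesz--Thorin requires both endpoint estimates as operator-norm bounds valid for \emph{arbitrary} $M_{BA}$, not just for the particular state $\rho_{AB}$; your $\alpha=2$ computation is generic, so this is only a matter of phrasing, but the endpoints should be stated for general $M$ and only afterwards specialized to $M_{BA}=\rho_{BA}-\rho_B\otimes I_A/d_A$, for which $\|M_{BA}\|_{(1,\alpha)}=W_\alpha(A|B)$.
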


This theorem in the special case of $\alpha=2$ (together with Proposition~\ref{prop:non-decreasing}) resembles the one-shot decoupling theorem of~\cite{DBWR14} with similar proof ideas. See also~\cite{Sharma15} for a similar decoupling type theorem.

The following corollary presents two important especial cases of this theorem.

\begin{corollary}\label{cor:main-1}
For an arbitrary quantum state $\rho_{AB}$ and $1\leq \alpha\leq 2$ the followings hold:

\begin{itemize}
\item[{\rm (a)}] If $A$ is  composed of two subsystems $A_0, C$ and for a unitary $U_A$ we define $\rho_{A_0B} = \tr_C\big( (U_A\otimes I_B) \rho_{AB} (U_A^\dagger\otimes I_B)   \big)$ then
$$\E_U\big[W_\alpha(A_0|B)\big] \leq 2^{\frac 2\alpha-1}d_C^{-\frac{1}{\alpha'}} W_\alpha(A|B),$$
where the expectation is taken with respect to the Haar measure. Moreover, if $\rho_A= I_A/d_A$ then 
$$\E_{U}  \big[V_\alpha(A_0;B)\big] \leq  2^{\frac 2\alpha-1}d_C^{-\frac{2}{\alpha'}} V_\alpha(A;B).$$

\item[{\rm (b)}]  Suppose that $\cH_{A_0}\subseteq \cH_A$ is a subspace and $P:\cH_{A} \rightarrow \cH_{A_0}$ is the orthogonal projection onto this subspace. Then for a unitary $U_A$ defining 
$$\rho_{A_0B} = \frac{d_A}{d_{A_0}} (P\otimes I_B) \rho_{AB} (P_A\otimes I_B),$$
we have
$$\E_U\big[W_\alpha(A_0|B)\big]\leq 2^{\frac 2\alpha-1}W_\alpha(A|B).$$ Moreover, if $\rho_A= I_A/d_A$ then 
$$\E_{U}  \big[V_\alpha(A_0;B)\big] \leq  2^{\frac 2\alpha-1}\big(\frac{d_{A_0}}{d_A}\big)^{\frac{1}{\alpha'}} V_\alpha(A;B).$$

\end{itemize}

\end{corollary}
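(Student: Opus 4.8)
The plan is to derive both (a) and (b) as direct specializations of Theorem~\ref{thm:main-0}, by identifying the relevant completely positive map $\Phi_{A\to A_0}$, checking that it satisfies $\Phi(I_A/d_A)=I_{A_0}/d_{A_0}$, computing the constant $\gamma=\tr(F_{A_0A_0'}\Phi^{\otimes 2}(F_{AA'}))$, and then bounding the resulting dimension factor by the cleaner quantities claimed in the statement. The only genuinely computational ingredient is the evaluation of $\gamma$, for which I would repeatedly use the swap identities $F_{XX'}^2=I$, $\tr(F_{XX'})=d_X$, and the factorization of the swap operator across a tensor decomposition. Everything else then reduces to an elementary monotone inequality in the dimensions, exploiting that $1/\alpha'\ge 0$ for $\alpha\in[1,2]$.

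For part (a), take $\Phi=\tr_C$, which is CPTP and satisfies $\tr_C(I_A/d_A)=I_{A_0}/d_{A_0}$ since $d_A=d_{A_0}d_C$. Writing $A=A_0C$, the swap factorizes as $F_{AA'}=F_{A_0A_0'}\otimes F_{CC'}$, so $\Phi^{\otimes 2}(F_{AA'})=\tr_{CC'}(F_{A_0A_0'}\otimes F_{CC'})=\tr(F_{CC'})\,F_{A_0A_0'}=d_C\,F_{A_0A_0'}$, whence $\gamma=d_C\tr(F_{A_0A_0'}^2)=d_Cd_{A_0}^2$. Substituting into Theorem~\ref{thm:main-0}(i) and using $d_A/d_{A_0}=d_C$ gives the factor $\big(d_C(d_{A_0}^2-1)/(d_{A_0}^2d_C^2-1)\big)^{1/\alpha'}$, which I would bound by $d_C^{-1/\alpha'}$ via the elementary inequality $d_C^2\ge 1$. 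For the $V_\alpha$ bound I apply Theorem~\ref{thm:main-0}(ii), whose extra prefactor $(d_{A_0}/d_A)^{1/\alpha'}=d_C^{-1/\alpha'}$ combines with the same estimate to produce the claimed $d_C^{-2/\alpha'}$.

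For part (b), take $\Phi(X)=\tfrac{d_A}{d_{A_0}}PXP$, which is completely positive and satisfies the normalization since $P I_A P=I_{A_0}$ under the identification $\cH_{A_0}\subseteq\cH_A$. Here $\Phi^{\otimes 2}(F_{AA'})=(d_A/d_{A_0})^2(P\otimes P)F_{AA'}(P\otimes P)$, and a short computation on basis vectors shows $(P\otimes P)F_{AA'}(P\otimes P)=F_{A_0A_0'}$; hence $\gamma=(d_A/d_{A_0})^2\,d_{A_0}^2=d_A^2$. Substituting into Theorem~\ref{thm:main-0}(i) yields the factor $\big((d_A^2-d_A/d_{A_0})/(d_A^2-1)\big)^{1/\alpha'}$, which is at most $1$ because $d_{A_0}\le d_A$; this gives the stated $W_\alpha$ bound, and applying Theorem~\ref{thm:main-0}(ii) with the surviving prefactor $(d_{A_0}/d_A)^{1/\alpha'}$ gives the $V_\alpha$ bound. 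The main (and mild) obstacle throughout is the bookkeeping in the two $\gamma$ computations—getting the swap factorization right in (a) and the projected swap identity right in (b)—after which the reductions to the monotone elementary inequalities above are routine.
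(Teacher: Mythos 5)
Your proposal is correct and is exactly the argument the paper intends: the corollary is stated as a specialization of Theorem~\ref{thm:main-0}, and your choices $\Phi=\tr_C$ (giving $\gamma=d_C d_{A_0}^2$) and $\Phi(X)=\tfrac{d_A}{d_{A_0}}PXP$ (giving $\gamma=d_A^2$), together with the checks $\tfrac{d_C(d_{A_0}^2-1)}{d_{A_0}^2d_C^2-1}\leq d_C^{-1}$ and $\tfrac{d_A^2-d_A/d_{A_0}}{d_A^2-1}\leq 1$ and the prefactor $(d_{A_0}/d_A)^{1/\alpha'}$ from part~(ii), reproduce all four stated bounds. No gaps.
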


Part (b) of this corollary gives the following generalization of the decoupling result of~\cite{Berta08}. To prove this corollary use part (b) of the above corollary together with Proposition~\ref{prop:non-decreasing}.  

\begin{corollary}\label{cor:decoupling-Berta}
Let $\rho_{AB}$ be bipartite quantum state and let $P:\cH_A\rightarrow \cH_{A_0}$ be an orthonormal projection. Then we have
$$\E_U\Big[  \Big\| \frac{d_A}{d_{A_0}} (P\otimes I_B) U_A \rho_{AB}U_A^\dagger (P\otimes I_B) - \frac{I_{A_0}}{d_{A_0}}\otimes\rho_B             \Big\|_1    \Big]\leq 2^{\frac 2\alpha-1} d_{A_0}^{\frac{1}{\alpha'}} W_\alpha(A|B)$$
\end{corollary}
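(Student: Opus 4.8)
The plan is to read off Corollary~\ref{cor:decoupling-Berta} from part~(b) of Corollary~\ref{cor:main-1} by converting the $(1,\alpha)$-bound on $W_\alpha$ into a bound on the trace norm, using Proposition~\ref{prop:non-decreasing}. No new estimate is needed: the whole argument is a concatenation of two results already proved.

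First I would observe that the operator inside the trace norm is exactly the output state of the completely positive map considered in Corollary~\ref{cor:main-1}(b). Indeed, set $\Phi(X)=\tfrac{d_A}{d_{A_0}}PXP$, which is completely positive and satisfies $\Phi(I_A/d_A)=I_{A_0}/d_{A_0}$ because $\tr P=d_{A_0}$. Then
$$\rho_{A_0B}=\frac{d_A}{d_{A_0}}(P\otimes I_B)U_A\rho_{AB}U_A^\dagger(P\otimes I_B)$$
is precisely the (positive semidefinite, generally unnormalized) state to which Corollary~\ref{cor:main-1}(b) applies, giving directly
$$\E_U\big[W_\alpha(A_0|B)\big]\leq 2^{\frac{2}{\alpha}-1}W_\alpha(A|B).$$

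Next I would lower bound $W_\alpha(A_0|B)$ by a trace norm. Applying the $W_\alpha$ part of Proposition~\ref{prop:non-decreasing} to the operator $\rho_{A_0B}$, with the role of the ``$A$'' subsystem now played by $A_0$ (so the dimensional factor is $d_{A_0}^{-1/\alpha'}$), yields
$$W_\alpha(A_0|B)\geq d_{A_0}^{-\frac{1}{\alpha'}}\Big\|\rho_{A_0B}-\frac{I_{A_0}}{d_{A_0}}\otimes\rho_B\Big\|_1.$$
Since the inequality of Proposition~\ref{prop:non-decreasing} is homogeneous of degree one in the operator, it is insensitive to the missing normalization of $\rho_{A_0B}$. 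Taking the Haar expectation of both sides and combining with the previous display gives
$$\E_U\Big\|\rho_{A_0B}-\frac{I_{A_0}}{d_{A_0}}\otimes\rho_B\Big\|_1\leq d_{A_0}^{\frac{1}{\alpha'}}\,\E_U\big[W_\alpha(A_0|B)\big]\leq 2^{\frac{2}{\alpha}-1}d_{A_0}^{\frac{1}{\alpha'}}W_\alpha(A|B),$$
which is the claim once the expression for $\rho_{A_0B}$ is substituted back.

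The one point needing care — and what I expect to be the main obstacle — is the identity of the reference state $\rho_B$ in $I_{A_0}/d_{A_0}\otimes\rho_B$. Proposition~\ref{prop:non-decreasing} naturally produces the $B$-marginal of $\rho_{A_0B}$, namely $\tilde\rho_B=\tfrac{d_A}{d_{A_0}}\tr_A\big[(U_A^\dagger P U_A\otimes I_B)\rho_{AB}\big]$, which depends on $U$ and only equals the fixed marginal $\rho_B=\tr_A\rho_{AB}$ in expectation, since $\E_U[U_A^\dagger P U_A]=\tfrac{d_{A_0}}{d_A}I_A$. To land exactly on the fixed $\rho_B$ one either adopts the convention that $\rho_B$ in the statement denotes this conditional marginal $\tilde\rho_B$, or inserts a triangle-inequality step $\|\rho_{A_0B}-I_{A_0}/d_{A_0}\otimes\rho_B\|_1\le\|\rho_{A_0B}-I_{A_0}/d_{A_0}\otimes\tilde\rho_B\|_1+\|\tilde\rho_B-\rho_B\|_1$ and controls the fluctuation $\E_U\|\tilde\rho_B-\rho_B\|_1$ separately. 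Apart from this bookkeeping about marginals, the proof is immediate from Corollary~\ref{cor:main-1}(b) and Proposition~\ref{prop:non-decreasing}.
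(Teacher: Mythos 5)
Your proof is correct and is essentially the paper's own: the paper proves this corollary precisely by combining Corollary~\ref{cor:main-1}(b) with Proposition~\ref{prop:non-decreasing}, i.e., with the monotonicity $d_{A_0}^{1/\alpha'}\|\cdot\|_{(1,\alpha)}\geq \|\cdot\|_1$ coming from Lemma~\ref{lemma1}, which holds for arbitrary (not necessarily normalized) operators, so the homogeneity remark you make is exactly the right justification. Regarding the reference-state issue you flag as the main obstacle, the first of your two resolutions is the one implicit in the paper and no triangle-inequality patch is needed: the proof of Theorem~\ref{thm:main-0} establishes the bound
$\E_U\big[\big\|\Phi(U_A M_{BA} U_A^\dagger)-\tr_A(M_{BA})\otimes I_{A_0}/d_{A_0}\big\|_{(1,\alpha)}\big]\leq 2^{\frac2\alpha-1}\big(\tfrac{\gamma-d_A/d_{A_0}}{d_A^2-1}\big)^{1/\alpha'}\|M_{BA}\|_{(1,\alpha)}$
and is applied with $M_{AB}=\rho_{AB}-\rho_B\otimes I_A/d_A$, whose partial trace over $A$ vanishes, so the quantity actually controlled in Corollary~\ref{cor:main-1}(b) is $\E_U\big[\big\|\rho_{BA_0}-\rho_B\otimes I_{A_0}/d_{A_0}\big\|_{(1,\alpha)}\big]$ with the \emph{fixed} marginal $\rho_B=\tr_A(\rho_{AB})$, which is exactly what the corollary requires; separately controlling $\E_U\|\tilde\rho_B-\rho_B\|_1$ never arises.
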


Before getting into the proof of Theorem~\ref{thm:main-0} let us explain the classical counterpart of this theorem in which $A$ denotes a classical system. Due to its applications, we present only the classical counterpart of part (a) of Corollary~\ref{cor:main-1}.

\begin{theorem}\label{thm:main-3}
Let $\mathcal A=\mathcal A_0\times \mathcal C$ be arbitrary sets, and let
$$\rho_{AB}=\sum_{a}p(a)\ket a\bra a\otimes \rho_a,$$
be an arbitrary classical-quantum state. For a function $f: \mathcal A\rightarrow \mathcal A_0$ define
$$\rho_{A_0B} = \sum_{a} p(a) \ket{f(a)}\bra{f(a)}\otimes \rho_a.$$
Then for every $1\leq \alpha\leq 2$ the followings hold: 
\begin{itemize}
\item[{\rm (i)}] We have 
$$\E_f\big[W_\alpha(A_0|B)\big] \leq 2^{\frac 2\alpha-1} W_\alpha(A|B),$$
where the expectation is taken with respect to the uniform distribution over all $|\mathcal C|$-to-$1$ functions\footnote{A function $f$ is $k$-to-$1$ if $|f^{-1}(a_0)|=k$ for all $a_0$.} $f:\mathcal A\rightarrow \mathcal A_0$.

\item[{\rm (ii)}]  Suppose that $p(a)= 1/|\mathcal A|$ is the uniform distribution. Then we have
$$\E_{f}  \big[V_\alpha(A_0;B)\big] \leq  2^{\frac 2\alpha-1} |\mathcal C|^{-\frac{1}{\alpha'}} V_\alpha(A;B),$$
where the expectation is taken with respect to the uniform distribution over all $|\mathcal C|$-to-$1$ functions $f:\mathcal A\rightarrow \mathcal A_0$.

\end{itemize}

\end{theorem}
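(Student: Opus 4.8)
The plan is to recast the entire statement as a single super-operator norm bound between vector-valued $L_p$ spaces and then interpolate between $\alpha=1$ and $\alpha=2$, exactly in the spirit of Theorem~\ref{thm:conditional-V-1}.

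\emph{Reductions and reformulation.} First, part (ii) will follow from (i): when $p$ is uniform and $f$ is $|\mathcal C|$-to-$1$, both $\rho_A=I_A/d_A$ and $\rho_{A_0}=I_{A_0}/d_{A_0}$ are maximally mixed, so the identity $V_\alpha(A;B)=d_A^{1/\alpha'}W_\alpha(A|B)$ used in the proof of Theorem~\ref{thm:conditional-V-1} applies to both registers, and combining it with (i) and $d_{A_0}/d_A=1/|\mathcal C|$ produces exactly the constant $2^{2/\alpha-1}|\mathcal C|^{-1/\alpha'}$. For (i), I introduce the classical relabelling channel $\mathcal F_f(\ket a\bra{a'})=\ket{f(a)}\bra{f(a')}$, so that $\rho_{A_0B}=(\mathcal I_B\otimes\mathcal F_f)(\rho_{AB})$, and since $f$ is $|\mathcal C|$-to-$1$ one has $\mathcal F_f(I_A/d_A)=I_{A_0}/d_{A_0}$. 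Writing $M_{BA}:=\rho_{BA}-\rho_B\otimes I_A/d_A$ gives $W_\alpha(A_0|B)=\|(\mathcal I_B\otimes\mathcal F_f)(M_{BA})\|_{(1,\alpha)}$ and $W_\alpha(A|B)=\|M_{BA}\|_{(1,\alpha)}$. I then encode the average as a norm: with an auxiliary classical register $R$ indexing the $|\mathcal C|$-to-$1$ functions and $\Lambda(Y):=\sum_f q(f)\ket f\bra f_R\otimes\mathcal F_f(Y)$ (with $q$ uniform), the $R$ register sits in a ``$1$''-slot, so by \eqref{eq:p-q-c-q} one gets $\E_f[W_\alpha(A_0|B)]=\|(\mathcal I_B\otimes\Lambda)(M_{BA})\|_{(1,1,\alpha)}$. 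Since $\tr_A M_{BA}=\rho_B-\rho_B=0$, the operator $M_{BA}$ is fixed by the recentering map $\Xi$ of Theorem~\ref{thm:conditional-V-1} (with $C$ trivial). Thus (i) reduces, for $\mathcal T:=\mathcal I_B\otimes(\Lambda\circ\Xi)$, to the operator bound
$$\big\|\mathcal T\big\|_{(1,\alpha)\to(1,1,\alpha)}\le 2^{\frac2\alpha-1},\qquad 1\le\alpha\le 2.$$

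\emph{Interpolation and the $\alpha=1$ endpoint.} The operator $\mathcal T$ does not depend on $\alpha$, only the norms do, so the Riesz--Thorin theorem (Appendix~\ref{app:interpolation}) reduces the display to the endpoints $\alpha=1,2$; with $\theta=2-2/\alpha$ the endpoint constants $2$ and $1$ interpolate to $2^{1-\theta}=2^{2/\alpha-1}$. One point must be handled with care: the bound can only hold on the classical--quantum subspace of operators diagonal on $A$, so I run the interpolation inside this subspace, which is legitimate because it is the range of the $A$-pinching $\mathcal P_A$, a complete contraction of norm $1$ in every $(1,\alpha)$-norm and hence a $1$-complemented interpolation subspace. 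At $\alpha=1$ the norm is the Schatten $1$-norm; each $\mathcal F_f$ is positive and trace preserving on this subspace, so $\|\mathcal I_B\otimes\Lambda\|_{1\to1}\le1$, while the triangle inequality gives $\|\mathcal I_B\otimes\Xi\|_{1\to1}\le2$ (the $\alpha=1$ line of Theorem~\ref{thm:conditional-V-1}), whence $\|\mathcal T\|_{1\to1}\le2$. It is precisely the insertion of $\Xi$ that creates this factor $2$.

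\emph{The $\alpha=2$ endpoint.} Here I must show $\|\mathcal T\|_{(1,2)\to(1,1,2)}\le1$. Writing $\tilde Y=(\mathcal I_B\otimes\Xi)(Y)$, which is traceless on $A$, the codomain norm unfolds to $\E_f\|\mathcal F_f(\tilde Y)\|_{(1,2)}$, and by Jensen this is at most $\sqrt{\E_f\|\mathcal F_f(\tilde Y)\|_{(1,2)}^2}$. Using $\E_f[\inf_{\sigma_B,\tau_B}(\cdots)]\le\inf_{\sigma_B,\tau_B}\E_f[\cdots]$ and commuting the $B$-conjugations through $\mathcal F_f$, the problem becomes a Hilbert--Schmidt second moment: for a diagonal traceless input $N=\sum_a N_a\otimes\ket a\bra a$ with $\sum_a N_a=0$ one has $\E_f\|\mathcal F_f(N)\|_2^2=\sum_{a,a'}\E_f[\delta_{f(a),f(a')}]\langle N_a,N_{a'}\rangle$. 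The elementary identity $\E_f[\delta_{f(a),f(a')}]=\frac{|\mathcal C|-1}{|\mathcal A|-1}$ for $a\neq a'$ then gives $\E_f\|\mathcal F_f(N)\|_2^2=\big(1-\tfrac{|\mathcal C|-1}{|\mathcal A|-1}\big)\|N\|_2^2\le\|N\|_2^2$, the tracelessness being exactly what cancels the off-diagonal term. Together with the fact that $\mathcal I_B\otimes\Xi$ is a Hilbert--Schmidt orthogonal projection, hence a $(1,2)$-contraction, this yields $\|\mathcal T\|_{(1,2)\to(1,1,2)}\le1$.

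\emph{Main obstacle.} The delicate part is this $\alpha=2$ step together with the subspace restriction it forces. The diagonal-on-$A$ restriction is genuinely necessary: for non-diagonal traceless inputs such as $\ket{a_1}\bra{a_2}+\ket{a_2}\bra{a_1}$ one computes $\E_f\|\mathcal F_f(\cdot)\|_2^2>\|\cdot\|_2^2$, so the clean contraction is false off the classical--quantum subspace, and one must both restrict to the pinching subspace and check that interpolation there preserves the constant $1$. Legitimising the interchange of $\E_f$ with the infimum over $\sigma_B,\tau_B$ and with the square root, and running the second-moment average so that the $\Xi$-induced tracelessness removes the rank-one $I_A$-component, is where the real work lies; the reduction of (ii), the $\alpha=1$ endpoint, and the Riesz--Thorin step are routine given the tools already set up for Theorem~\ref{thm:conditional-V-1}.
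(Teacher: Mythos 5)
Your proposal is correct and follows essentially the same route as the paper's proof: reduce (ii) to (i) via $V_\alpha = d^{1/\alpha'}W_\alpha$, encode the average over binnings as a $(1,1,\alpha)$-norm with an auxiliary classical register, and apply Riesz--Thorin between the endpoint bounds $2$ at $\alpha=1$ (triangle inequality) and $1$ at $\alpha=2$ (Jensen plus a second-moment computation). The only difference is cosmetic: you evaluate $\E_f\big\|\mathcal F_f(N)\big\|_2^2$ directly from the collision probability $\E_f[\delta_{f(a),f(a')}]=(|\mathcal C|-1)/(|\mathcal A|-1)$ together with tracelessness on $A$, while the paper realizes the random binning as a random permutation unitary followed by $\tr_C$ and evaluates the same second moment via the twirl formula for $\E_\pi\big[U_\pi^{\otimes 2}(\cdot)(U_\pi^\dagger)^{\otimes 2}\big]$ --- an equivalent computation.
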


To prove the above theorems we first use the Riesz-Thorin theorem to reduce the statement for a general $1\leq \alpha\leq 2$ to the special cases of $\alpha=1$ and $\alpha=2$. The proof for $\alpha=1$ follows from a simple application of the triangle inequality. To prove the theorem for $\alpha=2$ we need to compute certain averages over a Haar random unitary (random permutation). In the following, we first explain some tools for computing these averages and then present the proof of the above theorems.

\begin{lemma}\cite{DBWR14} \label{lem:swap-trace} 
For any $M_A\otimes N_{A'} \in \fL(\cH_A\otimes \cH_{A'})$  we have
$$\tr[F_{AA'}(M\otimes N)]= \tr[MN],$$
where $F_{AA'}$ is the swap operator defined by~\eqref{eq:swap-op}. 
\end{lemma}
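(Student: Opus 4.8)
The plan is to prove the identity by expanding the swap operator in a fixed orthonormal basis and then using the multiplicativity of the trace on tensor products. First I would fix an orthonormal basis $\{\ket i\}$ of $\cH_A$ and use the same index set for the isomorphic copy $\cH_{A'}$. Applying the defining relation~\eqref{eq:swap-op} to the product basis vectors $\ket i_A\otimes \ket j_{A'}$ shows that
$$F_{AA'} = \sum_{i, j} \ket j\bra i \otimes \ket i\bra j,$$
which I would verify directly by acting on an arbitrary product basis element $\ket k\otimes \ket l$ and checking that the result is $\ket l\otimes \ket k$.

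With this representation in hand the computation is immediate. Using $\tr[X\otimes Y]=\tr[X]\,\tr[Y]$, I would write
$$\tr\big[F_{AA'}(M\otimes N)\big] = \sum_{i,j} \tr\big[\ket j\bra i\, M\big]\cdot \tr\big[\ket i\bra j\, N\big],$$
and then recognize each factor as a matrix entry, namely $\tr[\ket j\bra i\, M] = \bra i M\ket j$ and $\tr[\ket i\bra j\, N] = \bra j N\ket i$. Summing over $j$ first collapses $\sum_j \ket j\bra j = I_A$ and yields $\sum_i \bra i MN\ket i = \tr[MN]$, which is the desired identity.

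There is essentially no obstacle here: the only point requiring care is the correct index placement in the basis expansion of $F_{AA'}$, i.e.\ that it is $\sum_{i,j}\ket j\bra i\otimes\ket i\bra j$ and not the untwisted $\sum_{i,j}\ket i\bra i\otimes\ket j\bra j$, since getting the swapped indices wrong would produce $\tr[M]\,\tr[N]$ instead of $\tr[MN]$. Once the basis form of $F_{AA'}$ is pinned down by checking it against~\eqref{eq:swap-op}, the statement for a general product operator $M\otimes N$ follows by the displayed one-line calculation.
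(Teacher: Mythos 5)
Your proof is correct: the basis expansion $F_{AA'}=\sum_{i,j}\ket j\bra i\otimes\ket i\bra j$ is the right one (it indeed sends $\ket k\otimes\ket l$ to $\ket l\otimes\ket k$), and the subsequent computation $\sum_{i,j}\bra i M\ket j\,\bra j N\ket i=\tr[MN]$ is exactly the standard ``swap trick'' argument. The paper itself gives no proof of this lemma --- it is imported verbatim from~\cite{DBWR14} --- so there is nothing to compare against; your self-contained verification is the canonical one and closes the gap the citation leaves open.
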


\begin{lemma}\cite{DBWR14}\label{lem:Haar-expectation} 
Let $M_{AA'} \in \fL(\cH_A\otimes \cH_{A'})$. Then we have
$$\E_U\big[  (U\otimes U) M_{AA'} (U^\dagger\otimes U^\dagger)     \big] = \alpha I_{AA'} + \beta F_{AA'},$$
where the expectation is taken with respect to the Haar measure and $\alpha, \beta$ are determined by
\begin{align*}
\tr[M] &= \alpha d_A^2+\beta d_A,\\
\tr[MF] &= \alpha d_A + \beta d_A^2.
\end{align*}
\end{lemma}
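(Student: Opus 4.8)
The final statement to prove is Lemma~\ref{lem:Haar-expectation}, the averaging formula for a Haar-random unitary acting diagonally by $U\otimes U$ on both tensor factors.

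\medskip

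The plan is to exploit the elementary representation theory of the tensor-square representation $U\mapsto U\otimes U$ of the unitary group, which is the standard Schur-Weyl route. First I would observe that the map
$$
M_{AA'}\mapsto \E_U\big[(U\otimes U)\,M_{AA'}\,(U^\dagger\otimes U^\dagger)\big]
$$
is a projection onto the commutant of the representation, i.e.\ onto the space of operators $T$ satisfying $(U\otimes U)\,T=T\,(U\otimes U)$ for every unitary $U$. Indeed, averaging is idempotent (left-invariance of the Haar measure) and its image consists precisely of the invariant operators. The key structural input is that this commutant is two-dimensional and is spanned by the identity $I_{AA'}$ and the swap $F_{AA'}$; this is exactly the content of Schur-Weyl duality for $n=2$, where the symmetric group $S_2=\{e,(12)\}$ acts through $\{I,F\}$. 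Hence there must exist scalars $\alpha,\beta$ with $\E_U[(U\otimes U)M(U^\dagger\otimes U^\dagger)]=\alpha I_{AA'}+\beta F_{AA'}$, and everything reduces to pinning down $\alpha$ and $\beta$.

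\medskip

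To determine $\alpha$ and $\beta$ I would take traces against the two basis elements $I_{AA'}$ and $F_{AA'}$, which is the natural dual pairing. Using cyclicity of the trace and unitarity, $\tr\big[(U\otimes U)M(U^\dagger\otimes U^\dagger)\big]=\tr[M]$, so averaging gives $\tr$ of the left-hand side equal to $\tr[M]$; on the right-hand side $\tr[\alpha I_{AA'}+\beta F_{AA'}]=\alpha d_A^2+\beta\,\tr(F_{AA'})=\alpha d_A^2+\beta d_A$, using $\tr(F_{AA'})=d_A$ noted just before the swap operator was defined. This yields the first linear equation. For the second, I would pair against $F_{AA'}$: since $F_{AA'}$ commutes with $U\otimes U$, a cyclic-trace manipulation gives $\tr\big[F(U\otimes U)M(U^\dagger\otimes U^\dagger)\big]=\tr[FM]$, so after averaging the left-hand side contributes $\tr[MF]$, while the right-hand side gives $\tr\big[F(\alpha I_{AA'}+\beta F_{AA'})\big]=\alpha\,\tr(F)+\beta\,\tr(F^2)=\alpha d_A+\beta d_A^2$, using $F_{AA'}^2=I_{AA'}$. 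This produces the second equation $\tr[MF]=\alpha d_A+\beta d_A^2$, matching the statement exactly.

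\medskip

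The genuinely load-bearing step is the claim that the commutant of $U\mapsto U\otimes U$ is spanned by $\{I_{AA'},F_{AA'}\}$. In a self-contained write-up I would either cite this as standard (Schur-Weyl duality / the double commutant theorem) or sketch it directly: the algebra generated by $\{U\otimes U\}$ contains $\{M\otimes M\}$, whose commutant one checks is spanned by permutations of the two factors, namely $I$ and $F$. Once that two-dimensionality is granted, the rest is the routine linear algebra above, and the two trace equations uniquely solve for $\alpha,\beta$ whenever $d_A^4\neq d_A^2$, i.e.\ $d_A\geq 2$. I would remark that the formula is stated as determining $\alpha,\beta$ implicitly through the two equations, so I need not solve the $2\times 2$ system explicitly.
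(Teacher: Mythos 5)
Your proof is correct. The paper does not actually prove this lemma---it is quoted directly from~\cite{DBWR14}---and your argument (the twirl $M\mapsto \E_U\big[(U\otimes U)M(U^\dagger\otimes U^\dagger)\big]$ is a projection onto the commutant of $\{U\otimes U\}$, Schur--Weyl duality identifies that commutant as $\mathrm{span}\{I_{AA'},F_{AA'}\}$, and pairing with $I_{AA'}$ and $F_{AA'}$ under the trace yields exactly the two stated linear equations) is precisely the standard proof given in that reference.
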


\begin{corollary}\label{cor:Haar-expectation-2}
Let $M_{AA'BB'} \in \fL(\cH_{AB}\otimes \cH_{A'B'})$. Then we have
\begin{align*}
\E_{U_A}\big[  (U_A\otimes U_{A'}) M_{AA'BB'} (U_A^\dagger\otimes U_{A'}^\dagger)     \big] = \frac{1}{d_A^2-1}\Big[ &I_{AA'}\otimes \tr_{AA'}(M) - \frac{1}{d_A}I_{AA'}\otimes \tr_{AA'}(F_{AA'} M)  \\
&~+ F_{AA'}\otimes \tr(F_{AA'} M) -\frac{1}{d_A} F_{AA'}\otimes \tr_{AA'}(M)     \Big],
\end{align*}
\end{corollary}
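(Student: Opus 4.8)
The plan is to reduce this corollary to Lemma~\ref{lem:Haar-expectation} by treating the $BB'$ systems as ``spectators'' on which the averaging unitaries do not act. Since the Haar average is only over $U_A$ (acting on $A$ and $A'$, leaving $B, B'$ untouched), I would fix an arbitrary product decomposition and use linearity of the expectation. Concretely, expand $M_{AA'BB'} = \sum_i X^{(i)}_{AA'} \otimes Y^{(i)}_{BB'}$ in some basis, so that by linearity
$$
\E_{U_A}\big[(U_A\otimes U_{A'}) M_{AA'BB'} (U_A^\dagger\otimes U_{A'}^\dagger)\big] = \sum_i \E_{U_A}\big[(U_A\otimes U_{A'}) X^{(i)}_{AA'} (U_A^\dagger\otimes U_{A'}^\dagger)\big]\otimes Y^{(i)}_{BB'}.
$$
Applying Lemma~\ref{lem:Haar-expectation} to each $X^{(i)}_{AA'}$ gives a result of the form $\alpha_i I_{AA'} + \beta_i F_{AA'}$, where the scalars $\alpha_i,\beta_i$ are determined by the two linear equations relating them to $\tr[X^{(i)}]$ and $\tr[X^{(i)}F_{AA'}]$.

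The key step is to solve those two linear equations explicitly and then reassemble the $B$-dependence. Inverting the $2\times 2$ system
$\tr[X]=\alpha d_A^2+\beta d_A$, $\tr[XF]=\alpha d_A+\beta d_A^2$
yields
$$
\alpha = \frac{d_A\tr[X]-\tr[XF]}{d_A(d_A^2-1)},\qquad
\beta = \frac{d_A\tr[XF]-\tr[X]}{d_A(d_A^2-1)}.
$$
The main bookkeeping task is to recognize that summing $\alpha_i I_{AA'}\otimes Y^{(i)}_{BB'}$ over $i$ reproduces $I_{AA'}\otimes$ (partial trace of $M$ over $AA'$), and similarly that $\tr[X^{(i)}F_{AA'}]$ assembles into $\tr_{AA'}(F_{AA'}M_{AA'BB'})$. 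In other words, the scalar traces $\tr[X^{(i)}]$ and $\tr[X^{(i)}F]$ get promoted to the operator-valued partial traces $\tr_{AA'}(M)$ and $\tr_{AA'}(F_{AA'}M)$ acting on $BB'$, precisely because the $Y^{(i)}_{BB'}$ factors are carried along untouched. Substituting the formulas for $\alpha_i,\beta_i$ and collecting the four terms gives exactly the claimed expression, with the common prefactor $1/(d_A^2-1)$ and the $-\tfrac1{d_A}$ cross terms coming from the off-diagonal entries of the inverse matrix.

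I expect the main (minor) obstacle to be purely notational: justifying that ``$\tr$'' in the term $F_{AA'}\otimes\tr(F_{AA'}M)$ is interpreted as the partial trace $\tr_{AA'}$ (this appears to be a typo in the statement, since dimensional consistency forces an operator on $BB'$ there), and keeping the four traces correctly matched to their coefficients. No deep tool beyond Lemma~\ref{lem:Haar-expectation} and linearity is needed; the content is the substitution of the solved $\alpha,\beta$ together with the observation that the averaging commutes with tensoring against the inert $BB'$ factor.
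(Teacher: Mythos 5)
Your proposal is correct and follows essentially the same route as the paper's proof: decompose $M_{AA'BB'}=\sum_j (X_j)_{AA'}\otimes (Y_j)_{BB'}$, apply Lemma~\ref{lem:Haar-expectation} termwise with the solved coefficients $\alpha_j,\beta_j$ (your inverted $2\times 2$ system gives exactly the paper's formulas $\alpha_j = \tfrac{1}{d^2-1}\tr(X_j)-\tfrac{1}{d(d^2-1)}\tr(F X_j)$ and $\beta_j = \tfrac{1}{d^2-1}\tr(F X_j)-\tfrac{1}{d(d^2-1)}\tr(X_j)$), and reassemble the sums $\sum_j\alpha_j Y_j$, $\sum_j\beta_j Y_j$ into the partial traces $\tr_{AA'}(M)$ and $\tr_{AA'}(F_{AA'}M)$. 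Your reading of $\tr(F_{AA'}M)$ in the statement as the partial trace $\tr_{AA'}(F_{AA'}M)$ is also the correct interpretation, consistent with how the paper's proof and its later application use the formula.
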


\begin{proof}
To simplify the expressions let us denote $d=d_A.$
Decompose $M$ as
$$M_{AA'BB'} = \sum_j (X_j)_{AA'}\otimes (Y_j)_{BB'}.$$
Define
\begin{align*}
\alpha_j &= \frac{1}{d^2-1} \tr(X_j) - \frac{1}{d(d^2-1)} \tr(F_{AA'}X_j)\\
\beta_j &= \frac{1}{d^2-1} \tr(F_{AA'}X_j) - \frac{1}{d(d^2-1)} \tr(X_j).
\end{align*}
Note that $\alpha_j, \beta_j$ satisfy 
\begin{align*}
\tr[X_j] &= \alpha_j d^2+\beta_j d,\\
\tr[F_{AA'}X_j] &= \alpha_j d + \beta_j d^2.
\end{align*}
Thus by Lemma~\ref{lem:Haar-expectation} we have
\begin{align*}
\E_{U}\big[  U^{\otimes 2} M (U^\dagger)^{\otimes 2}    \big] & =\sum_j  \E_{U}\big[  U^{\otimes 2} X_j (U^\dagger)^{\otimes 2}    \big]\otimes Y_j\\
& = \sum_j \big(\alpha_j I_{AA'} + \beta_j F_{AA'}\big)\otimes  Y_j.
\end{align*}
Then the desired result follows once we note that 
\begin{align*}
\sum_j \alpha_j Y_j &= \frac{1}{d^2-1} \tr_{AA'}(M) - \frac{1}{d(d^2-1)}\tr_{AA'}(F_{AA'} M),\\
\sum_j \beta_j Y_j &= \frac{1}{d^2-1} \tr_{AA'}(F_{AA'}M) - \frac{1}{d(d^2-1)}\tr_{AA'}(M).
\end{align*}

\end{proof}

\begin{proof}[Proof of Theorem \ref{thm:main-0}]
The proof of part (ii) is immediate once we have (i). The point is that when $\rho_A = I/d_A$, then $\rho_{A_0} = I/d_{A_0}$. In this case we have
$$V_{\alpha}(A; B) = \Big\| \Gamma_{I/d_A}^{-1/\alpha'} \big( \rho_{BA} - \rho_B\otimes I/d_A    \big)  \Big\|_{(1, \alpha)} = d_A^{\frac{1}{\alpha'}} \big\|\rho_{BA} - \rho_B\otimes I/d_A \big\|_{(1, \alpha)} = d_A^{\frac{1}{\alpha'}} W_\alpha(A|B),$$
and similarly $V_\alpha(A_0; B) = d_{A_0}^{\frac{1}{\alpha'}} W_\alpha(A_0|B)$. Using these in (i), part (ii) will be implied. So we focus on the proof of (i). 

Let  $\mathcal U_A\subset \fL(A)$ be the space of unitary operators (equipped with the Haar measure). Define $\Xi: \fL(BA)\rightarrow \ell(\mathcal U_A)\otimes\fL(BA_0)$ by
$$\Xi(M_{AB})(U_A) := \Phi_{A\rightarrow A_0}\big(  U_A M_{BA} U_A^\dagger    \big) - \tr_A(M_{AB}) \otimes \frac{I_{A_0}}{d_{A_0}}.$$
Suppose that for every $1\leq \alpha\leq 2$ we have 
\begin{align}\label{eq:Xi-norm-1}
\|\Xi\|_{(1, \alpha)\rightarrow (1, 1, \alpha)} \leq 2^{\frac 2\alpha-1}\Big(\frac{\gamma-d_A/d_{A_0}}{d_A^2-1}\Big)^{-\frac{1}{\alpha'}}.
\end{align}
That is, for every $M_{AB}$ we have
\begin{align}\label{eq:Xi-norm-2}
\E_{U_A} \Big[\Big\|\Phi_{A\rightarrow A_0}\big(  U_A M_{BA} U_A^\dagger    \big) - \tr_A(M_{BA}) \otimes \frac{I_{A_0}}{d_{A_0}}\Big\|_{(1, \alpha)}\Big] \leq 2^{\frac 2\alpha-1}\Big(\frac{\gamma-d_A/d_{A_0}}{d_A^2-1}\Big)^{-\frac{1}{\alpha'}}\|M_{BA}\|_{(1, \alpha)}.
\end{align}
Then part (i) follows once in the above inequality we put $M_{AB} = \rho_{AB} - \rho_B\otimes I/d_A$. So we just need to prove~\eqref{eq:Xi-norm-1}. Now the point is that the $(1, \alpha)$-norms as well as $(1, 1, \alpha)$-norms for $1\leq \alpha\leq 2$ form an interpolation family~\cite{Pisier}. Thus by the Riesz-Thorin theorem (see Appendix~\ref{app:interpolation}) proving~\eqref{eq:Xi-norm-1} for values of $\alpha=1$ and $\alpha=2$ implies it for all $1\leq \alpha\leq 2$. So in the following, we focus on the proof of~\eqref{eq:Xi-norm-2} for special cases of $\alpha=1$ and $\alpha=2$.

First let $\alpha=1$. Let us writte $\Xi= \Xi_0- \Xi_1$ where 
$\Xi_0(M_{BA})(U_A) = \Phi_{A\rightarrow A_0}\big(U_A M_{BA} U_A^\dagger\big)$ and $\Xi_0(M_{BA})(U_A) = \tr_A(M_{BA})\otimes I_{A_0}/d_{A_0}$.
Then by the triangle inequality we have
$$\big\|\Xi\big\|_{1\rightarrow 1}\leq \big\|\Xi_0\big\|_{1\rightarrow 1} + \big\|\Xi_1\big\|_{1\rightarrow 1}.$$
So it suffices to show that each term on right hand side is at most $1$.
That is, we need to show that for every $M_{AB}$ we have
$$\big\|\Xi_j(M_{BA})\big\|_{1} \leq \|M_{BA}\|_1, \qquad j=0, 1.$$
Since $\Xi_j$ for $j=0, 1$ are completely positive, by~\cite[Corollary 6]{DJKR}, it suffices to prove the above inequality for $M_{BA}\geq 0$ positive semidefinite. For $j=0$ we have
\begin{align*}
\big\|\Xi_0(M_{BA})\big\|_{1}& =\E_U\Big[\Big\|\Phi_{A\rightarrow A_0}\big(  U_A M_{BA} U_A^\dagger    \big)\Big\|_{1}\Big]\\
 & = \E_U\Big[\tr\Big(\Phi_{A\rightarrow A_0}\big(  U_A M_{BA} U_A^\dagger    \big)\Big)\Big]\\ 
&=  \tr\Big(\Phi_{A\rightarrow A_0}\E_U\big(  U_A M_{BA} U_A^\dagger    \big)\Big)\\
& = \tr(M_{AB}) \tr\Big(\Phi\Big(\frac{I_{AB}}{d_Ad_B}\Big)\Big)\\
& = \tr(M_{AB}) \tr\Big( \frac{I_{A_0B}}{d_{A_0}d_B} \Big)\\
& = \|M_{AB}\|_1.
\end{align*}
For $j=1$ we have
\begin{align*}
\big\|\Xi_1(M_{BA})\big\|_{1} & = \big\|\tr_A(M_{BA})\otimes I_{A_0}/d_{A_0}\big\|_1 = \big\|\tr_A(M_{BA})\big\|_1 = \tr(M_{BA}) = \|M_{BA}\|_1.
\end{align*}
We are done with the case $\alpha=1$.

Proof of~\eqref{eq:Xi-norm-2} for $\alpha=2$ needs more work. For given density matrices $\tau_B, \sigma_B$ define 
\begin{align}\label{eq:notation-hat-1}
\hat M_{AB} = \tau_B^{-1/4} M_{AB} \sigma_B^{-1/4}, \quad \hat M_B =\tr_A(\hat M_{AB}) = \tau_B^{-1/4}\tr_A(M_{AB}) \sigma_B^{-1/4},
\end{align}
and for a unitary $U_A$ define
$$M_{A_0B} = \Phi_{A\rightarrow A_0}(U_A M_{AB} U_A^\dagger), \qquad \hat M_{A_0B} = \tau_B^{-1/4} M_{A_0B}\sigma_{B}.$$
Following similar computations as in the proof of Lemma~\ref{lem:V-2-expression} we have
\begin{align*}
\Big\| M_{BA_0} - \tr_{A_0}(M_{BA_0}) &\otimes \frac{I_{A_0}}{d_{A_0}}\Big\|_{(1, 2)}^2 = \inf_{\tau_B, \sigma_B} \tr\Big[  \hat M_{BA_0}   \hat M_{BA_0}^\dagger      \Big] - \frac{1}{d_{A_0}} \tr\big(  \hat M_B \hat M_B^\dagger  \big),
\end{align*}
For fix $\tau_B, \sigma_B$, by Lemma~\ref{lem:swap-trace} we have
\begin{align}
\E_{U_A}\tr \big[ \tau_B^{-1/2}M_{{A_0}B} \sigma_B^{-1/2} M_{{A_0}B}^\dagger\big] & = \E_{U_A} \tr\big[ \Phi(U_A\hat M_{AB}U_A^{\dagger}) \cdot \Phi(U_A\hat M_{AB}^\dagger U_A^{\dagger}) \big]\nonumber\\
& = \E_{U_A} \tr\big[ F_{{A_0}B{A_0}'B'}~\Phi(U\hat M_{AB}U^{\dagger}) \otimes \Phi(U\hat M_{A'B'}^\dagger U^{\dagger}) \big]\nonumber\\
& = \E_{U_A} \tr\big[ \Phi^{*}\otimes \Phi^*(F_{{A_0}B{A_0}'B'})~U\hat M_{AB}U^{\dagger}\otimes U\hat M_{A'B'}^\dagger U^{\dagger} \big],
\label{eq:proof-01}
\end{align}
where $\Phi^*_{A_0\rightarrow A}$ is the adjoint of $\Phi$ with respect to the Hilbert-Schmidt inner product. 
Now using Corollary~\ref{cor:Haar-expectation-2} we compute
\begin{align*}
\E_{U}\big[U^{\otimes 2}\hat M_{AB}\otimes \hat M_{A'B'}^\dagger (U^\dagger)^{\otimes 2}\big] & = \frac{1}{d_A^2-1} \Big[
I_{AA'} \otimes \hat M_{B}\otimes\hat M_{B'}^\dagger -\frac{1}{d_A} I_{AA'}\otimes \hat \mu_{BB'}\nonumber\\
& \qquad \qquad \quad +F_{AA'}\otimes \hat \mu_{BB'}-\frac{1}{d_A} F_{AA'}\otimes \hat M_B\otimes \hat M_{B'}^\dagger
\Big],
\end{align*}
where 
\begin{align}\label{eq:hat-mu-03}
\hat \mu_{BB'}=\tr_{AA'}[F_{AA'} \hat M_{AB}\otimes \hat M_{A'B'}^\dagger].
\end{align}
Therefore,
\begin{align*}
\E_{U_A}\tr \big[ \tau_B^{-1/2}M_{{A_0}B} \sigma_B^{-1/2} M_{{A_0}B}\big] & = \frac{1}{d_A^2-1}\tr\bigg[
\Phi^*\otimes \Phi^*(F_{{A_0}BA_0'B'}) \bigg(I_{AA'} \otimes \hat M_{B}\otimes\hat M_{B'}^\dagger -\frac{1}{d_A} I_{AA'}\otimes \hat \mu_{BB'}\\
& \qquad \qquad \qquad\qquad +F_{AA'}\otimes \hat \mu_{BB'}-\frac{1}{d_A} F_{AA'}\otimes \hat M_B\otimes \hat M_{B'}^\dagger\bigg)\bigg]\\
& = \frac{1}{d_A^2-1}\tr\bigg[
F_{{A_0}BA_0'B'} \bigg(\Phi^{\otimes 2}(I_{AA'}) \otimes \hat M_{B}\otimes\hat M_{B'}^\dagger -\frac{1}{d_A} \Phi^{\otimes 2}(I_{AA'})\otimes \hat \mu_{BB'}\\
& \qquad \qquad \qquad\qquad +\Phi^{\otimes 2}(F_{AA'})\otimes \hat \mu_{BB'}-\frac{1}{d_A} \Phi^{\otimes 2}(F_{AA'})\otimes \hat M_B\otimes \hat M_{B'}^\dagger\bigg)\bigg]\\
& =\frac{1}{d_A^2-1}\bigg[ \frac{d_{A}^2}{d_{A_0}}\tr\big(\hat M_{B}\hat M_{B}^\dagger\big) -\frac{d_A}{d_{A_0}} \tr\big(\hat M_{AB}\hat M_{AB}^\dagger \big)\\
& \qquad \qquad \qquad + \gamma \tr\big(\hat M_{AB}\hat M_{AB}^\dagger \big) - \frac{\gamma}{d_A} \tr\big(\hat M_{B}\hat M_{B}^\dagger\big)
\bigg]\\
& = \frac{1}{d_A^2-1}\bigg[ \Big(\gamma-\frac{d_A}{d_{A_0}}\Big) \tr\big(\hat M_{AB}\hat M_{AB}^\dagger \big) + \Big(\frac{d_A^2}{d_{A_0}} - \frac{\gamma}{d_A}\Big)\tr\big(\hat M_{B}\hat M_{B}^\dagger\big)
\bigg].
\end{align*}
Therefore, using the convexity of the square function we have
\begin{align*}
\Big(\E_U \Big[\Big\| M_{BA_0} - \tr_{A_0}(M_{BA_0})& \otimes \frac{I_{A_0}}{d_{A_0}}\Big\|_{(1, 2)}\Big]\Big)^2  \leq \E_U\Big[\Big\| M_{BA_0} - \tr_{A_0}(M_{BA_0}) \otimes \frac{I_{A_0}}{d_{A_0}}\Big\|_{(1, 2)}^2\Big] \\
& = \frac{1}{d_A^2-1}\bigg[ \Big(\gamma-\frac{d_A}{d_{A_0}}\Big) \tr\big(\hat M_{AB}\hat M_{AB}^\dagger \big) + \Big(\frac{d_A^2}{d_{A_0}} - \frac{\gamma}{d_A}\Big)\tr\big(\hat M_{B}\hat M_{B}^\dagger\big)- \frac{1}{d_{A_0}} \tr\big(\hat M_{B}\hat M_{B}^\dagger\big)\\
& = \frac{\gamma-d_A/d_{A_0}}{d_A^2 -1}\bigg[ \tr\big(\hat M_{AB}\hat M_{AB}^\dagger \big) - \frac{1}{d_A}\tr\big(\hat M_{B}\hat M_{B}^\dagger\big)
\bigg]\\
& \leq \frac{\gamma-d_A/d_{A_0}}{d_A^2 -1}\tr\big(\hat M_{AB}\hat M_{AB}^\dagger \big).
\end{align*}
Taking infimum over the choice of $\tau_B, \sigma_{B}$ we find that
\begin{align*}
\E_U \Big[\Big\| M_{BA_0} - \tr_{A_0}(M_{BA_0}) \otimes \frac{I_{A_0}}{d_{A_0}}\Big\|_{(1, 2)}\Big] \leq \Big(\frac{\gamma-d_A/d_{A_0}}{d_A^2 -1}\Big)^{1/2} \Big\| M_{BA} \Big\|_{(1, 2)}.
\end{align*}

\end{proof}

\begin{proof}[Proof of Theorem \ref{thm:main-3}]
The proof is similar to that of Theorem~\ref{thm:main-0}. Again part (ii) is an immediate consequence of part (i). Also, for part (i) it suffices to prove that for every 
$$M_{AB} = \sum_a \ket a\bra a\otimes N_a,$$
the inequality 
$$\E_f\Bigg[   \Big\|M_{BA_0} - \tr_{A_0}(M_{BA_0})\otimes \frac{I_{A_0}}{|\mathcal A_0|}\Big\|_{(1, \alpha)}   \Bigg] \leq 2^{\frac 2 \alpha-1} \|M_{BA}\|_{(1, \alpha)},$$
holds for all $1\leq \alpha\leq 2$, where 
$$M_{A_0B} = \sum_a \ket{f(a)}\bra{f(a)}\otimes N_a,$$
and the average is with respect to the uniform distribution over all $|\mathcal C|$-to-$1$ functions $f$.  Moreover, by the Riesz-Thorin theorem it suffices to prove the above inequality for $\alpha=1$ and $\alpha=2$.

Again the proof for $\alpha=1$ is a simple consequence of the triangle inequality which we do not repeat.  For $\alpha=2$ we first use
$$\Bigg(\E_f\Bigg[   \Big\|M_{BA_0} - \tr_{A_0}(M_{BA_0})\otimes \frac{I_{A_0}}{|\mathcal A_0|}\Big\|_{(1, 2)}   \Bigg]\Bigg)^2 \leq \E_f\Bigg[   \Big\|M_{BA_0} - \tr_{A_0}(M_{BA_0})\otimes \frac{I_{A_0}}{|\mathcal A_0|}\Big\|_{(1, 2)}^2   \Bigg],$$
and then to estimate the left hand side we follow similar steps as in the proof of Theorem~\ref{thm:main-0}. We only need to replace the average with respect to the Haar measure with another average.

A uniformly random $|\mathcal C|$-to-$1$ function $f:\mathcal A\rightarrow \mathcal A_0$ can be chosen as follows: let $\pi$ be a uniformly random permutation on $\mathcal A=\mathcal A_0\times\mathcal C$. Then $f(a)=\pi_0(a)$ is a uniformly random $|\mathcal C|$-to-$1$ function where by $\pi_0(a)$ we mean the first coordinate of $\pi(a)\in \mathcal A=\mathcal A_0\times \mathcal C$. With this choice of $f$, the operator $M_{A_0B}$ can be written as
$$M_{A_0B} = \tr_C\Big(  (U_\pi\otimes I_B) M_{AB}(U_{\pi}^\dagger\otimes I_B)     \Big),$$
where $U_\pi$ is the permutation matrix associated with $\pi$. Now we can follow the proof of Theorem~\ref{thm:main-0}. Fixing $\sigma_B, \tau_B$, using notations in~\eqref{eq:notation-hat-1} and replacing $U$ with $U_\pi$, equation~\eqref{eq:proof-01} is still valid for the choice of $\Phi$ being the partial trace with respect to $C$. Nevertheless, instead of Corollary~\ref{cor:Haar-expectation-2} we should use
\begin{align}
\E_{\pi}\big[U_\pi^{\otimes 2}\hat M_{AB}\otimes \hat M_{A'B'}^\dagger (U_\pi^\dagger)^{\otimes 2}\big] & = \frac{1}{|\mathcal A|^2 - |\mathcal A|} I_{AA'} \otimes \Big( \hat{M}_B\otimes \hat{M}_{B'}^{\dagger} - \hat{\mu}_{BB'}  \Big) \nonumber\\
&\qquad + J_{AA'}\otimes \bigg( \Big(\frac{1}{|\mathcal A|^2 - |\mathcal A|} +\frac{1}{|\mathcal A|}\Big)\hat \mu_{BB'}   - \frac{1}{|\mathcal A|^2 -|\mathcal A|} \hat M_B\otimes \hat M_{B'}^\dagger   \bigg),\label{eq:pi-average-0}
\end{align}
where $\hat\mu_{BB'}$ is given by~\eqref{eq:hat-mu-03} and 
$$J_{AA'} = \sum_{a} \ket {a}\bra a\otimes \ket a\bra a. $$
This equation can be proven using 
\begin{align*}
\E_\pi\Big[ U_\pi^{\otimes 2} \ket a\bra a\otimes \ket {a'}\bra{a'}U_\pi^{\otimes 2}\Big] & = \begin{cases}
\frac{1}{|\mathcal A|^2 - |\mathcal A|} (I_{AA'} - J_{AA'}) & a\neq a', 
\\
\frac{1}{|\mathcal A|} J_{AA'} & a=a'.
\end{cases}
\end{align*}
Then the proof follows by putting~\eqref{eq:pi-average-0} in~\eqref{eq:proof-01}, using $F_{AA'}J_{AA'} =J_{AA'}$, $\tr J_{AA'} = |\mathcal A|$, and a straightforward computation.

\end{proof}

The ratio $|\mathcal{C}|^{-\frac{1}{\alpha'}}$  in Theorem \ref{thm:main-3} is asymptotically tight up to a constant as the following example shows:
\begin{example} \label{exmple1} 
Let $p_A$ be uniform on $\mathcal{A}$, and $p_{B|A}$ be a classical erasure channel, \emph{i.e.,} the alphabet set of $B$ is $\mathcal B=\{\mathsf e\}\cup\mathcal{A}$, and for all $a\in \mathcal A$,
$$p_{B|A}(\mathsf e|a)=\epsilon, \qquad p_{B|A}(a|a)=1-\epsilon,$$
and $p_{B|A}(a'|a)=0$ if $a'\neq a$.
Then a direct calculation shows that
\begin{align*}
W_\alpha(A|B)&=(1-\epsilon)\left[\Big(1-\frac{1}{|\mathcal A|}\Big)^{\alpha} + \Big(|\mathcal A|-1\Big)\frac{1}{|\mathcal A|^\alpha}
\right]^{\frac{1}{\alpha}}
\\
V_\alpha(A;B)&=|\mathcal{A}|^{\frac{1}{\alpha'}}\cdot W_\alpha(A|B).
\end{align*}
Furthermore, for any $|\mathcal C|$-to-1 function $f:\mathcal A\rightarrow \mathcal A_0$ with
$A_0=f(A)$ we have
\begin{align*}
W_\alpha(A_0|B)&=(1-\epsilon)\left[\Big(1-\frac{1}{|\mathcal A_0|}\Big)^{\alpha} + \Big(|\mathcal A_0|-1\Big)\frac{1}{|\mathcal A_0|^\alpha}
\right]^{\frac{1}{\alpha}}
\\
V_\alpha(A_0;B)&=|\mathcal{A}_0|^{\frac{1}{\alpha'}}\cdot W_\alpha(A_0|B)
\end{align*}
Hence, for fixed $|\mathcal{C}|=\frac{|\mathcal A|}{|\mathcal A_0|}$ when $|\mathcal{A}|$ tends to infinity we have
$$\lim_{|\mathcal{A}|\rightarrow \infty} \min_{f: \text{\rm{$|\mathcal{C}|$-to-1}}}\frac{W_\alpha(A_0|B)}{W_\alpha(A|B)}=1,$$
$$\lim_{|\mathcal{A}|\rightarrow \infty} \min_{f: \text{\rm{$|\mathcal{C}|$-to-1}}}\frac{V_\alpha(A_0 ;B)}{V_\alpha(A;B)}=|\mathcal{C}|^{-\frac{1}{\alpha'}},$$
for any $\alpha>1$. Thus Theorem~\ref{thm:main-3} is asymptotically tight up to a constant.
\end{example}

\section{Applications in secrecy} \label{sec:app}

The common practice in the information theoretic security literature is to use mutual information and conditional entropy to measure the amount of leakage to an adversary. In particular, for a message $A$ and adversary's side information $B$, the conditional entropy $H(A|B)=H(A)-I(A;B)$, called the \emph{equivocation}, is the most favorite measure. When $I(A;B)$ is small, or equivalently $H(A|B)$ is close to $H(A)$, by Pinsker's inequality\footnote{Here we use $I(A;B)= D(\rho_{AB}\| \rho_A\otimes \rho_B)$.} the trace distance between $\rho_{AB}$ and $\rho_A\otimes \rho_B$ is small too. Nevertheless as will be shown later in this section, mutual information has some disadvantages as a secrecy parameter. 

 Here we suggest the use of $V_\alpha(A;B)$ or $I_\alpha(A;B)$  as a replacement of mutual information for measuring secrecy.\footnote{Note that by Proposition~\ref{prop2} the $\alpha$-R\'enyi mutual information and $V_\alpha(A; B)$ are related quantities.} The point is that, by Proposition~\ref{prop:non-decreasing} when $V_\alpha(A;B)$ is small, again $\rho_{AB}$ and $\rho_A\otimes \rho_B$ are close in trace distance. Moreover, our decoupling theorems in the previous section can be used to prove more effective exponentially small bounds on $V_\alpha$.   

There have been a few recent works that provide further justifications for our suggestion. Controlling the R\'enyi mutual information of order infinity $I_\infty(A;B)$ finds an operational justification in~\cite{Kamath}. Since the R\'enyi mutual information is non-decreasing as a function of its order, any upper bound on R\'enyi mutual information of order infinity yields a bound on R\'enyi mutual information of other orders. 
Moreover, authors in~\cite{yu2017r} study the secure capacity of the wiretap channel when the security is measured by the R\'enyi divergence.\footnote{ There are also other approaches for defining security metrics (e.g. see \cite{Issa, Kamath, Li, Cuff,Weinberger,bellare2012semantic,dodis2005entropic}) based on different correlation measures. 
}

In the following, we study the problem of \emph{privacy amplification} and present an application of our new correlation measure and decoupling theorems there. Moreover, we discuss the advantages of $V_\alpha$ as a secrecy parameter over mutual information in this problem. Also, in Appendix~\ref{app:semantic-sec} we show a connection between $V_\alpha$ for $\alpha=\infty$ and \emph{semantic security}.

\subsection{Privacy amplification}
Suppose that a party has a secret key $A$ of $k$ uniform random bits, i.e., $\mathcal A=\{0,1\}^k$ and $p_A$ is the uniform distribution over $\mathcal A$. However, the key has partially leaked to an adversary who has access to a quantum register $B$ which is correlated with the secret key $A$ according to some known $\rho_{AB}$ with
$$\rho_{AB} = \frac{1}{2^k}\sum_{a\in \{0,1\}^k} \ket a\bra a\otimes \rho_a.$$
 Level of security of the key depends on the amount of information obtainable by the eavesdropper and may be measured by a correlation metric between the secret key $A$ and adversary's subsystem~$B$.

Suppose that we want to decrease the correlation between $B$ and the secret key at the cost of reducing the length of the key (privacy amplification). More precisely, suppose that we have a  function $f: \mathcal A=\{0,1\}^{k}\rightarrow \mathcal A_0=\{0,1\}^{k-1}$ such that $A_0=f(A)$ is uniform over $\{0,1\}^{k-1}$. Then by replacing $A$ with $A_0=f(A)$, and reducing the number of bits in the key, we expect to reduce the amount of correlation between the key and $B$. Indeed, 
we are interested in finding a suitable function $f$ such that the correlation between the distilled secret key $A_0$ and $B$ is minimized. 

Measuring the correlation between the key and $B$ in terms of $V_\alpha$ for $\alpha\in (1,2]$ and using Theorem~\ref{thm:main-3}, if we are willing to reduce the length of key from $k$ to $k-\ell$ bits, there exists a  $2^\ell$-to-1 function $f:\{0,1\}^k\rightarrow \{0,1\}^{k-\ell}$ such that 
\begin{align}\label{eqn:ReN}
V_\alpha(A_0;B) \leq 
2^{\frac 2\alpha-1} 2^{-\frac{\ell}{\alpha'}}V_\alpha(A;B),
\end{align}
where $A_0=f(A)$ is still uniform. Therefore, the correlation between the key and $B$ reduces exponentially in $\ell$. Observe that from Proposition~\ref{prop:non-decreasing} and Proposition~\ref{prop2} we obtain that there exists a  $2^\ell$-to-1 function such that for $A_0=f(A)$ we have
\begin{align}
\Big \| \rho_{A_0B}-\frac{I_{A_0}}{2^{k-\ell}}\otimes \rho_B    \Big\|_{1} \leq 2^{-\frac{\ell}{\alpha'}} \big(2^{\frac{1}{\alpha'} I_\alpha(A; B)}+1\big).
\label{eqn:ReN2}
\end{align}
When $\alpha=2$, using~\eqref{eq:v_RI2_relation-new} the above bound can be improved to  
\begin{align}
\Big \| \rho_{A_0B}-\frac{I_{A_0}}{2^{k-\ell}}\otimes \rho_B    \Big\|_{1} \leq 2^{-\ell/2} 2^{\frac12 I_2(A;B) }
\label{eqn:ReN3}
\end{align}
when $B$ is classical. 

Inequality~\eqref{eqn:ReN3} can also be obtained by the result of Renner on privacy amplification \cite[Theorem 5.5.1]{Renner} for classical-quantum systems (see also~\cite{RK05}).  Nevertheless,~\eqref{eqn:ReN} is stronger than Renner's result, at least in the fully classical case. While Renner's result works only for $\alpha=2$, equation~\eqref{eqn:ReN} allows for all orders $\alpha\in(1,2]$. On the other hand, Renner's result is more general because it does not assume uniform distribution on the random variable $A$. 

A closely related result is in Hayashi's work on privacy amplification~\cite[Theorem 1]{hayashi2011exponential}. Even though this result is stated in terms of mutual information, the key step in its proof is the following theorem (see equation~(29) of~\cite{hayashi2011exponential}). This theorem should be compared with part (i) of Theorem~\ref{thm:main-3}.

\begin{theorem}[\cite{hayashi2011exponential}] 
Let $\mathcal A=\mathcal A_0\times \mathcal C$ and $\mathcal{B}$ be arbitrary finite sets, and let $p_{AB}$ be an arbitrary bipartite distribution. Then for any $\alpha\in(1,2]$ we have
\begin{align}
\mathbb{E}_f \left[2^{-\tilde H_{\alpha}(A_0|B)}\right]\leq 2^{-\tilde H_\alpha(A|B)}+\frac{1}{|\mathcal{A}_0|^{\alpha-1}},
\label{gdferd}
\end{align}
where $A_0=f(A)$ and the expectation is taken with respect to the uniform distribution over all $|\mathcal C|$-to-$1$ functions $f$ (or over a class of two-universal $|\mathcal C|$-to-$1$ hash functions). Here, the following definition of the conditional R\'enyi entropy is utilized:
$$\tilde H_{\alpha}(A|B)=-\log \sum_{a,b} p_{B}(b)p_{A|B}(a|b)^\alpha.$$
\end{theorem}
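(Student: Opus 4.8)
The plan is to fix the side information, reduce everything to a statement about a single random hashing of a probability vector, and then handle the endpoints $\alpha=2$ and $\alpha=1$ before interpolating. Writing $q_a=p_{A|B}(a|b)$ for a fixed $b$ and noting $p_{A_0|B}(a_0|b)=\sum_{a:f(a)=a_0}q_a$, we have $2^{-\tilde H_\alpha(A_0|B)}=\sum_b p_B(b)\sum_{a_0}\big(\sum_{a:f(a)=a_0}q_a\big)^\alpha$. Since the outer average over $b$ with weights $p_B(b)$ is benign, it suffices to prove, for every fixed distribution $q=(q_a)$ with $\sum_a q_a=1$, that $\mathbb{E}_f\big[\sum_{a_0}(\sum_{a:f(a)=a_0}q_a)^\alpha\big]\le \sum_a q_a^\alpha+|\mathcal A_0|^{-(\alpha-1)}$; summing this against $p_B(b)$ then yields the theorem. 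The only properties of the random $f$ I would use are $\Pr_f[f(a)=a_0]=1/|\mathcal A_0|$ and the collision bound $\Pr_f[f(a)=f(a')]\le 1/|\mathcal A_0|$ for $a\ne a'$, which holds for a uniform $|\mathcal C|$-to-$1$ function (there the probability equals $(|\mathcal C|-1)/(|\mathcal A|-1)$) and for a two-universal family.

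First I would settle $\alpha=2$, where the bound is exact. Expanding the square gives $\sum_{a_0}(\sum_{a:f(a)=a_0}q_a)^2=\sum_{a,a':f(a)=f(a')}q_aq_{a'}$, so taking the expectation and peeling off the diagonal $a=a'$ yields $\mathbb{E}_f[\cdots]=\sum_a q_a^2+\sum_{a\ne a'}\Pr_f[f(a)=f(a')]\,q_aq_{a'}\le \sum_a q_a^2+\tfrac{1}{|\mathcal A_0|}\sum_{a\ne a'}q_aq_{a'}\le \sum_a q_a^2+\tfrac{1}{|\mathcal A_0|}$, using $\sum_{a\ne a'}q_aq_{a'}\le(\sum_a q_a)^2=1$. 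This is exactly the claimed bound at $\alpha=2$. The endpoint $\alpha=1$ is trivial: the left side is $\mathbb{E}_f[1]=1$ and the right side is $1+1=2$.

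For general $\alpha\in(1,2)$ the plan is to reproduce the same \emph{diagonal $+$ collision} split, but the step that was exact at $\alpha=2$ must be replaced by an inequality, and this is where the real work lies. The hard part is that the two obvious pointwise devices are each too lossy in one regime. With $S_{a_0}=\sum_{a:f(a)=a_0}q_a$, using subadditivity of $t\mapsto t^{\alpha-1}$ gives $S_{a_0}^\alpha\le \sum_{a:f(a)=a_0} q_a^\alpha+\sum_{a\ne a':f(a)=f(a')=a_0}q_a^{\alpha-1}q_{a'}$, which after averaging produces a collision term of order $\tfrac{1}{|\mathcal A_0|}\sum_a q_a^{\alpha-1}\le \tfrac{1}{|\mathcal A_0|}|\mathcal A|^{2-\alpha}=|\mathcal A_0|^{-(\alpha-1)}|\mathcal C|^{2-\alpha}$, overshooting the target by the factor $|\mathcal C|^{2-\alpha}$ (it is loose precisely for spread-out $q$). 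Conversely, the second-order inequality $x^\alpha\le \mu^\alpha+\alpha\mu^{\alpha-1}(x-\mu)+(\alpha-1)\mu^{\alpha-2}(x-\mu)^2$ (valid for $x\ge0$, $1\le\alpha\le2$) with $\mu=\mathbb{E}_f[S_{a_0}]=1/|\mathcal A_0|$ converts the collision term into a variance, $\mathbb{E}_f[\sum_{a_0}S_{a_0}^\alpha]\le |\mathcal A_0|^{1-\alpha}+(\alpha-1)|\mathcal A_0|^{2-\alpha}\sum_{a_0}\Var(S_{a_0})$, which is sharp for spread-out $q$ but diverges for concentrated $q$ (where $S_{a_0}$ is far from its mean). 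The route I would therefore pursue is to interpolate between the clean endpoints $\alpha=1$ and $\alpha=2$, mirroring the Riesz--Thorin strategy used for Theorem~\ref{thm:main-3} applied to the fibre-summation operator $q\mapsto(\sum_{a:f(a)=a_0}q_a)_{a_0}$; failing a direct interpolation of this nonlinear power-sum functional, I would split the atoms of $q$ into heavy and light parts, bounding the heavy part by the subadditivity estimate and the light part by the second-order estimate, and choosing the threshold so that each contribution is at most a constant times $|\mathcal A_0|^{-(\alpha-1)}$. This single regime-robust estimate for the collision term is the crux, and it is exactly the content of Hayashi's convexity lemma underlying equation~(29).
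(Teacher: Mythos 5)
Your reduction to a fixed $b$ and your $\alpha=2$ computation are both correct, but the entire content of the theorem is the range $\alpha\in(1,2)$, and there your proposal stops exactly where the proof has to be made: you leave the ``regime-robust collision estimate'' as an unproved crux. Neither of your fallback ideas can close it. Riesz--Thorin interpolation (as used for Theorem~\ref{thm:main-3}) applies to operator norms of a \emph{linear} map and yields \emph{multiplicative} bounds, whereas here the error term $|\mathcal A_0|^{-(\alpha-1)}$ is \emph{additive} and $q\mapsto\sum_{a_0}S_{a_0}^{\alpha}$ is not the norm of a linear image of $q$, so there is no operator to interpolate; moreover your $\alpha=1$ endpoint is slack by a factor of $2$, which any interpolation would propagate. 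A heavy/light split, even if made to work, would produce a constant multiple of $|\mathcal A_0|^{-(\alpha-1)}$ rather than the stated constant $1$. (Note the paper itself does not prove this statement; it cites Hayashi, so the proposal must stand on its own.)

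The missing idea is elementary once seen, and it repairs precisely the loss you diagnosed in your subadditivity attempt. For fixed $b$ and $f$, with $S_{a_0}=\sum_{a:f(a)=a_0}q_a$ and $T_a=\sum_{a'\neq a:\,f(a')=f(a)}q_{a'}$, write
\begin{align*}
\sum_{a_0}S_{a_0}^{\alpha} \;=\; \sum_{a_0}S_{a_0}^{\alpha-1}\sum_{a:f(a)=a_0}q_a
\;=\;\sum_a q_a\,(q_a+T_a)^{\alpha-1}
\;\le\;\sum_a q_a^{\alpha} \;+\;\sum_a q_a\,T_a^{\alpha-1},
\end{align*}
using subadditivity of $t\mapsto t^{\alpha-1}$ only to separate the single atom $q_a$ from the block $T_a$, and \emph{not} expanding $T_a^{\alpha-1}$ any further. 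Then take the expectation over $f$ and apply Jensen's inequality to the concave function $t\mapsto t^{\alpha-1}$:
\begin{align*}
\mathbb{E}_f\big[T_a^{\alpha-1}\big]\;\le\;\big(\mathbb{E}_f[T_a]\big)^{\alpha-1}
\;=\;\Big(\sum_{a'\neq a}\Pr\big[f(a')=f(a)\big]\,q_{a'}\Big)^{\alpha-1}
\;\le\;|\mathcal A_0|^{-(\alpha-1)},
\end{align*}
where the last step uses the collision bound $\Pr[f(a')=f(a)]\le 1/|\mathcal A_0|$ that you already verified. Summing $\sum_a q_a\,|\mathcal A_0|^{-(\alpha-1)}=|\mathcal A_0|^{-(\alpha-1)}$ and averaging over $b$ gives exactly~\eqref{gdferd}, with the correct constant, for every $\alpha\in(1,2]$. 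The point is that keeping the collision mass $T_a$ intact inside the concave power and pushing the expectation \emph{inside} it via Jensen is what eliminates the factor $\sum_a q_a^{\alpha-1}\le|\mathcal A|^{2-\alpha}$ that ruined your pointwise-subadditivity route; your variance route fails for the complementary reason, as it linearizes around the mean where no concentration is available.
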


Hayashi uses a different definition of conditional R\'enyi entropy than the one used in this paper; Comparing to~\eqref{eqn-cond-entr-ren} there is no minimization in the definition of~$\tilde H_\alpha(A|B)$.  Furthermore, our theorem does not have an additive term like $\frac{1}{|\mathcal{A}_0|^{\alpha-1}}$ as in \eqref{gdferd}. 
We should also remark that, in our results, similar to Hayashi's, the uniform distribution over all $|\mathcal C|$-to-$1$ functions can be replaced with the uniform distribution over a class of two-universal hash functions simply because in the proofs we only use the first and second moments of the underlying distribution on the functions.

\subsection{Mutual information versus $V_\alpha$}
We mentioned above that Shannon used mutual information as a secrecy parameter, while we propose to use $V_\alpha$ for $\alpha\in(1,2]$ instead. Here we discuss this in more details.  
Let us start with a result similar to Theorem~\ref{thm:main-3} for mutual information.

\begin{theorem}
\label{theorem_Shearer}
Let $\mathcal A=\{0,1\}^{k}$, and let $p_{AB}$ be such that $p_A$ is the uniform distribution over $\{0,1\}^{k}$. Then there exists a $2$-to-$1$ function $f: \{0,1\}^{k}\rightarrow \{0,1\}^{k-1}$ such that for $A_0=f(A)$ we have
$$I(A_0; B) \leq \frac{k-1}{k}\,I(A; B).$$
Furthermore, the ratio $(k-1)/k$ in the above statement is optimal and cannot be replaced with a smaller constant that depends only on $k$ (and not on $p_{AB}$).
\end{theorem}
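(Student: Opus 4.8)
The plan is to prove the existence part by an averaging argument over a convenient family of $2$-to-$1$ functions, and to prove optimality by a single explicit example. Since $p_A$ is uniform, I identify $A$ with $k$ i.i.d.\ uniform bits $A=(A_1,\dots,A_k)$. For each coordinate $i\in\{1,\dots,k\}$ let $f_i$ be the function that deletes the $i$-th bit, so that $A_{\setminus i}:=f_i(A)=(A_1,\dots,A_{i-1},A_{i+1},\dots,A_k)$. Each $f_i$ is $2$-to-$1$ and $A_{\setminus i}$ is uniform over $\{0,1\}^{k-1}$. The goal is to show that the average over $i$ of $I(A_{\setminus i};B)$ is at most $\frac{k-1}{k}I(A;B)$, which immediately yields a single coordinate $i$ meeting the claimed bound with $f=f_i$.

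First I would expand each term using $H(A_{\setminus i})=k-1$ and $H(A)=k$, giving $I(A_{\setminus i};B)=(k-1)-H(A_{\setminus i}\mid B)$ and $I(A;B)=k-H(A\mid B)$. Substituting, the desired inequality $\frac1k\sum_i I(A_{\setminus i};B)\le\frac{k-1}{k}I(A;B)$ reduces, after cancelling the constant $k-1$, to the Shearer-type entropy inequality
$$\sum_{i=1}^k H(A_{\setminus i}\mid B)\ \ge\ (k-1)\,H(A\mid B).$$
This is the crux of the argument and explains the name of the theorem. In this leave-one-out case it can be proved directly: writing $H(A_{\setminus i}\mid B)=H(A\mid B)-H(A_i\mid A_{\setminus i},B)$ turns the claim into $\sum_i H(A_i\mid A_{\setminus i},B)\le H(A\mid B)$. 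Since conditioning only reduces entropy, $H(A_i\mid A_{\setminus i},B)\le H(A_i\mid A_1,\dots,A_{i-1},B)$, and summing the right-hand sides and applying the chain rule gives exactly $H(A_1,\dots,A_k\mid B)=H(A\mid B)$. Averaging then produces an index $i$ with $I(A_{\setminus i};B)\le\frac{k-1}{k}I(A;B)$, establishing existence.

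For optimality I would exhibit a distribution forcing the ratio $\frac{k-1}{k}$. Take $B=A$ (perfect leakage), so $I(A;B)=H(A)=k$. Then for \emph{any} $2$-to-$1$ function $f$ the variable $A_0=f(A)$ is a deterministic function of $B$, hence $H(A_0\mid B)=0$ and $I(A_0;B)=H(A_0)=k-1=\frac{k-1}{k}I(A;B)$. Thus for this $p_{AB}$ every $2$-to-$1$ function attains the ratio $\frac{k-1}{k}$ and none does better, so no constant strictly smaller than $\frac{k-1}{k}$ (depending only on $k$) can hold for all $p_{AB}$. I expect the only subtle point to be the Shearer inequality; reducing it to the chain rule together with ``conditioning reduces entropy'' keeps the whole argument elementary, while the optimality example is immediate.
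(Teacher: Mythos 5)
Your proof is correct and follows essentially the same route as the paper: averaging $I(A_{\setminus i};B)$ over the $k$ bit-dropping maps, using uniformity of $p_A$ to reduce the claim to a Shearer-type inequality on conditional entropies, and taking $B=A$ to show optimality of the ratio $(k-1)/k$. The only difference is that the paper invokes Shearer's lemma as a citation, whereas you prove the needed leave-one-out case directly from the chain rule and ``conditioning reduces entropy,'' which makes the argument self-contained but is not a different approach.
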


\begin{proof}
Let us denote the $i$-th bit of $A$ by $A_i$, so that $A=(A_1, \dots, A_k)$. We let $f$ to be the function that drops one bit of $A$. Indeed, we let $A_0=A_S=f_{S}(A)$ where $S$ is some $(k-1)$-element subset of $\{1, \dots, k\}$, and $A_S$ is the subsequence of its associated bits. 
By Shearer's lemma~\cite{chung1986some} we have 
\begin{align*}
\frac{1}{k}\sum_{S:\, \lvert S\rvert=k-1}H(A_S|B)&\geq \frac{k-1}{k}H(A|B).
\end{align*} 
Since $I(A;B)=k-H(A|B)$ and $I(A_S;B)=(k-1)-H(A_S|B)$ for any subset $S$ of size $k-1$, we obtain
\begin{align*}
\frac{1}{k}\sum_{S:\, \lvert S\rvert=k-1}I(A_S;B)&\leq \frac{k-1}{k}I(A;B).
\end{align*}
Therefore, there exists a subset $S$ satisfying $I(A_S; B)\leq \frac{k-1}{k}I(A;B)$. 

To verify the optimality of $(k-1)/k$, consider the case of $B=A$. In this case we have $I(A;B)=k$ and $I(A_0;B)=k-1$ for any $2$-to-$1$ function $f$. As another example we can also consider the erasure channel of Example~\ref{exmple1}. In this case, $I(A;B)=k(1-\epsilon)$ and $I(A_0;B)=(k-1)(1-\epsilon)$ for any such~$f$.

\end{proof}

The ratio $(k-1)/k$ in the above theorem, is not desirable since it is close to $1$ for large values of $k$. Furthermore, if we repeatedly use the above theorem to reduce the message-length from $k$ to $k-\ell$, the product
$\prod_{i=k-\ell+1}^k (i-1)/i$ equals $(k-\ell)/k$, which is linear in $\ell$. As a result, if we convert the bound on mutual information to a bound on the total variation distance between $p_{A_0B}$ and $p_{A_0}\times p_B$  (by expressing mutual information in terms of the Kullback--Leibler divergence  and applying Pinsker's inequality), we do not get an exponential decrease of the total variation distance in terms of $\ell$. 
This comparison illustrates the advantage of utilizing the proposed new measure of correlation $V_\alpha$ for privacy amplification.

\section{Bounding the random coding exponent}\label{sec:exponent}

Decoupling type theorems are widely used in quantum information theory for proving achievability results, e.g., in state merging, the mother protocol, and channel coding, see~\cite{DBWR14} and reference therein. Since Theorem~\ref{thm:main-0} works for all $1\leq \alpha\leq 2$ and not just $\alpha=2$, as in~\cite{Sharma15} we can use our decoupling theorems not only for proving achievability type results but also for proving bounds on the \emph{error exponents}. In the following, we illustrate this application via the problem of \emph{entanglement generation} over a noisy quantum channel and refer to~\cite{Sharma15} for other such examples.

While decoupling is a quantum phenomenon, decoupling-type theorems have also been proven useful in classical information theory. The OSRB method of~\cite{yassaee2014achievability} provides some techniques for proving achievability type results based on decoupling.  Thus our decoupling theorems can be used to prove achievability results in classical network information theory as well. Moreover, as discussed above, we can state effective bounds on the error exponents of such achievability results. In the following, we take this path for the problem of secure communication over wiretap channels and establish an interesting connection between the secrecy exponent for this problem and R\'enyi mutual information according to Csisz\'ar's proposal.

\subsection{Entanglement generation}
Entanglement generation via a noisy quantum channel is the problem of generating a maximally entangled state of the highest possible dimension between two parties Alice and Bob who are connected by a noisy quantum channel $\cN_{A\rightarrow B}$ from Alice to Bob. To this end, Alice prepares a bipartite state $\rho_{RA}$ send the subsystems $A$ via the channel to Bob. Thus Bob receives the subsystem $B$ of $\mathcal I_R\otimes \cN(\rho_{RA})$. He then applies a decoding map $\mathcal D_{B\to R'}$ and prepares $\mathcal I_R\otimes (\mathcal D\circ \cN)(\rho_{RA})$. The goal of the protocol is that the latter state to be close to a maximally entangled state. A $(\log m, \epsilon)$-code for this problem, with rate $\log m$ and error $\epsilon$, is a choice of the starting state $\rho_{RA}$ and the decoding map $\mathcal D_{B\to R'}$ such that 
$$F\big(  \Phi^m_{RR'},  \mathcal I_R\otimes (\mathcal D\circ \cN)(\rho_{RA})  \big)\geq 1-\epsilon,$$
where $\Phi^m_{RR'}$ is a \emph{maximally entangled state} of local dimension $m$ and $F$ denotes the fidelity function given by $F(\sigma, \tau) = \|\sqrt \sigma\cdot \sqrt \tau\|_1$.
It is well-known that the entanglement generation problem is closely related to quantum commutation over the channel $\cN_{A\to B}$. More precisely, the asymptotic rate of entanglement generation with asymptotically vanishing error equals the capacity of $\cN_{A\to B}$, for which \emph{maximum coherent information} is a lower bound, see e.g.,~\cite{BDL16}.   

\begin{theorem}\label{thm:ent-gen}
Let $\cN_{A\to B}$ be an arbitrary quantum channel. 
Then for any bipartite \emph{pure} state $\ket \psi_{RA}$ and $\alpha\in (1, 2]$ there exists an entanglement generation $(\log m, \epsilon)$ code over $\cN$ if
\begin{align}\label{eq:ent-gen-as}
H_{\alpha}(R|E) - \alpha' \log(1/\epsilon) + 3-\alpha'\geq \log m,
\end{align}
where $\rho_{RE} = \mathcal I_R\otimes \cN^c(\ket \psi\bra\psi_{RA})$ and $\cN^c_{A\to E}$ is the \emph{complementary channel} to $\cN$.
\end{theorem}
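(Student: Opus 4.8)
The plan is to prove this by the decoupling method: apply the projection-based decoupling bound of Corollary~\ref{cor:decoupling-Berta} to the bipartite state $\rho_{RE}$, and then invoke Uhlmann's theorem to build a decoder. First I would fix an isometric Stinespring dilation $V_{A\to BE}$ of $\cN$, so that $\ket\psi_{RBE}=(I_R\otimes V)\ket\psi_{RA}$ is a purification of $\rho_{RE}=\mathcal I_R\otimes\cN^c(\ket\psi\bra\psi_{RA})$. The key observation is that a legitimate encoding for entanglement generation is obtained by letting Alice apply $P U_R$ to the reference $R$ she retains, that is, a rotation $U_R$ followed by the projection $P$ onto a fixed $m$-dimensional subspace $\cH_{R_0}$; this is a local operation on $R$, and the resulting (renormalized) code state is $\ket{\psi'}_{R_0A}\propto (PU_R\otimes I_A)\ket\psi_{RA}$, with $R_0$ playing the role of the $m$-dimensional reference to be entangled with Bob.

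To locate a \emph{good} subspace I would apply Corollary~\ref{cor:decoupling-Berta} with the random unitary acting on $R$, the bystander being $E$, and $d_{A_0}=m$, which gives
\begin{align*}
\E_{U_R}\Big[\Big\|\tfrac{d_R}{m}(P U_R\otimes I_E)\rho_{RE}(P U_R\otimes I_E)^\dagger-\tfrac{I_{R_0}}{m}\otimes\rho_E\Big\|_1\Big]\leq 2^{\frac2\alpha-1}\,m^{\frac1{\alpha'}}\,W_\alpha(R|E).
\end{align*}
Using Proposition~\ref{prop2} I would bound $W_\alpha(R|E)\leq 2^{-\frac1{\alpha'}H_\alpha(R|E)}+d_R^{-\frac1{\alpha'}}$, and since $H_\alpha(R|E)\leq\log d_R$ the two terms are comparable, so the right-hand side is at most $2^{\frac2\alpha}\,m^{\frac1{\alpha'}}\,2^{-\frac1{\alpha'}H_\alpha(R|E)}$. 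A Markov-type averaging argument then selects a specific $U_R$, hence a specific subspace $\cH_{R_0}$, for which the projected state $\rho'_{R_0E}$ is $\delta$-close in trace norm to $\tfrac{I_{R_0}}{m}\otimes\rho_E$, with $\delta$ controlled by the displayed bound (up to the factor lost in passing from the average to a single instance).

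With such a subspace in hand, I would pass to the purification $\ket{\psi'}_{R_0BE}\propto(PU_R\otimes I_{BE})\ket\psi_{RBE}$, whose reduction on $R_0E$ is exactly $\rho'_{R_0E}\approx\tfrac{I_{R_0}}{m}\otimes\rho_E$. Since $\tfrac{I_{R_0}}{m}\otimes\rho_E$ is the marginal of $\Phi^m_{R_0R'}\otimes\ket\eta\bra\eta_{EE''}$ for any purification $\ket\eta_{EE''}$ of $\rho_E$, Uhlmann's theorem yields an isometry $W_{B\to R'E''}$, and hence a decoder $\mathcal D_{B\to R'}=\tr_{E''}\circ\,W$, such that $F\big(\Phi^m_{R_0R'},\,\mathcal I_{R_0}\otimes(\mathcal D\circ\cN)(\psi'_{R_0A})\big)$ is close to $1$; quantitatively the fidelity deficit is governed by $\delta$ through the Fuchs--van de Graaf inequalities relating trace distance and fidelity. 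Finally I would chase the constants: combining the prefactor $2^{2/\alpha-1}$, the term $m^{1/\alpha'}$, the conversion $W_\alpha(R|E)\mapsto 2^{-H_\alpha(R|E)/\alpha'}$, the square-root from Fuchs--van de Graaf, and the averaging step, and then solving the requirement $1-F\leq\epsilon$ for $\log m$, I would obtain the stated threshold $H_\alpha(R|E)-\alpha'\log(1/\epsilon)+3-\alpha'\geq\log m$.

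The step I expect to be the main obstacle is the careful bookkeeping of constants needed to land exactly on the additive term $3-\alpha'$: one must simultaneously absorb the prefactor $2^{2/\alpha-1}$ from the decoupling bound, the additive $d_R^{-1/\alpha'}$ from Proposition~\ref{prop2}, the square-root loss incurred when converting the trace-distance decoupling estimate into a fidelity via Fuchs--van de Graaf, and the loss from extracting a single good $U_R$ out of the Haar average. Getting these to combine into precisely $3-\alpha'$, rather than some looser constant, is the delicate part; by contrast, the decoupling inequality itself and the Uhlmann argument are essentially routine given the tools already established, the only subtlety there being to verify that projecting the retained reference $R$ onto $\cH_{R_0}$ is a valid local encoding that yields a genuinely maximally mixed $R_0$.
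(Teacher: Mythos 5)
Your route is the paper's route, step for step: Stinespring dilation of $\cN$, Corollary~\ref{cor:decoupling-Berta} applied to $\rho_{RE}$ with the Haar unitary on $R$ and $E$ as bystander, selection of a single good $U_R$, the code state proportional to $(P_R U_R\otimes I_A)\ket\psi_{RA}$, Proposition~\ref{prop2} together with $H_\alpha(R|E)\le \log d_R$ to replace $W_\alpha(R|E)$ by $2\cdot 2^{-H_\alpha(R|E)/\alpha'}$, and Uhlmann plus monotonicity of fidelity to manufacture the decoder $\mathcal D(X)=\tr_{E'}(ZXZ^\dagger)$. However, two of the ``losses'' you plan to absorb are phantom, and one real step is missing. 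First, there is no square-root loss: the direction of Fuchs--van de Graaf that is needed is $F(\sigma,\tau)\ge 1-\frac12\|\sigma-\tau\|_1$, which is linear in the trace distance. This is not optional: a square-root conversion would force the decoupling error to be of order $\epsilon^2$, turning the term $\alpha'\log(1/\epsilon)$ into $2\alpha'\log(1/\epsilon)$ and putting the stated threshold out of reach. Second, no factor is lost in passing from the Haar average to a single unitary, since the corollary bounds the expectation of the trace norm and the minimum over $U_R$ is at most the average.

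The step your sketch genuinely omits is the treatment of normalization. The decoupling bound controls the non-normalized operator $\frac{d}{m}(P_RU_R\otimes I_E)\rho_{RE}(U_R^\dagger P_R\otimes I_E)$, whereas the environment marginal of the actual code state is $\frac{d}{\theta^2 m}(P_RU_R\otimes I_E)\rho_{RE}(U_R^\dagger P_R\otimes I_E)$, where $\theta^2$ is the normalization of the projected vector; it is the latter, not the former, whose distance to $\frac1m P_R\otimes\rho_E$ feeds into Uhlmann's theorem. The paper bridges this with the inequality $\|\sigma-\sigma'\|_1\le 2\|c\sigma-\sigma'\|_1$, valid for density matrices $\sigma,\sigma'$ and any real $c$ (cited from~\cite{HHWY08}), and the resulting factor $2$ exactly cancels the $\frac12$ in Fuchs--van de Graaf. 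Finally, your worry about landing precisely on $3-\alpha'$ is well founded: carrying out the above bookkeeping correctly yields the requirement $\log m\le H_\alpha(R|E)-\alpha'\log(1/\epsilon)+2-2\alpha'$, which is slightly more restrictive than the stated $3-\alpha'$ (the corresponding display in the paper's own proof appears to contain a small arithmetic slip at exactly this point, using in effect $W_\alpha\le 2^{-(H_\alpha+1)/\alpha'}$ where Proposition~\ref{prop2} only gives $W_\alpha\le 2^{-(H_\alpha-\alpha')/\alpha'}$). So your approach is the right one, but it proves the theorem with the additive constant $2-2\alpha'$ rather than $3-\alpha'$; no amount of tightening of these particular estimates recovers the stated constant.
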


Before getting to the proof of this theorem (that is quite standard) let us first state the asymptotic version of the above one-shot bound.

\begin{corollary}\label{cor:ent-gen-asymp}
Let $\cN_{A\to B}$ be an arbitrary quantum channel. 
Then for any bipartite \emph{pure} state $\ket \psi_{RA}$ and $\alpha\in (1, 2]$ there exists an entanglement generation code over $\cN$ with rate $r$ and error rate at most
$$2^{-\frac{n}{\alpha'} \big(   H_\alpha(R|E)_\rho - r  +o(n)\big) },$$
where $\rho_{RE} = \mathcal I_R\otimes \cN^c(\ket \psi\bra\psi_{RA})$ and $\cN^c_{A\to E}$ is the \emph{complementary channel} to $\cN$.

\end{corollary}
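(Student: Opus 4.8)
The plan is to derive the asymptotic statement directly from the one-shot bound of Theorem~\ref{thm:ent-gen}, applied to the $n$-fold i.i.d.\ version of the problem. Concretely, I would replace the channel $\cN$ by $\cN^{\otimes n}$ and the input pure state $\ket\psi_{RA}$ by $\ket\psi_{RA}^{\otimes n}$, so that the reference system becomes $R^n$ and the channel input becomes $A^n$. Theorem~\ref{thm:ent-gen} then guarantees a $(\log m,\epsilon)$ code over $\cN^{\otimes n}$ whenever $H_\alpha(R^n|E^n) - \alpha'\log(1/\epsilon) + 3 - \alpha' \geq \log m$, and it remains only to evaluate the conditional entropy term for the i.i.d.\ state and to solve for $\epsilon$.

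First I would identify the relevant environment state. Taking a Stinespring dilation of $\cN^{\otimes n}$ to be the $n$-fold tensor power of a dilation of $\cN$ and grouping the $n$ environment registers into a single system $E^n$, the complementary channel of $\cN^{\otimes n}$ is $(\cN^c)^{\otimes n}$. Consequently the environment state factorizes, $\rho_{R^nE^n} = \rho_{RE}^{\otimes n}$, with $\rho_{RE} = \mathcal I_R\otimes \cN^c(\ket\psi\bra\psi_{RA})$.

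Next I would establish the bound $H_\alpha(R^n|E^n)_{\rho^{\otimes n}} \geq n\,H_\alpha(R|E)_\rho$, which is all that achievability needs, since a larger value of $H_\alpha(R^n|E^n)$ only relaxes the sufficient condition of Theorem~\ref{thm:ent-gen}. This is the easy (super-additive) direction of additivity: restricting the infimum in $H_\alpha(A|B) = -\inf_{\sigma_B} D_\alpha(\rho_{AB}\,\|\,I_A\otimes\sigma_B)$ to product states $\sigma_{E^n} = \sigma_E^{\otimes n}$ and using the additivity of the sandwiched R\'enyi divergence on tensor products yields the inequality immediately. Notably I do not need the (harder) reverse inequality, which would require that the optimal $\sigma_{E^n}$ factorizes.

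Finally I would set $\log m = nr$ and solve the one-shot condition for $\epsilon$. Combining it with the previous step shows that a code exists whenever $\log(1/\epsilon) \leq \tfrac1{\alpha'}\big(H_\alpha(R^n|E^n) + 3 - \alpha' - nr\big)$, and using $H_\alpha(R^n|E^n) \geq n\,H_\alpha(R|E)$ this is satisfied by choosing $\log(1/\epsilon) = \tfrac{n}{\alpha'}\big(H_\alpha(R|E)_\rho - r\big) + \tfrac{3-\alpha'}{\alpha'}$. Hence there is a code whose error is at most $2^{-\frac{n}{\alpha'}(H_\alpha(R|E)_\rho - r) - \frac{3-\alpha'}{\alpha'}}$, with the $O(1)$ constant $\tfrac{3-\alpha'}{\alpha'}$ absorbed into the $o(n)$ of the statement. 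The argument is essentially bookkeeping once Theorem~\ref{thm:ent-gen} is available; the only mildly delicate points are the identification of the complementary channel of $\cN^{\otimes n}$ and the super-additivity of $H_\alpha$, and I expect no genuine obstacle beyond stating these cleanly.
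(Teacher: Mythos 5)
Your proposal is correct and is exactly the argument the paper intends: the corollary is the standard i.i.d.\ instantiation of Theorem~\ref{thm:ent-gen} with $\cN^{\otimes n}$ and $\ket\psi_{RA}^{\otimes n}$, the complementary channel identified as $(\cN^c)^{\otimes n}$ so that $\rho_{R^nE^n}=\rho_{RE}^{\otimes n}$, and the one-shot condition solved for $\epsilon$ with $\log m = nr$. Your observation that only the easy super-additive direction $H_\alpha(R^n|E^n)\geq n\,H_\alpha(R|E)_\rho$ (restricting the infimum to product $\sigma_{E^n}$ and using additivity of $D_\alpha$) is needed is precisely the right bookkeeping, and the leftover constant $\tfrac{3-\alpha'}{\alpha'}$ is indeed absorbed into the $o(n)$ term as the paper's convention allows.
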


\begin{proof}[Proof of Theorem~\ref{thm:ent-gen}]
Since $\ket{\psi}_{RA}$ is a pure state, we may assume without no of generality that $\dim \cH_R=\dim \cH_A=d$. Let $\{\ket 1, \dots, \ket d\}$ be an orthonormal basis for $\cH_R$, and let $\cH_{R'}$ be isomorphic to $\cH_R$. Let 
$$\ket{\Phi^m}_{RR'} = \frac{1}{m}\sum_{i=1}^m \ket i_R\otimes \ket i_{R'},$$
be a maximally entangled state of local dimension $m$, and $\Phi^m_{RR'}=\ket{\Phi^m}\bra{\Phi^m}_{RR'}$ be its associated density matrix.  Let $P_R$ be the following rank $m$ projection
$$P_R= \sum_{i=1}^m \ket i\bra i_R.$$
Let $W_{\cN}: \cH_A\to \cH_B\otimes \cH_E$ be the \emph{Stinespring isometry} associated to $\cN$ so that $\cN(X) = \tr_E\big(  W X W^\dagger   \big)$. Then the \emph{complementary channel} $\cN^c_{A\to E}$ is given by $\cN^c(X) = \tr_B\big(  W_{\mathcal N} X W_{\mathcal N}^\dagger   \big)$.

Let 
$$\ket{\rho}_{RBE} = (I_R\otimes W_{\mathcal N})\ket{\psi}_{RA},$$
and $\rho_{RBE} = \ket \rho\bra \rho_{RBE}$ be its associated density matrix. 
By Corollary~\ref{cor:decoupling-Berta} for every $\alpha\in (1, 2]$ there exists a unitary $U_R$ such that 
\begin{align}\label{eq:dec-B-delta}
\Big\|  \frac{d}{m} (P_RU_R\otimes I_E) \rho_{RE} (U_R^\dagger P_R\otimes I_E) - \frac{1}{m}P_R\otimes \rho_E       \Big\|_1 \leq 2^{\frac 2\alpha-1} m^{\frac{1}{\alpha'}} W_\alpha(R|E)_\rho.
\end{align}
Now define 
$$\ket{\xi'}_{RA}= \sqrt{\frac{d}{m}}  (P_RU_R\otimes I_A)\ket{\psi}_{RA},$$
and let $\ket{\xi} = \frac{1}{\theta} \ket{\xi'}$ where $\theta= \|\ket{\xi'}\|$ is a normalization factor. Also let $\xi_{RA}=\ket \xi\bra \xi_{RA}$ be the corresponding density matrix.  Observe that
\begin{align*}
\mathcal I_R\otimes \cN^c(\xi_{RA}) &= \frac{d}{\theta^2\, m} \tr_{B}\Big( (P_RU_R\otimes W_{\cN})\ket \psi\bra\psi_{RA}        (U_R^\dagger P_R\otimes W^\dagger_{\cN})       \Big)\\
& = \frac{d}{\theta^2\, m}  (P_RU_R\otimes I_A) \rho_{RE} (U^\dagger_RP_R\otimes I_A).
\end{align*}
Then using the Fuchs-van de Graaf inequality and letting $\delta$ to be the right hand side of~\eqref{eq:dec-B-delta} we obtain
\begin{align*}
F\big(\mathcal I_R\otimes \cN^c(\xi_{RA}), \, \frac{1}{m}P_R\otimes \rho_E\big) &\geq 1- \frac{1}{2} \Big\|     \mathcal I_R\otimes \cN^c(\xi_{RA})- \frac{1}{m}P_R\otimes \rho_E       \Big\|_1\\
& = 1-\frac 12  \Big\|    \frac{d}{\theta^2\, m}  (P_RU_R\otimes I_A) \rho_{RE} (U^\dagger_RP_R\otimes I_A)- \frac{1}{m}P_R\otimes \rho_E       \Big\|_1\\
&\geq 1- \Big\|    \frac{d}{ m}  (P_RU_R\otimes I_A) \rho_{RE} (U^\dagger_RP_R\otimes I_A)- \frac{1}{m}P_R\otimes \rho_E       \Big\|_1\\
&\geq 1-\delta,
\end{align*}
where the third line follows from the fact that for any two density matrices $\sigma, \sigma'$  and $c\in \mathbb R$ we have $\|\sigma-\sigma'\|_1\leq 2\|c\sigma-\sigma'\|_1$ whose proof can be found in~\cite{HHWY08}. 

Observe that $I_R\otimes W_{\cN} \ket{\xi}_{RA}$ is a purification of $\mathcal I_R\otimes\cN^c(\xi_{RA})$ and $\ket{\Phi^m}_{RR'}$ is a purification of $\frac1mP_R$. Fix some purification $\ket\tau_{EE'}$ of $\rho_E$. Then by Uhlmann's theorem there exists an isometry $Z: \cH_B\to \cH_{R'}\otimes \cH_{E'}$ such that 
$$F\big(\mathcal I_R\otimes \cN^c(\xi_{RA}), \, \frac{1}{m}P_R\otimes \rho_E\big) = \big| \bra{\Phi^{m}}_{RR'}\otimes \bra{\tau}_{EE'}\,  ( I_R\otimes ZW_{\cN} ) \,\ket \xi_{RA}   \big|,$$
and then by the monotonicity of fidelity
\begin{align*}
1-\delta & \leq  F\big( \ket{\Phi^m}_{RR'}\otimes \ket{\tau}_{EE'},\,  ( I_R\otimes ZW_{\cN} ) \,\ket \xi_{RA} \big)\\
& \leq F\big(\Phi^m_{RR'}, \mathcal I_R\otimes (\mathcal D\circ \mathcal N ) (\xi_{RA})    \big), 
\end{align*}
where $\mathcal D: \fL(B)\to \fL(R')$ is given by $\mathcal D(X) = \tr_{E'}(ZXZ^\dagger)$. Thus the only remaining step is to show that $\epsilon\geq \delta$. That is, we need to verify that 
$$2^{\frac 2\alpha-1} m^{\frac{1}{\alpha'}} W_\alpha(R|E)_\rho\leq \epsilon.$$
Using Proposition~\ref{prop2} and the fact that $H_{\alpha}(R|E)_\rho\leq \log d$, the above inequality is implied once we have
$$2^{ -\frac{1}{\alpha'}  \big( H_{\alpha}(R|E)_\rho + 1 -\log m - \alpha'(2/\alpha-1)     \big)}\leq \epsilon,$$
which is equivalent to our assumption~\eqref{eq:ent-gen-as}. 

\end{proof}

\subsection{Statistics of random binning }\label{sec:binning}
Decoupling-type theorems are also utilized in classical information theory for proving achievability results via the method of OSRB~\cite{yassaee2014achievability}. Moreover, as in the quantum case for the problem of entanglement generation, our decoupling theorems can be used for proving bounds on the error exponents in such achievability results. 
Yet in the classical case we are able to prove even stronger error exponents, comparing to that of Corollary~\ref{cor:ent-gen-asymp}, by replacing R\'enyi information measures according to the proposal of Sibson, by those of Csisz\'ar. Thus here we prove an asymptotic version of our decoupling theorem in the classical case in which surprisingly Csisz\'ar's proposal of $\alpha$-R\'enyi mutual information appears. Next, we will apply this result to the problem of the capacity of the wiretap channel.

Let $(A^n, B^n)$ be i.i.d.\ classical random variables distributed according to $p_{AB}$:
$$p(a^nb^n)=\prod_{i=1}^n p(a_i b_i).$$
Suppose that we randomly (and uniformly) bin the set $\mathcal{A}^n$ into $2^{nR}$ bins and let $A_0$ to denote the bin index. Finding the correlation between the bin index $A_0$ and $B^n$ (averaged over all random bin mappings) is of interest, see \cite{yassaee2014achievability}. It is known that  if the binning rate $R$ is below the Slepian-Wolf rate, i.e., $R<H(A|B)$, the average total variation distance $\|p_{A_0B^n}-p_{A_0}\times p_{B^n}\|_1=V_1(A_0;B^n)$  vanishes asymptotically as $n$ tends to infinity.

Here we are interested in the same question as above when we replace $V_1(A_0;B^n)$ with the correlation measure $V_\alpha(A_0;B^n)$ for some $\alpha\in (1,2]$. Our tool for answering this question is Theorem~\ref{thm:main-3}, yet this theorem is applicable only if the first variable is distributed uniformly. For this reason, we do not assume that $A^n$ is i.i.d., but is completely uniform on a type set. 

Let $p_{AB}$ be a bipartite distribution such that $p_A(a)$ is a rational number for all $a\in\mathcal{A}$. In the following, let $n$ be some natural number such that $np(a)$ is an integer for all $a\in \mathcal A$. For such $n$, let $\mathcal{T}_{n}(p_A)\subseteq \mathcal A^n$ be the set of all sequences $a^n$ of length $n$ whose empirical distribution (type) is equal to $p_A$, \emph{i.e.,} each symbol $a'\in\mathcal{A}$ occurs  exactly $np(a')$ times in sequence $a^n$. Instead of the i.i.d.\ distribution on $A^n$, let $A^n$ be uniformly distributed over $\mathcal{T}_{n}(p_A)$. The conditional distribution of $B^n$ given $A^n$ is still assumed to be
$$p(b^n|a^n)=\prod_{i=1}^n p(b_i|a_i).$$
For random binning, we use a randomly chosen $k$-to-$1$ function $f$ on $\mathcal T_n(p_A)\subseteq \mathcal A^n$ and let $A_0=f(A^n)$. We call this a \emph{regular random binning}. This corresponds to a binning procedure with rate 
\begin{align}
R=\frac 1n\log\Big(\frac{|\mathcal{T}_{n}(p_A)|}{k}\Big).
\label{eqnRr}\end{align}

\begin{theorem} \label{thm2ere34} 
Let $A^n$ be uniformly distributed over $\mathcal{T}_{n}(p_A)$ and 
$$p_{B^n|A^n}=\prod_{i=1}^n p_{B_i|A_i}.$$
Also let $k$ be an integer that divides $|\mathcal T_n(p_A)|$ and define $R$ by~\eqref{eqnRr}. Then for every $\alpha\in(1,2]$ we have 
\begin{align}
\mathbb{E}\big[V_\alpha(A_0;B^n)\big]&\leq 
2^{-\frac{n}{\alpha'}\big(H(A)-I^c_{\alpha}(A;B)-R+o(n)\big)},\label{eqne3334}
\end{align}
where $A_0=f(A^n)$, the average is taken over all $k$-to-$1$ functions $f:\mathcal T_n(p_A)\to \mathcal A_0$ (i.e., over all regular random bin mappings~$f$) and  $I_{\alpha}^{\mathrm{c}}(A;B)$ is the $\alpha$-R{\'e}nyi mutual information according to Csisz\'ar's proposal~\cite[Eq. 29]{verdu2015alpha} defined by 
$$I_{\alpha}^{\mathrm{c}}(A;B)=\min_{q_B}\sum_{a}p(a)D_\alpha\left(p_{B|a}\parallel q_B\right).$$
In particular, the average correlation $\mathbb{E}[V_\alpha(A_0;B^n)]$
 vanishes as $n$ tends to infinity if 
\begin{align*}
R&<H(A)-I_{\alpha}^{\mathrm{c}}(A;B).
\end{align*} 
Furthermore, we have
\begin{align}
\mathbb{E}\big[\big\|p_{A_0B^n}-p_{A_0}\times p_{B^n}\big\|_1\big]&\leq 
2^{-\max_{1\leq\alpha\leq 2}\left\{
\frac{n}{\alpha'}\big(H(A)-I^c_{\alpha}(A;B)-R+o(n)\big)\right\}}\nonumber
\\&=2^{-n\big(\min_{q_{AB}: q_A=p_A}D(q_{B|A}\|p_{B|A}|p_A)+[\frac12H(A|B)_{q}-R]_{+}+o(n)\big)}. \label{eqn:slkdfj4225}
\end{align}

\end{theorem}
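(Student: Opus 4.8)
The plan is to read off \eqref{eqne3334} directly from the one-shot decoupling bound of Theorem~\ref{thm:main-3}(ii), reducing everything to a single clean estimate on a Sibson R\'enyi mutual information, and then to obtain \eqref{eqn:slkdfj4225} by monotonicity followed by a convex-duality (Legendre--Fenchel) computation. First I would apply Theorem~\ref{thm:main-3}(ii) with the uniform system $A^n$ on $\mathcal T_{n}(p_A)$ playing the role of $A$, with $B^n$ in the role of $B$, and with $|\mathcal C|=k$ the bin size. Since $A^n$ is uniform on its alphabet $\mathcal T_{n}(p_A)$ and $f$ is a uniformly random $k$-to-$1$ map, this gives
$$\mathbb{E}_f\big[V_\alpha(A_0;B^n)\big]\le 2^{\frac2\alpha-1}\,k^{-1/\alpha'}\,V_\alpha(A^n;B^n),$$
and Proposition~\ref{prop2} yields $V_\alpha(A^n;B^n)\le 2^{\frac1{\alpha'}I_\alpha(A^n;B^n)}+1$. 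Thus the whole problem reduces to upper bounding the Sibson mutual information $I_\alpha(A^n;B^n)$ of the uniform-over-type input with the memoryless channel.

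The key step, and the precise reason Csisz\'ar's proposal emerges, is the bound $I_\alpha(A^n;B^n)\le n\,I_\alpha^{\mathrm c}(A;B)$. To prove it, let $q_B^\star$ be the minimizer in the definition of $I_\alpha^{\mathrm c}(A;B)$ and test the variational definition $I_\alpha(A^n;B^n)=\min_{q_{B^n}}D_\alpha(p_{A^nB^n}\|p_{A^n}\otimes q_{B^n})$ on the product choice $q_{B^n}=(q_B^\star)^{\otimes n}$. Using the classical expression for $D_\alpha$ and $p_{B^n|a^n}=\prod_i p_{B|a_i}$ one gets, for each $a^n$, the factorization $\sum_{b^n}p(b^n|a^n)^\alpha q_{B^n}(b^n)^{1-\alpha}=2^{(\alpha-1)\sum_i D_\alpha(p_{B|a_i}\|q_B^\star)}$. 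The crucial observation is that every $a^n\in\mathcal T_{n}(p_A)$ contains each symbol $a$ exactly $np(a)$ times, so the exponent equals $(\alpha-1)\,n\sum_a p(a)D_\alpha(p_{B|a}\|q_B^\star)=(\alpha-1)\,n\,I_\alpha^{\mathrm c}(A;B)$, a \emph{constant} over the whole type class. Averaging over the uniform $p_{A^n}$ then collapses the outer sum and gives $D_\alpha\big(p_{A^nB^n}\,\|\,p_{A^n}\otimes(q_B^\star)^{\otimes n}\big)=n\,I_\alpha^{\mathrm c}(A;B)$ exactly, whence $I_\alpha(A^n;B^n)\le n\,I_\alpha^{\mathrm c}(A;B)$.

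Assembling these ingredients with $\log|\mathcal T_{n}(p_A)|=nH(A)+o(n)$ and $k=|\mathcal T_{n}(p_A)|\,2^{-nR}$ (so that $k^{-1/\alpha'}=2^{-\frac n{\alpha'}(H(A)-R)+o(n)}$), the bounded factors $2^{\frac2\alpha-1}\in[1,2]$ and the additive $+1$ are swallowed by the $o(n)$ term, and \eqref{eqne3334} follows. For \eqref{eqn:slkdfj4225}, Proposition~\ref{prop:non-decreasing} gives $\big\|p_{A_0B^n}-p_{A_0}\times p_{B^n}\big\|_1\le V_\alpha(A_0;B^n)$ for every $\alpha\in[1,2]$; taking expectations and then choosing the best $\alpha$ turns the family of bounds from \eqref{eqne3334} into a single exponent maximized over $\alpha$, which is exactly the first equality in \eqref{eqn:slkdfj4225}.

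The main obstacle is the final equality, namely the single-letterization of $\max_{1\le\alpha\le2}\frac1{\alpha'}\big(H(A)-I_\alpha^{\mathrm c}(A;B)-R\big)$ as a minimization over $q_{AB}$ with $q_A=p_A$. I would establish this by convex duality. Insert the Gibbs-variational representation
$$D_\alpha(p\|q)=\max_{w}\big[D(w\|q)-\alpha' D(w\|p)\big]$$
into the definition of $I_\alpha^{\mathrm c}$, reinterpret the optimizing test measures $w_a$ as a channel $q_{B|A=a}$ so that $\sum_a p(a)D(w_a\|p_{B|a})=D(q_{B|A}\|p_{B|A}\,|\,p_A)$ and $\sum_a p(a)D(w_a\|q_B)$ becomes a mutual-information term tied to $H(A|B)_q$, and recognize $1/\alpha'$ as the Lagrange multiplier conjugate to the rate $R$. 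Optimizing over $\alpha\in[1,2]$ (equivalently, over the multiplier in $[0,\tfrac12]$) converts the rate constraint into the positive-part penalty and yields the stated $\min_q$ expression. The delicate part is justifying the resulting order of the $\min$ over $q_B$ and $\max$ over $q_{B|A}$ (a minimax/saddle-point argument making the variational formula conditional) and carefully matching the constant in the rate-penalty term; this bookkeeping is where the computation is easiest to get wrong.
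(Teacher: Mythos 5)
Your proposal is correct, and while it follows the same skeleton as the paper (Theorem~\ref{thm:main-3}(ii) plus Proposition~\ref{prop2} for \eqref{eqne3334}; Proposition~\ref{prop:non-decreasing}, optimization over $\alpha$, and a minimax identity for \eqref{eqn:slkdfj4225}), both technical ingredients are handled by genuinely different arguments. Where the paper proves the exact asymptotic equality $\lim_n\frac1nI_\alpha(A^n;B^n)=I_\alpha^{\mathrm c}(A;B)$ (Lemma~\ref{lemma:reneeo}) through a method-of-types computation --- splitting over types of $b^n$ and joint types and extracting the dominant exponent via \eqref{eq:CMI-091} --- you prove only the one-sided bound $I_\alpha(A^n;B^n)\le n\,I_\alpha^{\mathrm c}(A;B)$ by plugging the product $(q_B^\star)^{\otimes n}$ of Csisz\'ar minimizers into Sibson's variational formula and observing that $\sum_{b^n}p(b^n|a^n)^\alpha(q_B^\star)^{\otimes n}(b^n)^{1-\alpha}$ is constant on the type class. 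Your computation is correct, shorter, and even sharper (it is a genuinely one-shot inequality with no $o(n)$ loss in that step), and an upper bound is all that \eqref{eqne3334} requires; what the paper's longer argument buys is the converse direction, i.e., that Csisz\'ar's quantity is exactly the exponential growth rate of the Sibson mutual information for the type-uniform input, which is of independent interest but unnecessary here. For the last equality, the paper invokes \eqref{eq:CMI-091} from \cite{6034266} as a black box, while you would re-derive it from the variational formula $D_\alpha(p\|q)=\max_w\big[D(w\|q)-\alpha' D(w\|p)\big]$ (valid for $\alpha>1$) together with an exchange of $\min_{q_B}$ and $\max_{\{w_a\}}$; this exchange is legitimate by Sion's minimax theorem because the objective is convex in $q_B$ and, since $1-\alpha'<0$, concave in the $w_a$'s, so your route is self-contained at the price of one extra minimax step. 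Finally, your caution about the rate-penalty constant is well placed: carrying out the inner maximization gives $\max_{0\le\zeta\le1/2}\zeta\big(H(A|B)_q-R\big)=\tfrac12\big[H(A|B)_q-R\big]_+$, so the exponent one actually obtains is $\min_q D\big(q_{B|A}\|p_{B|A}\,|\,p_A\big)+\tfrac12\big[H(A|B)_q-R\big]_+$; the form $\big[\tfrac12 H(A|B)_q-R\big]_+$ displayed in \eqref{eqn:slkdfj4225} does not agree with this and appears to be a misprint in the paper itself, not an error in your plan.
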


From \cite[Eq. 24]{csiszar1995generalized}, we have $H(A)\geq I_{\alpha}^{\mathrm{c}}(A;B)$ with equality when $B=A$. Thus, the above bound $H(A)-I_{\alpha}^{\mathrm{c}}(A;B)$ on the binning rate is always non-negative.
Moreover, since $I_{\alpha}^{\mathrm{c}}(A;B)\geq I(A;B)$, we have $H(A)-I_{\alpha}^{\mathrm{c}}(A;B)\leq H(A)-I(A;B)=H(A|B)$. 
Hence, the bound given in the statement of the theorem on $R$ does not exceed $H(A|B)$, the conditional Slepian-Wolf rate, as expected.

\begin{remark}  
To the best of our knowledge, the generalized cut-off rates of Csisz\'ar for the dependencies of random bin indices are not defined or studied in the literature. However, we point out that resolvability exponents are studied in \cite{parizi2017exact, hayashi2006general, endo2014reliability, yagli2018exact}.
In particular, \cite{yagli2018exact} finds the following resolvability exponent for i.i.d.~codewords:
	\begin{align}
		\alpha(R',P_X,P_{Y|X})&= \max_{\lambda\in [0,1] } \left\{ \frac\lambda2 R' - \log \bbE\left[ \left( \bbE \left[\exp\left(\frac{\lambda}{2-\lambda}\, \imath_{X;Y}(X;Y) \right ) \big| Y\right] \right)^{\frac{2-\lambda}{2}} \right]  \right\} \text{.} \label{eqn:def:alpha(R,P_X,P_{Y|X})_the dual}
	\end{align}
With the change of variable ${1}/{\alpha'}=\lambda/2$, the above expression equals
$$\max_{\alpha\in[1,2]}
\frac{1}{\alpha'}\left(R'-I^s_{\alpha}(A;B)\right),$$
where $I^s_{\alpha}(A;B)$ is the $\alpha$-R{\'e}nyi mutual information according to Sibson's proposal. To relate the resolvability problem and our problem, let $R'=H(A)-R$. Then, we see that the exponent of \cite{yagli2018exact} has the same form as our exponent, except that our $\alpha$-R{\'e}nyi mutual information is computed according to Csiszar's proposal which result in stronger bounds. 
\end{remark}

\begin{proof}[Proof of Theorem \ref{thm2ere34}] From Theorem~\ref{thm:main-3}, with a randomly chosen $k$-to-$1$ function $f$ acting on $\mathcal{T}_{n}(p_A)$, 
we have
\begin{align}
\mathbb{E}[V_\alpha(A_0;B^n)]&\leq 
2^{\frac 2\alpha-1} k^{-\frac{1}{\alpha'}}V_\alpha(A^n;B^n)
\leq 2^{\frac 2\alpha-1} k^{-\frac{1}{\alpha'}}\Big(2^{\frac {1}{\alpha'}I_\alpha(A^n;B^n)}+1\Big),\label{eqn:ineqaulity3234}
\end{align}
where for the second inequality we use Propositin~\ref{prop2}.

Note that the distribution of $(A^n, B^n)$ is not i.i.d., so $I_\alpha(A^n;B^n)$ is not equal to $nI_\alpha(A; B)$. It is shown in Lemma~\ref{lemma:reneeo} below that 
\begin{align}\label{eq:lem-28}
2^{\frac {1}{\alpha'}I_\alpha(A^n;B^n)}=2^{\frac {n}{\alpha'}\big(I_{\alpha}^{\mathrm{c}}(A;B)+o(n)\big)}.
\end{align}
Then, from \eqref{eqn:ineqaulity3234} we have
\begin{align}
\mathbb{E}[V_\alpha(A_0;B^n)]&\leq k^{-\frac{1}{\alpha'}}2^{\frac {n}{\alpha'}\big(I_{\alpha}^{\mathrm{c}}(A;B)+o(n)\big)}\nonumber\\
&=2^{-\frac {n}{\alpha'} \big(  \frac{1}{n}\log|\mathcal T_n(p_A)| -I_{\alpha}^{\mathrm{c}}(A;B) -R+o(n)\big)}
\nonumber\\
&=2^{-\frac {n}{\alpha'} \big(  H(A) -I_{\alpha}^{\mathrm{c}}(A;B) -R+o(n)\big)}.
\label{eqne34}
\end{align}

To prove equation~\eqref{eqn:slkdfj4225}, applying Proposition~\ref{prop:non-decreasing}, 
it suffices to verify that 
\begin{align}
\max_{1\leq\alpha\leq 2}\frac{1}{\alpha'}\big(H(A)-I^c_{\alpha}(A;B)-R\big)=\min_{q_{AB}: q_A=p_A}D\big(q_{B|A}\|p_{B|A}\,|\,p_A\big)+\big[\frac12H(A|B)_{q}-R\big]_{+}.\label{eq:crmi-equiv}
\end{align} 
To see this, we use~\cite[Eq. 7]{6034266}
\begin{align}\label{eq:CMI-091}
I_{\alpha}^{\mathrm{c}}(A;B)=\max_{q_{AB}: q_A=p_A}\Big(
I(A;B)_q-\alpha'D\big(q_{B|A}\|p_{B|A}\, |\, p_A\big)\Big).
\end{align}
Therefore,
\begin{align*}&\max_{1\leq\alpha\leq 2}\frac{1}{\alpha'}\big(H(A)-I^c_{\alpha}(A;B)-R\big)
\\&=
\max_{1\leq\alpha\leq 2}\,\min_{q_{AB}: q_A=p_A}
\frac{1}{\alpha'}
\Big(H(A)_p-I(A;B)_q+\alpha'D\big(q_{B|A}\|p_{B|A}\, |\, p_A\big)-R\Big)
\\&=
\max_{1\leq\alpha\leq 2} \,\min_{q_{AB}: q_A=p_A}
\frac{1}{\alpha'}
\Big(H(A|B)_q+\alpha'D\big(q_{B|A}\|p_{B|A}\, |\, p_A\big)-R\Big)
\\&=\max_{0\leq\zeta\leq \frac 12}\,\min_{q_{AB}: q_A=p_A}
\zeta
\big(H(A|B)_q-R\big)+D\big(q_{B|A}\|p_{B|A}\, |\, p_A\big).
\end{align*}
Then~\eqref{eq:crmi-equiv} follows once we  exchange the maximum and minimum in the above equation. This exchange is possible since the expression is easily seen to be convex in $q_{B|A}$ and linear in $\zeta$ since for $q_{AB}= p_A\times q_{B|A}$ we have
$$\zeta
H(A|B)_q+D\big(q_{B|A}\|p_{B|A}\, |\, p_A\big) = \xi H(A)_p - (1-\zeta)H(B|A)_q - \zeta H(B)_q - \sum_{a, b} p(a)q(b|a) \log p(b|a).$$
\end{proof}

It remains to verify~\eqref{eq:lem-28} to complete the above proof. 

\begin{lemma}\label{lemma:reneeo} Let $p_{AB}$ be an arbitrary joint probability distribution. Let $A^n$ be uniform over $\mathcal{T}_{n}(p_A)$ and 
$$p(b^n|a^n)=\prod_{i=1}^n p(b_i|a_i).$$
Then, for any $\alpha>1$ we have
$$\lim_{n\rightarrow\infty}\frac{1}{n}I_\alpha(A^n;B^n)=I_{\alpha}^{\mathrm{c}}(A;B).$$
\end{lemma}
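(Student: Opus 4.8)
The plan is to push everything through the classical Sibson closed form. Applying \eqref{eq:RMI-sibson} to the (non-i.i.d.)\ pair $(A^n,B^n)$ gives $I_\alpha(A^n;B^n)=\alpha'\log S_n$, where
\[
S_n=\sum_{b^n}\Big(\sum_{a^n}p(a^n)\,p(b^n|a^n)^{\alpha}\Big)^{1/\alpha},\qquad p(a^n)=\tfrac1N\mathbf 1[a^n\in\mathcal T_n(p_A)],\ \ N=2^{nH(A)+o(n)}.
\]
So it suffices to prove $\tfrac1n\log S_n\to\tfrac1{\alpha'}I^c_\alpha(A;B)$, which I would obtain from a matching upper and lower bound.

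For the upper bound I would use the variational definition $I_\alpha=\inf_{\sigma_B}D_\alpha(\,\cdot\,\|\rho_{A^n}\otimes\sigma_B)$ with the product test $\sigma_{B^n}=q_B^{\otimes n}$. The key is that every $a^n\in\mathcal T_n(p_A)$ has the \emph{same} composition, so
\[
\sum_{y^n}p(y^n|a^n)^{\alpha}\,q_B^{\otimes n}(y^n)^{1-\alpha}=\prod_{i=1}^n 2^{(\alpha-1)D_\alpha(p_{B|a_i}\|q_B)}=2^{(\alpha-1)n\sum_a p(a)D_\alpha(p_{B|a}\|q_B)}
\]
is independent of $a^n\in\mathcal T_n(p_A)$. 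Averaging over the uniform $a^n$ therefore does nothing, giving exactly $D_\alpha(p_{A^nB^n}\|p_{A^n}\times q_B^{\otimes n})=n\sum_a p(a)D_\alpha(p_{B|a}\|q_B)$; minimizing over $q_B$ yields the clean bound $I_\alpha(A^n;B^n)\le nI^c_\alpha(A;B)$, with no error term. This is precisely where the constant-composition hypothesis matters: it is what turns Sibson's single-$\sigma$ minimization into Csisz\'ar's averaged divergence.

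For the matching lower bound I would expand $S_n$ by the method of types. Since the inner sum depends on $b^n$ only through its type, grouping $b^n$ by type $q_B$ and $a^n$ by its conditional type $V_{A|B}$ (subject to $(q_B\times V)_A=p_A$) gives
\[
\tfrac1n\log S_n=-\tfrac1\alpha H(A)+\max_{q_B}\Big\{H(B)_{q_B}+\tfrac1\alpha\max_{V:\,(q_BV)_A=p_A}\big[\,H(A|B)_{q_BV}+\alpha\langle q_BV,\log p\rangle\,\big]\Big\}+o(1),
\]
where $\langle q_BV,\log p\rangle=\sum_{a,b}q_B(b)V(a|b)\log p(b|a)$. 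To get the lower bound I need only one good type: take $q_{AB}=p_A\times q^*_{B|A}$ to be the maximizer in the variational identity \eqref{eq:CMI-091} (namely \cite[Eq.~7]{6034266}), and set $q_B=q^*_B$, $V=q^*_{A|B}$, which is admissible since $q^*_A=p_A$ forces the marginal constraint. Using $\langle q^*,\log p\rangle=-H(B|A)_{q^*}-D(q^*_{B|A}\|p_{B|A}|p_A)$ and $H(A|B)_{q^*}=H(A,B)_{q^*}-H(B)_{q^*}$, the whole expression at this type telescopes to $\tfrac1{\alpha'}\big(I(A;B)_{q^*}-\alpha'D(q^*_{B|A}\|p_{B|A}|p_A)\big)=\tfrac1{\alpha'}I^c_\alpha(A;B)$ by \eqref{eq:CMI-091}. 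Hence $\tfrac1n\log S_n\ge\tfrac1{\alpha'}I^c_\alpha(A;B)-o(1)$, matching the upper bound.

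The routine items I would dispatch quickly are the standard type-counting estimates (type-class and conditional-type-class cardinalities up to factors $2^{o(n)}$, and approximating the possibly irrational optimizer $q^*$ by genuine types of denominator $n$, absorbing the error into $o(n)$ via continuity of the entropies and of the linear form $\langle\,\cdot\,,\log p\rangle$). The real content, and the step I expect to be the main obstacle, is the lower bound: one must correctly isolate the dominant type in the exponential sum and recognize---through the dual representation \eqref{eq:CMI-091} of Csisz\'ar's mutual information---that its exponent is exactly $I^c_\alpha/\alpha'$. The upper bound, by contrast, is short once the constant-composition simplification is noticed.
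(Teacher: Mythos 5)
Your proof is correct, but it is organized differently from the paper's, and the difference is worth noting. The paper runs a single two-sided method-of-types computation: it groups $b^n$ by type and $a^n$ by joint type, uses the polynomial number of types to replace sums by maxima, arrives at $\lim_n \tfrac1n I_\alpha(A^n;B^n) = \alpha'\max_{q_{AB}:q_A=p_A}\bigl(\cdots\bigr)$, and only then invokes the dual representation~\eqref{eq:CMI-091} to identify this maximum with $I^{\mathrm c}_\alpha(A;B)$. You instead sandwich the limit. Your upper bound is genuinely different and arguably cleaner: restricting the Sibson infimum to product test distributions $q_B^{\otimes n}$ and exploiting constant composition gives the \emph{exact, non-asymptotic} inequality $I_\alpha(A^n;B^n)\le n\,I^{\mathrm c}_\alpha(A;B)$ for every $n$, directly from Csisz\'ar's primal definition, with no types and no appeal to~\eqref{eq:CMI-091}; this also makes transparent why Csisz\'ar's (rather than Sibson's) quantity appears, namely that constant composition converts the single-$\sigma_B$ minimization into the averaged divergence $\sum_a p(a)D_\alpha(p_{B|a}\|q_B)$. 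Your lower bound is essentially the paper's computation run in one direction only: you evaluate the type expansion at a single good joint type (an approximation of the maximizer of~\eqref{eq:CMI-091} with $A$-marginal exactly $p_A$) and carry out the same telescoping algebra --- your identity $-\tfrac1\alpha H(A)_q+H(B)_q+\tfrac1\alpha H(A|B)_q+\langle q,\log p\rangle=\tfrac1{\alpha'}\bigl(I(A;B)_q-\alpha' D(q_{B|A}\|p_{B|A}|p_A)\bigr)$ checks out. Two technical points you wave at but should state: the approximating types must have $A$-marginal \emph{exactly} $p_A$ (possible since $np(a)\in\mathbb{Z}$, by rounding $nq^*(a,\cdot)$ to integers summing to $np(a)$), and when $p(b|a)=0$ somewhere the approximation must preserve the support of $q^*$ so that $\langle\cdot,\log p\rangle$ stays finite; both are standard and the paper glosses over the same issues. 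Net comparison: the paper's route establishes the limit in one unified dominant-type calculation, while yours buys an exact finite-$n$ upper bound and confines both the types machinery and the identity~\eqref{eq:CMI-091} to the lower bound.
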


\begin{proof} 
We use  standard arguments from the method of types. For simplicity of notation, 
for two sequences $\{x_n:\, n\geq 1\}$ and $\{y_n:\, n\geq 1\}$, we use $x_n\circeq y_n$ to denote
$$\lim_{n\rightarrow \infty} \frac{1}{n}  x_n = \lim_{n\rightarrow \infty} \frac{1}{n}  y_n.$$
Then we have $\log |\mathcal{T}_{n}(p_A)|\circeq nH(A)_p$. 
Using~\eqref{eq:RMI-sibson} we have
\begin{align}
\frac1{\alpha'} I_\alpha(A^n;B^n)&=\log\Bigg(\sum_{b^n} \bigg(\sum_{a^n\in\mathcal{T}_{n}(p_A)} 
\frac{1}{{|\mathcal{T}_{n}(p_A)|}} p(b^n|a^n)^{\alpha}\bigg)^{1/\alpha}\Bigg)\nonumber
\\&=-\frac{1}{\alpha}
\log |\mathcal{T}_{n}(p_A)|
+\log
\Bigg(\sum_{b^n} \bigg(\sum_{a^n\in\mathcal{T}_{n}(p_A)} p(b^n|a^n)^{\alpha}\bigg)^{1/\alpha}\Bigg)\nonumber
\\&\circeq -\frac{n}{\alpha}H(A)_p+ 
\log\Bigg(\sum_{b^n} \bigg(\sum_{a^n\in\mathcal{T}_{n}(p_A)} p(b^n|a^n)^{\alpha}\bigg)^{1/\alpha}\Bigg).
\label{eql42}
\end{align}
Observe that for any $b^n\in \mathcal B^n$, the expression $\sum_{a^n\in\mathcal{T}_{n}(p_A)} p(b^n|a^n)^{\alpha}$ depends only on the type of $b^n$ (since $\mathcal T_n(p_A)$ is permutation invariant). Thus letting $b_0^n \in \mathcal T_n(q_B)$ to be of type $q_B$ we define
\begin{align}\label{eq:F-q-B}
F(q_B)=\sum_{a^n\in\mathcal{T}_{n}(p_A)} p(b_0^n|a^n)^{\alpha}.
\end{align}
Then denoting the set of all types in $\mathcal B^n$ by $\Upsilon_n(\mathcal B)$, the second term on the right hand side of~\eqref{eql42} can be expressed as
\begin{align}\nonumber
\log\Bigg(\sum_{b^n} \bigg(\sum_{a^n\in\mathcal{T}_{n}(p_A)} p(b^n|a^n)^{\alpha}\bigg)^{1/\alpha}\Bigg)
&=\log\Bigg(\sum_{q_B\in \Upsilon_n(\mathcal B) }|\mathcal{T}_{n}(q_B)|\cdot F(q_B)^{1/\alpha} \Bigg)\\
&\circeq \max_{q_B\in \Upsilon_n(\mathcal B)} \log\Big(\big|\mathcal{T}_{n}(q_B)\big|\cdot F(q_B)^{1/\alpha}\Big)\nonumber
\\&\circeq \max_{q_B\in \Upsilon_n(\mathcal B)} nH(B)_q + \frac{1}{\alpha}\log F(q_B),\nonumber
\end{align}
where in the second line we use the fact that there are polynomially many types in $\Upsilon_n(\mathcal B)$. 

The next step is to compute $F(q_B)$. Since~\eqref{eq:F-q-B} depends only on the type of $b_0^n\in \mathcal T_n(q_B)$ we have
\begin{align}\nonumber
F(q_B)&=\sum_{a^n\in\mathcal{T}_{n}(p_A)} p(b_0^n|a^n)^{\alpha}
\\&=\frac{1}{|\mathcal{T}_{n}(q_B)|}\sum_{b^n\in\mathcal{T}_{n}(q_B)} \sum_{a^n\in\mathcal{T}_{n}(p_A)} p(b^n|a^n)^{\alpha}.\nonumber
\end{align}
Let us denote the joint type of $(a^n, b^n)\in \mathcal{T}_{n}(p_A)\times \mathcal{T}_{n}(q_B)$ by $q_{AB}$. Note that the marginal type of $a^n$ is $p_A=q_A$ and $q_{AB}$ is an ``extension" of $q_B$. Denoting the set of all such joint types by $\widetilde \Upsilon_n(q_B)=\Upsilon_n(\mathcal A\times \mathcal B| p_A, q_B)$, for any sequence $(a^n, b^n)$ of joint type $q_{AB}\in \widetilde \Upsilon_n(q_B)$ the value of $p(b^n|a^n)^{\alpha}$ equals $\prod_{a,b}p(b|a)^{n\alpha q(a,b)}$. Therefore, we can  compute $F(q_B)$ by splitting the sum over different joint types. By a similar argument as before, to compute the exponential growth of the summation, we should only consider the ``dominant" type. Therefore, 
\begin{align}
\log F(q_B)&=-\log |\mathcal{T}_{n}(q_B)| + \log \bigg(\sum_{b^n\in\mathcal{T}_{n}(q_B)} \sum_{a^n\in\mathcal{T}_{n}(p_A)} p(b^n|a^n)^{\alpha}\bigg)\nonumber
\\&=
-\log |\mathcal{T}_{n}(q_B)| + \log \bigg(\sum_{q_{AB}\in \widetilde\Upsilon_n(q_B)}
\big|\mathcal{T}_{n}(q_{AB})\big|\cdot \prod_{a,b}p(b|a)^{n\alpha q(a,b)}\bigg)\nonumber
\\&\circeq
-nH(B)_q+ \max_{q_{AB}\in \widetilde\Upsilon_n(q_B)}
\log\Big(\big|\mathcal{T}_{n}(q_{AB})\big|\cdot \prod_{a,b}p(b|a)^{n\alpha q(a,b)}\Big)\nonumber
\\&\circeq
-nH(B)_q+ \max_{q_{AB}\in \widetilde\Upsilon_n(q_B)}
nH(AB)_q + n\alpha \sum_{a,b}q(ab)\log p(b|a).\nonumber
\end{align}
Putting everything together, we have
\begin{align*}
\frac1{\alpha'} I_\alpha(A^n;B^n)&
 \circeq
-\frac{n}{\alpha}H(A)_p + \max_{q_B\in\Upsilon_n(\mathcal B)} \max_{q_{AB}\in \widetilde\Upsilon_n(q_B)} \bigg(nH(B)_q  + \frac{n}{\alpha}H(A|B)_q+n \sum_{a,b} q(ab)\log p(b|a)\bigg)\\
 & = \max_{q_{AB}: q_A=p_A}\bigg(-\frac{n}{\alpha}H(A)_p + n 
 H(B)_q+ \frac{n}{\alpha}H(A|B)_q+n\sum_{a,b} q(ab)\log p(b|a)\bigg).
\end{align*}
Therefore,
\begin{align}\lim_{n\rightarrow\infty}\frac{1}{n}I_\alpha(A^n;B^n)&=\alpha'
\max_{q_{AB}: q_A=p_{A}}\Bigg(
-\frac{1}{\alpha}H(A)_q+H(B)_q+{\frac{1}{\alpha}H(A|B)_q}+{\sum_{a,b} q(ab)\log p(b|a)}\Bigg)\nonumber\\
&=\alpha'\max_{q_{AB}: q_A=p_{A}}\Bigg(
\frac{1}{\alpha'}I(A;B)_q+H(B|A)_q+{\sum_{a,b} q(ab)\log p(b|a)}\Bigg)\nonumber\\
&=\alpha'\max_{q_{AB}: q_A=p_{A}}\Bigg(
\frac{1}{\alpha'}I(A; B)_q+{\sum_{a,b} q(ab)\log \frac{p(b|a)}{q(b|a)}}\Bigg)\nonumber
\\&=\max_{q_{AB}: q_A=p_{A}}\Bigg(
I(A;B)_q-\alpha' D\big(q_{B|A}\|p_{B|A}\, |\, p_A\big)\Bigg).\nonumber
\end{align}
The last expression, as mentioned in~\eqref{eq:CMI-091}, equals $I_{\alpha}^{\mathrm{c}}(A;B)$. 
\end{proof}

\subsection{The wiretap channel}

A wiretap channel is determined by a bipartite conditional distribution $p_{YZ|X}$ in which $X$ is the input of the channel, output $Y$ is received by the legitimate receiver and output $Z$ is received by an eavesdropper. The goal of communication over a wiretap channel is to securely send information to the legitimate receiver. It is well-known that for any input distribution $p_X$, the rate $I(X;Y)-I(X;Z)$ is achievable. Our goal here is to establish a bound on the secrecy exponent of random coding over a wiretap channel.  


\begin{theorem}\label{thm:wiretap}
Let $p_{YZ|X}$ be an arbitrary wiretap channel and take $\alpha\in(1,2]$. Then for any input distribution $p_X$ there exists a code for reliably sending message $M$ of rate $R$ over the channel (with asymptotically vanishing error) such that
\begin{align} \label{exponent343}
V_\alpha(M;Z^n)\leq 2^{-\frac {n}{\alpha'}\big(I(X;Y)-I_{\alpha}^{\mathrm{c}}(X;Z)-R+o(n)\big)}.
\end{align}
In particular, for such a code we have
\begin{align}
\big\|p_{MZ^n}-p_M\times p_{Z^n}\big\|_1\leq 2^{-\frac {n}{\alpha'}\big(I(X;Y)-I_{\alpha}^{\mathrm{c}}(X;Z)-R+o(n)\big)}.
\end{align}
\end{theorem}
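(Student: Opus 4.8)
The plan is to build a single randomized codebook by applying a regular random binning to a source that is uniform over a type class, using two bin indices: a message index $M$ of rate $R$ and an auxiliary index $L$ of rate $R_L = H(X|Y)+\epsilon$. Fix a rational approximation of $p_X$ and take $n$ so that $\mathcal T_n(p_X)$ is defined; let $X^n$ be uniform on $\mathcal T_n(p_X)$, with $Z^n$ (and $Y^n$) obtained through the memoryless wiretap channel $p_{YZ|X}^{\otimes n}$. Choose a random $k$-to-$1$ function $f:\mathcal T_n(p_X)\to[2^{nR}]\times[2^{nR_L}]$ and write $f(X^n)=(M,L)$; since $f$ is regular, $(M,L)$ is uniform on the product index set, so $M$ and $L$ are independent and each uniform. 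To send message $m$ with the (eventually fixed) value $L=l^*$, the encoder transmits a sequence drawn uniformly from the sub-bin $\{x^n:\,M(x^n)=m,\ L(x^n)=l^*\}$; averaged over a uniform $M$ this makes the transmitted $X^n$ uniform on $\{x^n:\,L(x^n)=l^*\}$, and averaged over $L$ as well it is uniform on the whole type class, which is exactly the hypothesis of Theorem~\ref{thm2ere34}.

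For reliability I would let the legitimate receiver, who knows $l^*$, decode by Slepian--Wolf/joint-typicality: output the unique $\hat x^n$ with $L(\hat x^n)=l^*$ that is jointly typical with $Y^n$, and set $\hat M=M(\hat x^n)$. The set $\{x^n:\,L(x^n)=l^*\}$ has size $\approx 2^{n(H(X)-R_L)}=2^{n(I(X;Y)-\epsilon)}$, so a union bound shows the average (over $f$ and the channel) error is at most $2^{-n(\epsilon-\delta)}\to 0$ whenever $R_L>H(X|Y)$. This step forces the thinning of the source down to rate $I(X;Y)$, and it is precisely this thinning that puts $I(X;Y)$ rather than $H(X)$ into the secrecy exponent.

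For secrecy I would bound $V_\alpha(M;Z^n)$ through the auxiliary index. Since $(M,L)$ is maximally mixed and $L$ is classical, Theorem~\ref{thm:conditional-V-1}(ii) gives $V_\alpha(M;Z^n\mid L)\le 2^{2/\alpha-1}\,V_\alpha(ML;Z^n)$. Now $ML=f(X^n)$ is a single bin index of total rate $R+R_L$, so Theorem~\ref{thm2ere34} (applied with the eavesdropper marginal $p_{Z|X}$ playing the role of $p_{B|A}$) yields
\begin{align*}
\mathbb{E}_f\big[V_\alpha(ML;Z^n)\big]\le 2^{-\frac{n}{\alpha'}\big(H(X)-I_{\alpha}^{\mathrm{c}}(X;Z)-(R+R_L)+o(n)\big)}.
\end{align*}
Substituting $R_L=H(X|Y)+\epsilon$ and using $H(X)-H(X|Y)=I(X;Y)$ collapses the exponent to $\tfrac{n}{\alpha'}\big(I(X;Y)-I_{\alpha}^{\mathrm{c}}(X;Z)-R+o(n)\big)$, the constant $2^{2/\alpha-1}$ being absorbed into $o(n)$. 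Hence $\mathbb{E}_f[V_\alpha(M;Z^n\mid L)]$ obeys the same bound, and writing $V_\alpha(M;Z^n\mid L)=\sum_l p(l)\,V_\alpha(M;Z^n\mid L=l)$ lets me extract, by Markov's inequality together with the averaged reliability bound, a single binning $f$ and a single value $l^*$ for which the code is simultaneously reliable and satisfies the claimed bound on $V_\alpha(M;Z^n)$ (with $L$ hardwired to $l^*$, so no common randomness is actually needed). The final trace-distance statement then follows immediately from Proposition~\ref{prop:non-decreasing}, which gives $\|p_{MZ^n}-p_M\times p_{Z^n}\|_1\le V_\alpha(M;Z^n)$.

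The main obstacle I anticipate is producing one deterministic code that is good on both counts at once: the reliability and secrecy estimates are each averages over the random binning $f$ (and over the auxiliary index $l$), and I must argue they can be met together. The clean way is to keep $L$ inside the correlation measure during the secrecy analysis, bounding $V_\alpha(ML;Z^n)$ rather than $V_\alpha(M;Z^n)$, so that fixing $L=l^*$ afterward both restores the rate-$I(X;Y)$ sub-codebook needed for decoding and, via the independence $M\perp(L,Z^n)$ implied by the joint bound, guarantees that revealing or fixing $l^*$ costs nothing in security. I would also need to dispatch the routine number-theoretic side conditions of Theorem~\ref{thm2ere34} (rationality of $p_X$, divisibility of $|\mathcal T_n(p_X)|$ by the bin counts), which are handled by approximating $p_X$ and passing to a suitable subsequence of $n$, all absorbed into the $o(n)$ terms.
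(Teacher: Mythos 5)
Your proposal is correct and follows essentially the same route as the paper's proof: the paper likewise bins the type class $\mathcal T_n(p_X)$ into a message index, a Slepian--Wolf index that is ultimately fixed to a good value (your $L=l^*$, the paper's $U=u$), and residual encoder randomness (your uniform draw from the sub-bin, the paper's explicit index $G$), then applies Theorem~\ref{thm2ere34} to the joint index, Theorem~\ref{thm:conditional-V-1}(ii) to pass to the conditional quantity, and Markov's inequality plus a union bound to extract a single code that is simultaneously reliable and secure, finishing with Proposition~\ref{prop:non-decreasing} for the trace-norm statement. The only cosmetic difference is that the paper realizes the residual randomness as an explicit third bin index $G$ of rate $R_2$ via a $1$-to-$1$ relabeling of the type class, whereas you leave it implicit in the $k$-to-$1$ binning.
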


\begin{proof}
By a continuity type argument we can assume with no loss of generality that $p(x)$ for any $x\in \mathcal X$ is a rational number, and in the following, we take $n$ to be a sufficiently large number such that $np(x)$ is a natural number for all $x$.  
 Let  $\mathcal{T}_{n}(p_X)\subseteq \mathcal X^n$ be the set of sequences of type $p_X$, and let $X^n$ be uniformly distributed over $\mathcal{T}_{n}(p_X)$.

Choose positive reals $R_1, R_2, R_3$, which may depend on $n$, such that 
\begin{itemize}
\item $R_1=R+o(n)$, 
\item $R_3> H(X|Y)$,  
\item $R_1+R_3< H(X) - I^c_\alpha(X; Z)$
\item $2^{nR_{i}}$ is an integer for $i=1, 2, 3$ and 
$$|\mathcal{T}_{n}(p_X)|=\prod_{i=1}^3 2^{nR_{i}}.$$
\end{itemize}
Observe that if $R< I(X; Y)- I^c_\alpha(X; Z)$ such a triple $(R_1, R_2, R_3)$ exists. 

Let $f=(m, g, u):\mathcal T_n(p_X)\to \big[2^{nR_1}\big] \times \big[2^{nR_2}\big]\times \big[2^{nR_3}\big]$ be a random $1$-to-$1$ function (relabeling), and define $M=m(X^n)$, $G=g(X^n)$, $U=u(X^n)$. Note that, for example, $(m, g):\mathcal T_n(p_X)\to \big[2^{nR_1}\big] \times \big[2^{nR_2}\big]$ is a random $2^{\big[nR_3\big]}$-to-1 function. 
Moreover, since $X^n$ is distributed uniformly over $\mathcal{T}_{n}(p_X)$, random variables $M$, $G$ and $U$ will be uniform and mutually independent. 

If $R_{3}>H(X|Y)$, having access to $(U, Y^n)$, the legitimate receiver can decode $X^n$ with a vanishing average error probability: 
\begin{align}
\mathbb{E}\big[\mathrm{Pr}(\mathsf{error})\big]\rightarrow 0,
\label{eqne333322s2334}
\end{align}
as $n$ goes to infinity, where the average is taken over the random choice of $f$. 
Next, by Theorem~\ref{thm2ere34} since $R_{1}+R_{3}<H(X)-I_{\alpha}^{\mathrm{c}}(X;Z)$, we have
\begin{align}
\mathbb{E}\big[V_\alpha(M,U~;Z^n)\big]&\leq 
2^{-\frac {n}{\alpha'}\big(H(X)-I_{\alpha}^{\mathrm{c}}(X;Z)-R_1-R_3+o(n)\big)}.
\end{align}
On the other hand, by Theorem~\ref{thm:conditional-V-1} we obtain 
\begin{align}\nonumber
\mathbb{E}\big[V_\alpha(M; Z^n|U)\big]&\leq 2^{\frac2\alpha -1}\mathbb{E}\big[V_\alpha(M,U~;Z^n)\big]
\\&\leq  
2^{-\frac {n}{\alpha'}\big(H(X)-I_{\alpha}^{\mathrm{c}}(X;Z)-R_1-R_3+o(n)\big)}.\label{eqne33das32334}
\end{align}
Therefore, using~\eqref{eqne333322s2334} and~\eqref{eqne33das32334}, and Markov's inequality together with a union bound, for any $\epsilon>0$ and sufficiently large $n$, there exists $u\in \big[2^{nR_3}\big]$ and a random labeling $f_0$ such that  
\begin{align}
\mathrm{Pr}(\mathsf{error}| f_0, U=u)\leq\epsilon,
\label{eqnsds1f}
\end{align}
and
\begin{align}
V_\alpha(M; Z^n|f_0, U=u)&\leq 
2^{-\frac {n}{\alpha'}\big(H(X)-I_{\alpha}^{\mathrm{c}}(X;Z)-R_1-R_3+o(n)\big)}.\label{eqnsds2f}
\end{align}
Now, as in \cite{yassaee2014achievability}, the code can be constructed as follows. We treat  $M$ as the message (which is distributed uniformly), select $G$ uniformly at random and independent of $M$ and transmit the codeword $X^n=f_0^{-1}(M, G, u)$. The legitimate receiver can decode $M$ with an asymptotically vanishing error because of~\eqref{eqnsds1f}, and the eavesdropper would gain no information about $M$ due to~\eqref{eqnsds2f}.
\end{proof}


\appendix

\vspace{.4in}
\begin{center}
\textbf{\LARGE Appendix}
\end{center}

\section{Riesz-Thorin interpolation theorem}\label{app:interpolation}
In this appendix, we provide a very brief simplified overview of the theory of interpolation spaces and the Riesz-Thorin theorem. For a detailed introduction to the subject, we refer to~\cite{Lunardi}.

Let $X$ be a finite dimensional complex vector space which can be equipped with different norms. 
Let us denote this vector space with two different such norms on it by $X_0, X_1$. Thus $X_0$ and $X_1$ are Banach spaces.   Then the theory of complex interpolation provides us with a method for constructing \emph{intermediate} Banach spaces $X_\theta$ for all $\theta\in [0,1]$. A typical example for such an interpolation family is the $\ell_p$ spaces. For $i=0,1$, letting $X_i=\ell_{p_i}(A)$ be the vector space $X=\ell(A)$ equipped with the $p_i$-norm, then the interpolating space $X_\theta$, for $\theta\in [0,1]$, is equal to $\ell_{p_\theta}(A)$  where $p_\theta$ is given by
\begin{align}\label{eq:def-p-theta}
\frac{1}{p_\theta} = \frac{1-\theta}{p_0} + \frac{\theta}{p_1}.
\end{align}
Similarly, the non-commutative spaces $L_{p_\theta}(A)$ form an interpolation family for $\theta\in [0,1]$ if $p_\theta$'s satisfy the above equation. A more sophisticated example is the family of vector-valued spaces;  For example, the interpolation of the $(p_0, q_0)$-norm and the $(p_1, q_1)$-norm is the $(p_\theta, q_\theta)$-norm where both $p_\theta$ and $q_\theta$ satisfy~\eqref{eq:def-p-theta}.

We can now state a version of the Riesz-Thorin interpolation theorem.

\begin{theorem}[Riesz-Thorin theorem]  \label{thm:Riesz-Thorin}
Let $\{X_\theta:\, \theta\in [0,1]\}$ and $\{Y_\theta:\, \theta\in [0,1]\}$ be two families of interpolation spaces over finite dimensional vector spaces $X$ and $Y$ respectively. Assume that $T: X\to Y$ is a linear map which can be regarded as a continuous map from $X_\theta$ to $Y_\theta$. Then for any $\theta\in [0,1]$ we have
$$\|T\|_{X_\theta \to Y_\theta} \leq \|T\|_{X_0\to Y_0}^{1-\theta} \cdot \|T\|_{X_1\to Y_1}^{\theta},$$
where $\|T\|_{X_\theta\to Y_\theta}$ is the operator norm given by
$$\|T\|_{X_\theta\to Y_\theta}= \sup_{x\in X} \frac{\|T(x)\|_{Y_{\theta}}}{\|x\|_{X_\theta}}.$$
\end{theorem}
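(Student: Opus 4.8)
The plan is to prove the statement directly from the definition of the complex interpolation spaces $X_\theta = [X_0, X_1]_\theta$ and $Y_\theta = [Y_0, Y_1]_\theta$ through Calder\'on's space of analytic functions, so that the inequality falls out of a single scalar rescaling together with the defining infimum of the interpolation norm. Write $S = \{z \in \mathbb{C} : 0 \le \operatorname{Re} z \le 1\}$ for the closed strip and let $\mathcal F(X_0, X_1)$ be the space of bounded continuous functions $F : S \to X_0 + X_1$, analytic in the interior of $S$, with $F(\mathrm i t) \in X_0$ and $F(1 + \mathrm i t) \in X_1$ for all real $t$, normed by
$$\|F\|_{\mathcal F} = \max\Big\{ \sup_t \|F(\mathrm i t)\|_{X_0},\ \sup_t \|F(1 + \mathrm i t)\|_{X_1} \Big\}.$$
The defining property I would use is that $\|x\|_{X_\theta} = \inf\{ \|F\|_{\mathcal F} : F(\theta) = x\}$, and likewise for $Y_\theta$.

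First I would fix $x \in X$ and $\epsilon > 0$ and choose $F \in \mathcal F(X_0, X_1)$ with $F(\theta) = x$ and $\|F\|_{\mathcal F} \le \|x\|_{X_\theta} + \epsilon$. Writing $M_0 = \|T\|_{X_0 \to Y_0}$ and $M_1 = \|T\|_{X_1 \to Y_1}$ (assuming for the moment $M_0, M_1 > 0$), the crucial device is the rescaled function
$$G(z) = M_0^{\, z - 1}\, M_1^{\, -z}\, T\big(F(z)\big), \qquad z \in S,$$
whose scalar prefactor is entire, nowhere vanishing, and has modulus exactly $M_0^{-1}$ on the line $\operatorname{Re} z = 0$ and $M_1^{-1}$ on the line $\operatorname{Re} z = 1$. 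The point of these weights is to equalize the two boundary bounds.

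The main calculation is then the pair of boundary estimates: on $\operatorname{Re} z = 0$,
$$\|G(\mathrm i t)\|_{Y_0} = M_0^{-1}\,\|T F(\mathrm i t)\|_{Y_0} \le M_0^{-1} M_0 \|F(\mathrm i t)\|_{X_0} \le \|F\|_{\mathcal F},$$
and symmetrically $\|G(1 + \mathrm i t)\|_{Y_1} \le \|F\|_{\mathcal F}$, so $G \in \mathcal F(Y_0, Y_1)$ with $\|G\|_{\mathcal F} \le \|F\|_{\mathcal F}$. The definition of the interpolation norm applied to the vector $G(\theta)$ then gives $\|G(\theta)\|_{Y_\theta} \le \|G\|_{\mathcal F} \le \|x\|_{X_\theta} + \epsilon$; since $G(\theta) = M_0^{\theta - 1} M_1^{-\theta}\, T(x)$, rearranging yields $\|T(x)\|_{Y_\theta} \le M_0^{1-\theta} M_1^{\theta}(\|x\|_{X_\theta} + \epsilon)$, and letting $\epsilon \to 0$ and taking the supremum over $x$ finishes the bound. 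The degenerate cases $M_0 = 0$ or $M_1 = 0$ I would handle by replacing $M_j$ with $M_j + \delta$ and sending $\delta \to 0$.

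The step I expect to be the main obstacle is verifying that $G$ genuinely lies in $\mathcal F(Y_0, Y_1)$: one has to confirm that composing the admissible $X$-valued function $F$ with the linear map $T$ preserves analyticity in the interior, boundedness on $S$, and the boundary-space memberships, which is precisely where the hypothesis that $T$ is continuous from each $X_j$ into $Y_j$ enters. Once the interpolation spaces are set up through $\mathcal F$, the decisive inequality $\|G(\theta)\|_{Y_\theta} \le \|G\|_{\mathcal F}$ is immediate from the defining infimum, so no separate invocation of the Hadamard three-lines lemma is needed in the argument itself---that lemma instead underpins the consistency of the complex interpolation construction we are relying on.
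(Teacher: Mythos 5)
Your proof is correct, and it is the standard argument for the operator interpolation theorem in the complex method: pick an admissible analytic function $F$ nearly attaining the interpolation norm of $x$, multiply $T\circ F$ by the entire, nowhere-vanishing weight $M_0^{z-1}M_1^{-z}$ so that both boundary estimates collapse to $\|F\|_{\mathcal F}$, and read off the bound at $z=\theta$ from the defining infimum of $\|\cdot\|_{Y_\theta}$. Be aware, however, that the paper itself does not prove this statement: Appendix~A presents it as background material, deferring to Lunardi's lecture notes, so there is no internal proof to compare against. Relative to that citation, your write-up supplies a self-contained verification that makes explicit (i) where the hypothesis that $T$ is continuous from each $X_j$ to $Y_j$ enters (namely, in showing $G(z)=M_0^{z-1}M_1^{-z}\,T(F(z))$ is an admissible element of $\mathcal F(Y_0,Y_1)$), and (ii) that the three-lines lemma is not needed for this step, but only to identify concrete scales --- such as the vector-valued $(1,\alpha)$- and $(1,1,\alpha)$-norms --- as complex interpolation families, which is precisely the separate fact the paper imports from Pisier whenever it invokes this theorem. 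Two minor simplifications available in the paper's finite-dimensional setting: analyticity and boundedness of $T\circ F$ are automatic since any linear map between finite-dimensional spaces is bounded in any pair of norms, and the degenerate case $M_j=0$ needs no $\delta$-perturbation, since a vanishing operator norm forces $T=0$ and the inequality is trivial.
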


Restricting to the example of vector-valued $p$-norms, we obtain the following.

\begin{corollary}
Let $\Xi:\fL(BA)\to \fL(CBA)$ be a linear map. Suppose that 
$$\| \Xi \|_{(1, \alpha)\to (1, 1, \alpha)} = t_\alpha, \qquad \alpha=1, 2.$$
Then for every $\alpha\in [1, 2]$ we have
$$\| \Xi \|_{(1, \alpha)\to (1, 1, \alpha)} \leq t_1^{1-\theta} \cdot t_2^{\theta},$$
where $\theta= 2/\alpha'$.
\end{corollary}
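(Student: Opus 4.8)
The plan is to apply the Riesz--Thorin theorem (Theorem~\ref{thm:Riesz-Thorin}) directly, choosing the two interpolation families so that the endpoints $\theta=0$ and $\theta=1$ recover the cases $\alpha=1$ and $\alpha=2$ respectively. First I would set up the domain family on the fixed vector space $\fL(BA)$, equipping it with the $(1,q_\theta)$-norm, where $q_\theta$ is obtained by interpolating the exponents $q_0=1$ and $q_1=2$ according to~\eqref{eq:def-p-theta}, i.e.\ $1/q_\theta=(1-\theta)/1+\theta/2=1-\theta/2$; the first index is held at $1$ since $1/p_\theta=(1-\theta)+\theta=1$. As recalled in this appendix, the vector-valued norms form a genuine interpolation family, so $X_0=\fL(BA)$ with the $(1,1)$-norm and $X_1=\fL(BA)$ with the $(1,2)$-norm have intermediate spaces $X_\theta=\fL(BA)$ carrying the $(1,q_\theta)$-norm.

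Similarly I would set up the range family on $\fL(CBA)$, equipping it with the $(1,1,q_\theta)$-norm, again interpolating only the last exponent between $1$ and $2$ while holding the first two indices fixed at $1$. Thus $Y_0$ carries the $(1,1,1)$-norm and $Y_1$ carries the $(1,1,2)$-norm, and by hypothesis $\|\Xi\|_{X_0\to Y_0}=t_1$ and $\|\Xi\|_{X_1\to Y_1}=t_2$.

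The key step is then to identify the interpolation parameter. Solving $1/q_\theta=1-\theta/2$ for $q_\theta=\alpha$ gives $1/\alpha=1-\theta/2$, i.e.\ $\theta/2=1-1/\alpha=1/\alpha'$, so $\theta=2/\alpha'$; in particular $\theta$ ranges over $[0,1]$ exactly as $\alpha$ ranges over $[1,2]$. With this choice $X_\theta$ is $\fL(BA)$ with the $(1,\alpha)$-norm and $Y_\theta$ is $\fL(CBA)$ with the $(1,1,\alpha)$-norm, so Theorem~\ref{thm:Riesz-Thorin} yields
$$\|\Xi\|_{(1,\alpha)\to(1,1,\alpha)}=\|\Xi\|_{X_\theta\to Y_\theta}\leq \|\Xi\|_{X_0\to Y_0}^{1-\theta}\|\Xi\|_{X_1\to Y_1}^{\theta}=t_1^{1-\theta}t_2^{\theta},$$
which is exactly the claim.

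I do not expect any serious obstacle here: the entire content is bookkeeping of which exponent interpolates to what, and the only things to verify are the relation $\theta=2/\alpha'$ together with the fact---already cited from~\cite{Pisier} above---that the vector-valued $L_p$ families are interpolation families, so that Theorem~\ref{thm:Riesz-Thorin} applies verbatim.
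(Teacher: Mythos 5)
Your proposal is correct and follows exactly the route the paper intends: the corollary is stated as an immediate specialization of Theorem~\ref{thm:Riesz-Thorin} to the vector-valued families $(1,q_\theta)$ on $\fL(BA)$ and $(1,1,q_\theta)$ on $\fL(CBA)$, which is precisely your setup, and your bookkeeping $1/q_\theta = 1-\theta/2 = 1/\alpha$, hence $\theta = 2/\alpha'$, is the right identification of the interpolation parameter. Nothing is missing.
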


\section{A new Tsallis  mutual information}\label{app:MI}
Given a convex function $\mathsf f$ satisfying $\mathsf f(1)=0$ and two distributions $p(x)$ and $q(x)$ on a discrete space $\mathcal{X}$, the $\mathsf  f$-divergence between $p$ and $q$ is defined as
$$D_{\mathsf f}(p\|q) = \sum_{x}q(x)\mathsf  f\left(\frac{p(x)}{q(x)}\right).$$
There are two proposals for defining a mutual information in terms of such a divergence. 
The first one given in~\cite[Eq. 3.10.1]{cohen1998comparisons} is
$$I^{CKZ}_{\mathsf f}(A;B):= D_{\mathsf f}(p_{AB}\|p_A\times p_B)=\sum_{a,b}p(a)p(b)\mathsf f\left(\frac{p(ab)}{p(a)p(b)}\right)=\sum_{a}p(a)D_{\mathsf f}(p_{B|a}\|p_B),$$
and has been studied in the literature (e.g. see \cite[Theorem 5.2]{raginsky2016strong},\cite{hsu2018generalizing}). Another definition is given in  \cite[Eq. 79]{polyanskiy2010arimoto}:
$$I^{PV}_{\mathsf f}(A;B):= \min_{q_B}D_{\mathsf f}(p_{AB}\|p_A\times q_B).$$

Herein, we propose yet a new definition of  mutual $\mathsf f$-information. Given a convex function $\mathsf f$, we define its mutual $\mathsf f$-information by
\begin{align}
I_{\mathsf f}(A;B)&:=\min_{q_B}D_{\mathsf f}(p_{AB}\|p_A\times q_B)- D_{\mathsf f}(p_B\|q_B)\nonumber
\\&=
\min_{q_B}\sum_{a}p(a)D_{\mathsf f}(p(_{B|a}\|q_B)- D_{\mathsf f}(p_B\|q_B).\label{eqnd434fkjht}
\end{align}
Observe that our mutual $\mathsf f$-information is smaller then the previous ones:
$$I_{\mathsf f}(A;B)\leq I_{\mathsf f}^{PV}(A;B)\leq I_{\mathsf f}^{CKZ}(A;B).$$
Moreover, when $D_{\mathsf f}(\cdot\|\cdot)$ is the KL divergence, $I_{\mathsf f}(A;B)$ reduces to Shannon's mutual information. 

Since we expect mutual $\mathsf f$-information to satisfy the data processing inequality, we impose a further assumption on the convex function $\mathsf f$. Interestingly, this assumption is the same as the one that gives the subadditivity of the $\Phi$-entropy.

\begin{mydef}\label{def:class-C}
Define $\mathscr F$ be the class of convex functions $\mathsf f(t)$ on $[0,\infty]$ that are \emph{not} affine (not of the form $t\mapsto at+b$ for some constants $a$ and $b$), $\mathsf f(1)=0$,  and $1/\mathsf f''$ is concave. 
\end{mydef}

An important property of the class $\mathscr F$ is the following.

\begin{lemma}\label{lemma1sd} For any function $\mathsf f\in\mathcal{F}$ and non-negative weights $\lambda_i$, $i=1,2,\cdots, n$, adding up to one, the function
$$G(x_1, x_2, \cdots, x_n)= \sum_i \lambda_i \mathsf f\left(x_i\right) -\mathsf f\Big(\sum_{i}\lambda_i x_i\Big) $$
is jointly convex. 
\end{lemma}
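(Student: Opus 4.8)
The plan is to reduce the statement to the positive semidefiniteness of the Hessian of $G$ and then establish this by a two-step estimate, using the Cauchy--Schwarz inequality followed by the concavity of $1/\mathsf f''$. First I would reduce to the case in which $\mathsf f$ is twice continuously differentiable with $\mathsf f''>0$ on the interior of its domain; this regularity is implicit in referring to $1/\mathsf f''$, and the general case follows because joint convexity is preserved under pointwise limits, so one approximates a member of $\mathscr F$ by smooth, strictly convex members of the class. Since any variable carrying zero weight does not appear in $G$, I may also assume $\lambda_i>0$ for all $i$.

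Next I would compute the Hessian of $G$. Writing $s=\sum_i\lambda_i x_i$, one has $\partial_j G=\lambda_j\mathsf f'(x_j)-\lambda_j\mathsf f'(s)$, and therefore
$$\frac{\partial^2 G}{\partial x_j\partial x_k}=\lambda_j\,\mathsf f''(x_j)\,\delta_{jk}-\lambda_j\lambda_k\,\mathsf f''(s).$$
Thus for an arbitrary vector $v=(v_1,\dots,v_n)$ the associated quadratic form of the Hessian $H$ is
$$v^{\mathsf T}Hv=\sum_j\lambda_j\,\mathsf f''(x_j)\,v_j^2-\mathsf f''(s)\Big(\sum_j\lambda_j v_j\Big)^2,$$
and the joint convexity of $G$ is precisely the assertion that this quantity is non-negative for every $v$.

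To control the (negative) second term I would apply the Cauchy--Schwarz inequality in the form
$$\Big(\sum_j\lambda_j v_j\Big)^2=\Big(\sum_j\sqrt{\lambda_j\,\mathsf f''(x_j)}\,v_j\cdot\frac{\sqrt{\lambda_j}}{\sqrt{\mathsf f''(x_j)}}\Big)^2\leq\Big(\sum_j\lambda_j\,\mathsf f''(x_j)\,v_j^2\Big)\Big(\sum_j\frac{\lambda_j}{\mathsf f''(x_j)}\Big).$$
Substituting this bound into the quadratic form, it suffices to show that $\mathsf f''(s)\sum_j\lambda_j/\mathsf f''(x_j)\leq 1$, that is,
$$\sum_j\lambda_j\,\frac{1}{\mathsf f''(x_j)}\leq\frac{1}{\mathsf f''\big(\sum_j\lambda_j x_j\big)}.$$
This is exactly Jensen's inequality for the concave function $1/\mathsf f''$, valid for the convex combination with weights $\lambda_j$, which holds by the defining property of the class $\mathscr F$. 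Combining the two displayed inequalities gives $v^{\mathsf T}Hv\geq 0$, and hence the joint convexity of $G$.

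The part I expect to be the main obstacle is not the core computation but the regularity reduction: justifying the passage to smooth, strictly convex $\mathsf f$ while preserving both $\mathsf f(1)=0$ and the concavity of $1/\mathsf f''$, and in particular handling points where $\mathsf f''$ may vanish, so that $1/\mathsf f''$ becomes infinite and the Cauchy--Schwarz step degenerates. I would address this through a mollification argument together with the fact that the target inequality is closed under taking limits, restricting if necessary to the open set on which $\mathsf f''>0$ and then passing to the limit.
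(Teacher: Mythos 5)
Your proof is correct, but it takes a different route from the paper. The paper disposes of the lemma in two lines: it cites \cite[Exercise 14.2]{boucheron2004concentration} for the statement that concavity of $1/\mathsf f''$ makes the two-variable function $(s,t)\mapsto \lambda \mathsf f(s)+(1-\lambda)\mathsf f(t)-\mathsf f(\lambda s+(1-\lambda)t)$ jointly convex, and then extends to $n$ variables by induction (grouping $x_2,\dots,x_n$ into their weighted average $y$ and writing the $n$-variable $G$ as a two-variable term plus $\mu$ times an $(n-1)$-variable $G$). What you have done is essentially prove the content of that cited exercise from scratch, and directly for all $n$ at once: the Hessian computation, the Cauchy--Schwarz bound
$\big(\sum_j\lambda_j v_j\big)^2\leq\big(\sum_j\lambda_j\,\mathsf f''(x_j)\,v_j^2\big)\big(\sum_j\lambda_j/\mathsf f''(x_j)\big)$,
and Jensen's inequality for the concave function $1/\mathsf f''$ combine correctly to give $v^{\mathsf T}Hv\geq 0$. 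Your version buys self-containedness and avoids both the external citation and the induction; the paper's version buys brevity. One remark on the obstacle you flag: the mollification machinery is not really needed. The definition of $\mathscr F$ already presupposes that $\mathsf f''$ exists (the paper's citation carries the same implicit assumption, so you are no worse off), and the degenerate case $\mathsf f''(x_j)=0$ cannot occur in the interior: if $1/\mathsf f''$ is concave and equal to $+\infty$ at an interior point, then it is $+\infty$ on the whole interior, i.e.\ $\mathsf f$ is affine, which is excluded by the definition of $\mathscr F$. So under the standing smoothness assumption your Cauchy--Schwarz step never degenerates, and the proof closes without any limiting argument.
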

\begin{proof}
From  \cite[Exercise 14.2]{boucheron2004concentration}, concavity of $1/\mathsf f''$ implies that the  function $(s, t)\mapsto \lambda \mathsf f(s) + (1-\lambda) \mathsf f(t) - \mathsf f(\lambda s+ (1-\lambda) t) $, for any $\lambda\in[0,1]$, is jointly convex. One can use induction to obtain the claim of this lemma.
\end{proof}

Examples of functions in $\mathscr F$ include $\mathsf f(t)=t\log t$ and $\mathsf f(t)=\frac{1}{\alpha-1}(t^{\alpha}-1)$ for $\alpha\in(1, 2]$.

\begin{theorem}\label{data processing T} 
For any function $\mathsf f\in\mathscr F$, the mutual $f$-information $I_{\mathsf f}(A;B)$
satisfies the followings:
\begin{itemize}
\item[{\rm (i)}] $I_{\mathsf f}(A;B)=0$ if and only if $A$ and $B$ are independent. 
\item[{\rm (ii)}] If $C-A-B-D$ forms a Markov chain, then $I_{\mathsf f}(A;B)\geq I_{\mathsf f}(C;D)$.
\end{itemize}
\end{theorem}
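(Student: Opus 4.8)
The plan is to reduce everything to one algebraic identity. Writing $D_{\mathsf f}(p\|q)=\sum_x q(x)\mathsf f\!\left(p(x)/q(x)\right)$ and setting $\mathbf v_b:=(p(b|a))_a$, and using $\sum_a p(a)\,p(b|a)/q_B(b)=p_B(b)/q_B(b)$, one checks directly that for every reference $q_B$
$$\sum_a p(a) D_{\mathsf f}(p_{B|a}\|q_B) - D_{\mathsf f}(p_B\|q_B) = \sum_b q_B(b)\, G\big(\mathbf v_b/q_B(b)\big),$$
where $G$ is the function of Lemma~\ref{lemma1sd} taken with weights $\lambda_a=p(a)$. Since $\mathsf f$ is convex, $G\ge 0$ (this is Jensen's inequality), so the bracket defining $I_{\mathsf f}$ is non-negative for every $q_B$ and hence $I_{\mathsf f}(A;B)\ge 0$. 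If $A$ and $B$ are independent then $p_{B|a}=p_B$, and taking $q_B=p_B$ makes every summand on the right vanish (each coordinate of $\mathbf v_b/p_B(b)$ equals $1$ and $\mathsf f(1)=0$); combined with $I_{\mathsf f}\ge 0$ this gives $I_{\mathsf f}(A;B)=0$, proving one direction of (i).

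For the converse in (i), let $q^\ast$ be a minimizer of the continuous objective over the simplex (one checks $q^\ast(b)>0$ whenever $p_B(b)>0$, since otherwise the objective would be infinite). From $I_{\mathsf f}(A;B)=0$ and non-negativity of each summand we get $G(\mathbf v_b/q^\ast(b))=0$ for every such $b$. Because $\mathsf f\in\mathscr F$ is strictly convex---the requirement that $1/\mathsf f''$ be concave forces $\mathsf f''>0$---equality in Jensen's inequality forces $p(b|a)/q^\ast(b)$ to be independent of $a$, i.e.\ $p(b|a)=p_B(b)$ for all $a$ with $p(a)>0$; for $b$ with $p_B(b)=0$ one has $p(b|a)=0$ trivially. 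Hence $A$ and $B$ are independent.

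For (ii) I would split the Markov chain into two one-sided data-processing inequalities, $I_{\mathsf f}(A;B)\ge I_{\mathsf f}(C;B)\ge I_{\mathsf f}(C;D)$, where the first processes the first variable ($C-A-B$) and the second processes the second variable ($C-B-D$, which is a valid chain since $D$ depends only on $B$). Processing the first variable is the easy step: the $B$-marginal, and hence the subtracted term $D_{\mathsf f}(p_B\|q_B)$, are unchanged, so I would feed the reference $q^\ast_B$ optimal for $I_{\mathsf f}(A;B)$ into $I_{\mathsf f}(C;B)$ and use convexity of $p\mapsto D_{\mathsf f}(p\|q^\ast_B)$ together with $p_{B|c}=\sum_a p(a|c)\,p_{B|a}$ and $\sum_c p_C(c)p(a|c)=p(a)$ to get $\sum_c p_C(c)D_{\mathsf f}(p_{B|c}\|q^\ast_B)\le\sum_a p(a)D_{\mathsf f}(p_{B|a}\|q^\ast_B)$; subtracting the common term yields $I_{\mathsf f}(C;B)\le I_{\mathsf f}(A;B)$.

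Processing the second variable is the main obstacle, because the subtracted term $D_{\mathsf f}(p_B\|q_B)$ breaks the naive argument: the data-processing inequality for $\mathsf f$-divergences bounds the positive and the subtracted terms in the wrong combined direction. The fix is to exploit the homogeneity hidden in the identity above. Applied to the pair $(C,B)$ that identity reads $\sum_b H\big(q_B(b),\mathbf v_b\big)$, with $\mathbf v_b=(p(b|c))_c$ and $H(q,\mathbf v):=q\,G(\mathbf v/q)$ the perspective of $G$ (now with weights $p_C$). As the perspective of the convex $G$, the map $H$ is jointly convex and positively homogeneous of degree one, hence sublinear: $H\big(\sum_b c_b u_b\big)\le\sum_b c_b\,H(u_b)$ for $c_b\ge0$. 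Taking $q_B$ optimal for $I_{\mathsf f}(C;B)$, the channel $W=p_{D|B}$ and the induced reference $q_D=Wq_B$, and noting $\mathbf w_d:=(p(d|c))_c=\sum_b W(d|b)\mathbf v_b$, $q_D(d)=\sum_b W(d|b)q_B(b)$ and $\sum_d W(d|b)=1$, sublinearity gives $I_{\mathsf f}(C;D)\le\sum_d H(q_D(d),\mathbf w_d)\le\sum_d\sum_b W(d|b)\,H(q_B(b),\mathbf v_b)=\sum_b H(q_B(b),\mathbf v_b)=I_{\mathsf f}(C;B)$. Chaining the two inequalities proves the data-processing inequality, and I expect this perspective-function/sublinearity step to be the crux of the whole argument.
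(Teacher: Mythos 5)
Your proof is correct and takes essentially the same route as the paper's: the same split into two one-sided inequalities (processing the first variable via convexity of $\mathsf f$, then processing the second variable with the pushed-forward reference $q_D(d)=\sum_b p(d|b)q_B(b)$), and your sublinearity-of-the-perspective step is exactly the paper's application of Jensen's inequality to the jointly convex function $G$ of Lemma~\ref{lemma1sd} with the renormalized mixture weights $p(d|b)q_B(b)/q_D(d)$, just phrased in perspective-function language. The only addition is your argument for part (i), which the paper omits as ``easy'' and which is sound modulo the routine verification that a minimizer $q^\ast$ exists (lower semicontinuity on the simplex) and that $1/\mathsf f''$ concave indeed forces $\mathsf f''>0$ on the interior.
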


\begin{proof}
The proof of (i) is easy, so only present the proof of (ii).
Observe that 
$$I_{\mathsf f}(A;B)=\min_{q_B}\sum_{b}q(b)\left(\sum_a p(a)\mathsf f\left(\frac{p(b|a)}{q(b)}\right) -\mathsf f\left(\frac{p(b)}{q(b)}\right)\right).$$
Take some Markov chain $C-A-B$. 
Since $\mathsf f$ is convex, by Jensen's inequality we have
\begin{align*}\sum_a p(a)\mathsf f\left(\frac{p(b|a)}{q(b)}\right)&=\sum_{a,c} p(a,c)\mathsf f\left(\frac{p(b|a)}{q(b)}\right)
\\&\geq \sum_{c} p(c)\mathsf f\left(\sum_{a}p(a|c)\frac{p(b|a)}{q(b)}\right)
\\&=\sum_c p(c)\mathsf f\left(\frac{p(b|c)}{q(b)}\right).
\end{align*}
Therefore, $I_{\mathsf f}(C;B)\leq I_{\mathsf f}(A;B)$.

Next,  take some Markov chain $A-B-D$.  
To prove
$$I_{\mathsf f}(A;B)\geq I_{\mathsf f}(A;D),$$
it suffices to take some $q(b)$ and introduce some $q(d)$ such that
\begin{align}
\sum_{b}q(b)\left[\sum_a p(a)\mathsf f\left(\frac{p(b|a)}{q(b)}\right) -\mathsf f\left(\frac{p(b)}{q(b)}\right)\right]
\geq
\sum_{d}q(d)\left[\sum_a p(a)\mathsf f\left(\frac{p(d|a)}{q(d)}\right) -\mathsf f\left(\frac{p(d)}{q(d)}\right)\right]\label{eqn:A32}
\end{align}
Let $q(d)=\sum_{b}q(b)p(d|b)$. 
Then, we can write
$$\sum_{b}q(b)\left[\sum_a p(a)\mathsf f\left(\frac{p(b|a)}{q(b)}\right) -\mathsf f\left(\frac{p(b)}{q(b)}\right)\right]
=
\sum_{b,d}q(d)p(d|b)\left[\sum_a p(a)\mathsf f\left(\frac{p(b|a)}{q(b)}\right) -\mathsf f\left(\frac{p(b)}{q(b)}\right)\right]
$$
To prove \eqref{eqn:A32}, it suffices to show that for every $d$, we have
\begin{align}\sum_{b}p(d|b)\left[\sum_a p(a)\mathsf f\left(\frac{p(b|a)}{q(b)}\right) -\mathsf f\left(\frac{p(b)}{q(b)}\right)\right]
\geq
\sum_a p(a)\mathsf f\left(\frac{p(d|a)}{q(d)}\right) -\mathsf f\left(\frac{p(d)}{q(d)}\right)\label{eqn:A4352}\end{align}
For a fixed and given $p(a)$, consider the function
$$g\in \ell(\mathcal A)~ \mapsto~ \sum_a p(a)\mathsf{f} \big(g(a)\big) -\mathsf f\bigg(\sum_{a}p(a)g(a)\bigg).$$
According to Lemma \ref{lemma1sd}, this function is jointly convex in $g$. Therefore, \eqref{eqn:A4352} follows from Jensen's inequality on this jointly convex function since
$$\frac{p(d|a)}{q(d)}=\sum_b p(d|b) \frac{p(b|a)}{q(d)}.$$
\end{proof}


Let $\mathsf f_{\alpha}(t)=\frac{1}{\alpha-1}(t^{\alpha}-1)$ for $\alpha\in(1,2]$. As mentioned above this function belongs to $\mathscr F$. Then, following \eqref{eqnd434fkjht} we can define the Tsallis mutual information of order $\alpha$ by
\begin{align}\label{eq:tmi-0}
I_{\mathsf f_\alpha}(A;B)=\frac{1}{\alpha-1}\min_{q_B}\bigg\{\sum_{a}p(a)\bigg(\sum_{b}q(b)^{1-\alpha}p(b|a)^\alpha\bigg) -  \sum_{b}q(b)^{1-\alpha}p(b)^\alpha\bigg\}.
\end{align}
The reason that we call it Tsallis mutual information is that the Tsallis relative entropy can be defined in terms of the function $\mathsf f_{\alpha}$.

\begin{theorem} \label{Tsalis} 
The Tsallis mutual information defined in~\eqref{eq:tmi-0} equals 
\begin{align*}
I_{\mathsf f_\alpha}(A;B)=\frac{1}{\alpha-1}\bigg(\sum_b \bigg(\sum_a p(a) p(b|a)^\alpha - p(b)^\alpha\bigg)^\frac{1}{\alpha}\bigg)^\alpha.
\end{align*}
In particular, we have
\begin{align*}
\sqrt{I_{\mathsf f_2}(A;B)}&=\sum_b \bigg(\sum_a p(a) p(b|a)^2 - p(b)^2\bigg)^\frac{1}{2}
\\&=\sum_b \bigg(\sum_a p(a) \Big(p(b|a) - p(b)\Big)^2\bigg)^\frac{1}{2}
\\&=V_2(A;B).
\end{align*}
\end{theorem}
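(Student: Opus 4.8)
The plan is to reduce the minimization in \eqref{eq:tmi-0} to a single optimization over the distribution $q_B$ and then solve it with H\"older's inequality. First I would simplify the objective: writing
$$c_b := \sum_a p(a)p(b|a)^\alpha - p(b)^\alpha,$$
the bracketed expression in \eqref{eq:tmi-0} collapses to $\sum_b q(b)^{1-\alpha} c_b$, since $\sum_a p(a)=1$ lets the factor $q(b)^{1-\alpha}$ be pulled out of the sum over $a$. Note $c_b\geq 0$: because $t\mapsto t^\alpha$ is convex for $\alpha>1$, Jensen's inequality gives $\sum_a p(a)p(b|a)^\alpha \geq \big(\sum_a p(a)p(b|a)\big)^\alpha = p(b)^\alpha$. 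Thus the task becomes computing $\min_{q_B}\sum_b c_b\, q(b)^{1-\alpha}$ over probability distributions $q_B$.

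For the lower bound I would apply H\"older's inequality with conjugate exponents $\alpha$ and $\alpha'=\alpha/(\alpha-1)$. The key algebraic observation is the factorization
$$c_b^{1/\alpha} = \big(c_b\, q(b)^{1-\alpha}\big)^{1/\alpha}\cdot q(b)^{(\alpha-1)/\alpha},$$
so that summing over $b$ and applying H\"older yields
$$\sum_b c_b^{1/\alpha} \leq \Big(\sum_b c_b\, q(b)^{1-\alpha}\Big)^{1/\alpha}\Big(\sum_b q(b)\Big)^{1/\alpha'} = \Big(\sum_b c_b\, q(b)^{1-\alpha}\Big)^{1/\alpha},$$
where I used $\tfrac{\alpha-1}{\alpha}\cdot\alpha'=1$ and $\sum_b q(b)=1$. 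Raising to the power $\alpha$ gives $\sum_b c_b\, q(b)^{1-\alpha}\geq \big(\sum_b c_b^{1/\alpha}\big)^\alpha$ for every $q_B$. To see this bound is attained (so the infimum is a minimum equal to the right-hand side), I would exhibit the optimizer $q(b)\propto c_b^{1/\alpha}$, for which the two sequences entering H\"older are proportional and equality holds; direct substitution confirms the value $\big(\sum_b c_b^{1/\alpha}\big)^\alpha$. Multiplying by $1/(\alpha-1)$ gives the first displayed formula of the theorem.

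Finally, for the ``in particular'' claim I would specialize to $\alpha=2$, where $1/(\alpha-1)=1$ and the formula reads $I_{\mathsf f_2}(A;B)=\big(\sum_b c_b^{1/2}\big)^2$, hence $\sqrt{I_{\mathsf f_2}(A;B)}=\sum_b c_b^{1/2}$ with $c_b=\sum_a p(a)p(b|a)^2 - p(b)^2$. Expanding the square and using $\sum_a p(a)p(b|a)=p(b)$ together with $\sum_a p(a)=1$ gives the identity $c_b=\sum_a p(a)\big(p(b|a)-p(b)\big)^2$, which matches the classical expression for $V_2(A;B)$ displayed in Section~\ref{sec:correlation_metric}, completing the chain of equalities. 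The simplification of the objective and the $\alpha=2$ expansion are routine; the only step requiring genuine care is the H\"older application, specifically verifying that the exponents $\tfrac{\alpha-1}{\alpha}$ and $\alpha'$ combine so that the second factor is exactly $\sum_b q(b)=1$, and handling the degenerate terms where $c_b=0$ (set $q(b)=0$ with the convention $0\cdot\infty=0$) so that both the optimizer and the equality case remain valid.
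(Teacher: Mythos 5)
Your proposal is correct, and it follows the paper's reduction exactly up to the final optimization: both arguments collapse the objective in~\eqref{eq:tmi-0} to $\sum_b c_b\,q(b)^{1-\alpha}$ with $c_b=\sum_a p(a)p(b|a)^\alpha-p(b)^\alpha\geq 0$ (via Jensen), and both identify the same optimizer $q^*(b)\propto c_b^{1/\alpha}$. Where you diverge is in how optimality is established. The paper proceeds variationally: it forms the Lagrangian, notes the objective is convex in $q_B$, sets the derivative $-q(b)^{-\alpha}c_b-\lambda=0$ to find the stationary point, and substitutes back. You instead prove the bound $\sum_b c_b\,q(b)^{1-\alpha}\geq\big(\sum_b c_b^{1/\alpha}\big)^\alpha$ directly for \emph{every} distribution $q_B$ by H\"older's inequality with exponents $(\alpha,\alpha')$, using the factorization $c_b^{1/\alpha}=\big(c_b q(b)^{1-\alpha}\big)^{1/\alpha}q(b)^{(\alpha-1)/\alpha}$, and then exhibit $q^*$ to show the bound is attained; your exponent bookkeeping checks out, since $\frac{\alpha-1}{\alpha}\alpha'=1$ makes the second H\"older factor equal $\sum_b q(b)=1$. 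The H\"older route buys a fully self-contained inequality that holds on the closed simplex without differentiation, and your explicit treatment of the degenerate terms $c_b=0$ (where $q^*(b)=0$ forces the convention $0\cdot\infty=0$) patches a boundary case that the paper's calculus argument glosses over; the Lagrangian route, conversely, is more mechanical to discover and generalizes to objectives where no clean H\"older pairing exists. Both are valid; yours is arguably the tighter writeup of the same computation.
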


\begin{proof}
We use the Lagrange multipliers method for the optimal $q_B$ in~\eqref{eq:tmi-0}. The Lagrangian function of the optimization problem  equals
\begin{align*}
\mathcal{L}(q_B,\lambda)&= \frac{1}{\alpha-1}\bigg(\sum_{a,b}p(a)q(b)^{1-\alpha}p(b|a)^\alpha-  \sum_{b}q(b)^{1-\alpha}p(b)^\alpha\bigg)-\lambda \Big( \sum_b q(b) -1\Big)\\
&=\frac{1}{\alpha-1}\bigg(\sum_{a,b}p(a)q(b)^{1-\alpha}\Big( p(b|a)^\alpha- p(b)^\alpha\Big)\bigg)-\lambda \Big( \sum_b q(b) -1\Big)
\\&=\frac{1}{\alpha-1}\bigg(\sum_{b}q(b)^{1-\alpha}g(b)\bigg)-\lambda \bigg( \sum_b q(b) -1\bigg),
\end{align*}
where $g(b)=\sum_a p(a) p(b|a)^\alpha - p(b)^\alpha$. Note that by Jensen's inequality we have
$g(b)\geq 0$ for all $b\in \mathcal B$. The function $\mathcal{L}(q_B,\lambda)$ is convex in $q_B$. Then to find its minimum with respect to $q_B$, we take the derivative:
\begin{align*}
\frac{\partial\mathcal{L}(q_B,\lambda)}{\partial q(b)}=-q(b)^{-\alpha}g(b)-\lambda=0.
\end{align*}
This shows that $q(b)$ must be proportional to $g(b)^{\frac{1}{\alpha}}$. Then the optimal $q_B$ is given by
\begin{align*}
q^*(b)=\frac{g(b)^\frac{1}{\alpha}}{\sum_{\bar{b}}g(\bar{b})^\frac{1}{\alpha}}.
\end{align*}
Substituting $q(b)=q^*(b)$ in~\eqref{eq:tmi-0} yields the desired result.
\end{proof}

\section{Semantic security and $V_\infty$}\label{app:semantic-sec}

Most existing works in information theoretic security literature assume a message that is random and uniformly distributed. However, as pointed out in \cite{CuffPermuter} this assumption may not be valid for many real-life messages such as files or votes. The semantic security is a cryptographic requirement that addresses this point. It was shown in  \cite{bellare2012semantic} that semantic security is equivalent with a negligible mutual information between the message and the adversary's observations \emph{for all message distributions}. 

For a bipartite probability distribution $p_{AB}$ we have
$$V_\infty(A;B) = \sum_{b} \max_{a:\, p(a)>0} \big|p(b|a) - p(b) \big|.$$
This expression is similar to $V_1(A;B)$, except that the average over $a\in \mathcal A$ is replaced by a maximum over $a$. We show that $V_\infty(A;B)$ is related to the semantic security. If $A$ is the message and $B$ is an eavesdropper's information, $V_1(A;B)$ can be understood as the average leakage (over all messages), whereas $V_\infty(A;B)$ controls the worst-case leakage. That is, if $V_\infty(A;B)$ is small,  any two distinct message symbols $a, a'$ cannot
be distinguished by the eavesdropper. However, if $V_1(A;B)$ is small, it may be still the case that few
of the message symbols are perfectly distinguishable. 

Given $p_{AB}$, the authors in \cite{bellare2012semantic} show that semantic security holds if and only if $I(A;B)_q$ is small for all $q_{AB}$ of the form $q_{AB}=q_A\times p_{B|A}$ where $q_A$ is an arbitrary input distribution on $\mathcal{A}$. We claim that for any $q_{AB}=q_A\times p_{B|A}$ we have
\begin{align}I(A; B)_q\leq 2\log(e)V_\infty(A;B).
\label{eqn-sem-sec}
\end{align}
Therefore, if $V_{\infty}(A; B)$ is small, semantic security is guaranteed. 
This establishes the connection between our measure of correlation and semantic security. 
Note that
\begin{align}
I(A;B)_q&=\sum_{a,b}q(a,b)\log\frac{p(b|a)}{q(b)}\nonumber
\\&=\sum_{a,b}q(a,b)\log\Big(1+\frac{p(b|a)-q(b)}{q(b)}\Big)\nonumber
\\&\leq \sum_{a,b}q(a,b)\log(e) \frac{|p(b|a)-q(b)|}{q(b)}\label{logeq}
\\&=\log(e) \sum_{a,b}q(a|b) \big|p(b|a)-q(b)\big|\nonumber
\\&\leq \log(e)\sum_{a,b}q(a|b)\max_{{a'}}\big|p(b| a')-q(b)\big|\nonumber
\\&= \log(e)\sum_{b}\max_{a'}\big|p(b|a')-q(b)\big|,\nonumber
\end{align}
where in \eqref{logeq} we used the inequality $\log(1+x)\leq \log(e)|x|$ for $x>-1$. 
Using the triangle inequality we continue 
\begin{align*}
I(A;B)_q & \leq   \log(e)\sum_{b}\max_{a'}\Big|p(b|a')-\sum_a q(a)p(b|a)\Big|\nonumber\\
& \leq  \log(e)\sum_{b}\max_{a'}  \sum_a q(a)\Big|p(b|a')-p(b|a)\Big|\nonumber\\
& \leq   \log(e)\sum_{b}\max_{a, a'}  \big|p(b|a')-p(b|a)\big|\nonumber\\
& \leq  \log(e)\sum_{b}\max_{a, a'}  \big|p(b|a')-p(b)\big| + \big|p(b|a)-p(b)\big|\nonumber\\
& = 2 \log(e) V_\infty(A; B).
\end{align*}


\bibliographystyle{IEEEtran}

\end{document}